\newcommand{\NP}{\textsf{NP}}
\newcommand{\PSPACE}{\textsf{PSPACE}}
\newcommand{\NEXP}{\textsf{NEXP}}
\lstdefinelanguage{pseudo}{morekeywords={init,with,or,if,then,else,fi,and,not,while,do,od,distinct,
    case, goto,local,algorithm, function, for, each, times, from, to,
    variables, procedure, recursive, return},
  morecomment=[l]{//}, morecomment=[s]{/*}{*/},
  mathescape=true,escapechar={@},
  basicstyle=\sffamily\small,
  commentstyle=\itshape\rmfamily\small,
  keywordstyle=\sffamily\bfseries\small
}
\definecolor{processblue}{cmyk}{0.96,0,0,0}
\newcommand{\dashact}[2][]{\ext@arrow 0359\rightarrowfill@@{#1}{#2}}
\def\rightarrowfill@@{\arrowfill@@\relax\relbar\dashrightarrow}
\def\arrowfill@@#1#2#3#4{%
	$\m@th\thickmuskip0mu\medmuskip\thickmuskip\thinmuskip\thickmuskip
	\relax#4#1
	\xleaders\hbox{$#4#2$}\hfill
	#3$%
}
\newcommand{\nn}{\mathbb{N}}
\newcommand{\zn}{\mathbb{Z}}
\newcommand{\qn}{\mathbb{Q}}
\newcommand{\qnz}{\mathbb{Q}_{+}}
\newcommand{\be}{\begin{enumerate}}
\newcommand{\ee}{\end{enumerate}}
\newcommand{\bc}{\begin{center}}
\newcommand{\ec}{\end{center}}
\newcommand{\bi}{\begin{itemize}}
\newcommand{\ei}{\end{itemize}}
\newcommand{\act}{\xrightarrow}
\newcommand{\gr}{\mathcal{G}}
\newcommand{\node}{\mathsf{n}}
\newcommand{\bA}{\mathbf{A}}
\newcommand{\bb}{\mathbf{b}}
\newcommand{\bI}{\mathbf{I}}
\newcommand{\bP}{\mathbf{P}}
\newcommand{\bB}{\mathbf{B}}
\newcommand{\restrict}{\mathord{\upharpoonright}}
\newcommand{\ext}[2]{{#1}\!\restrict_{#2}} 
\newcommand{\bzero}{\mathbf{0}}
\newcommand{\bunit}{\mathbf{e}}
\newcommand{\apply}[3]{\text{Apply}_{#1}(#2,#3)}
\newcommand{\mach}{\mathcal{M}}
\newcommand{\bx}{\mathbf{x}}
\newcommand{\bz}{\mathbf{z}}
\newcommand{\bu}{\mathbf{u}}
\newcommand{\bv}{\mathbf{v}}
\newcommand{\bw}{\mathbf{w}}
\newcommand\slice[2]{#1{\raise-.5ex\hbox{\ensuremath|}}_{#2}}
\mathchardef\mhyphen="2D
\newcommand{\supp}[1]{{\llbracket#1\rrbracket}}  
\def\cC{\mathcal{C}}
\newcommand{\eqxrightarrow}[2]{%
  \mathop{%
    \vtop{%
      \m@th 
      \offinterlineskip 
      \ialign{%
        \hfil##\hfil\cr
        \rightarrowfill\cr
        \hphantom{$\scriptstyle\mskip8mu{#2}\mskip8mu$}\cr
        \vrule height0pt width 1.5em\cr
        $\scriptscriptstyle {#1}$\cr
      }%
    }%
  }\limits^{#2}%
}
\title{Decidability and Complexity of Decision Problems for Affine Continuous VASS}
\begin{abstract}
	Vector addition system with states (VASS) is a popular model for the verification
	of concurrent systems. VASS consists of finitely many control states and a set of counters
	which can be incremented and decremented, but not tested for zero. 
	VASS is a relatively well-studied model of computation and many results regarding 
	the decidability of decision problems for VASS are well-known.	
	Given that the complexity of solving almost all problems for VASS is very high, various
	tractable over-approximations of the reachability relation of VASS have been proposed in the literature.
	One such tractable over-approximation is the so-called \emph{continuous VASS}, in which counters are allowed
	to have non-negative rational values and whenever an update is performed, the update is first scaled
	by an arbitrary non-zero fraction. 
	
	In this paper, we consider \emph{affine} continuous VASS, which extend continuous VASS
	by allowing integer affine operations. Affine continuous VASS serve as an over-approximation
	to the model of affine VASS, in the same way that continuous VASS over-approximates the reachability relation of VASS. 
	We investigate the tractability of affine continuous VASS with respect to 
	the reachability, coverability and state-reachability
	problems for different classes of affine operations and we 
	prove an almost-complete classification of the decidability of these problems.
	Namely, except for the coverability problem for a single family of classes of affine operations, 
	we completely determine the decidability
	status of these problems for all classes. 
	Furthermore, except for this single family, we also complement
	all of our decidability results with tight complexity-theoretic upper and lower bounds. 
\end{abstract}
\author{A. R. Balasubramanian}
\email{bayikudi@mpi-sws.org}
\affiliation{%
	\institution{Max Planck Institute for Software Systems}
	\city{Kaiserslautern}
	\country{Germany}
}
\keywords{Vector addition systems, Reachability, Coverability, Decidability, Complexity}
\begin{document}

\maketitle

\section{Introduction}

Vector addition system with states (VASS) are one of the most popular and most studied infinite-state systems and have many applications in 
verifying and modeling concurrent and business processes~\cite{jacm/GermanS92,acta/EsparzaGLM17,tacas/DelzannoRB02,jcsc/Aalst98}. VASS consist of a finite set of states along
with a set of counters with values ranging over $\nn$. Hence, a configuration of a VASS is a pair consisting of the current state of the VASS along with the current values of its counters.
Transitions of the VASS allow it to move from one state to another while incrementing or decrementing the values of the counters, provided that the new values of the counters do not drop below zero. 
Alternatively, this can be summarized as having update instructions of the form $\bx \leftarrow \bx + \bb$
which are fireable as long as $\bx + \bb \ge \bzero$. 
One of the most central questions regarding VASS
is the \emph{reachability} problem: Given two configurations $C$ and $C'$ of a VASS, can $C$ reach $C'$?
After decades of being open, the complexity of the problem was recently shown 
to be \textsf{ACKERMANN}-complete~\cite{leroux2022reachability,czerwinskiReachabilityVectorAddition2022}. 

As mentioned before, the model of VASS has many applications. However, since basic verification questions
such as reachability have extremely high complexity, much of the research has been concentrated on finding
\emph{over-approximations} of the reachability relation which trade expressiveness for tractability.
Note that since reachability is usually used to check the existence of a path to an unsafe configuration,
if we prove that the unsafe state is not reachable in an over-approximate model, 
then the same holds for the original model as well.
Many heuristics and over-approximations for the reachability relation of a VASS are known~\cite{ABCs} and they have been applied in practice quite successfully~\cite{ABCs,tocl/BlondinFHH17}. 
Here we concentrate on one such over-approximation, namely the \emph{continuous semantics}.

In the continuous semantics, transitions are allowed to be fired \emph{fractionally}. 
More precisely, in order to fire a transition $t$, we are allowed to first pick a non-zero fraction $\alpha \in (0,1]$ and then execute the updates of $t$ by the fraction $\alpha$, i.e., if $t$ originally decremented (resp. incremented) a counter by $c$, we now decrement (resp. increment) that counter by $\alpha \cdot c$, as long as the new value does not drop below zero.
As a result of this ``fractional update'', counters are now relaxed to have values over the non-negative rationals $\qnz$. Note that the continuous semantics is an over-approximation of the usual semantics.
It turns out that the continuous semantics is quite well-behaved in terms of complexity: Reachability in continuous VASS is \NP-complete and is hence much lower than the complexity for the usual semantics.
Besides being a tractable over-approximation, it is also useful in practice~\cite{tocl/BlondinFHH17,tacas/BlondinHO21,cav/BlondinMO22}
and is also useful to derive sound and complete algorithms for various problems in parameterized
verification~\cite{lmcs/BalasubramanianER23,fsttcs/Balasubramanian20}.

Coming back to VASS, as mentioned in~\cite{Affine-VASS}, although they are useful for modeling many systems, the features of VASS might 
be too restricted in certain cases. For instance, VASS cannot model counter systems which
are allowed to reset their counters to 0, or transfer or copy the contents of one counter to another. It turns out that such extensions are quite desirable to
model certain applications~\cite{toplas/KW14,tacas/DelzannoRB02,cav/SilvermanK19}. Many of these extensions can be modeled
as specific \emph{classes of affine VASS}. Affine VASS are VASS extended with affine operations, i.e.,
transitions can have updates of the form $\bx \leftarrow \bA \bx + \bb$ for some matrix $\bA$
and some vector $\bb$ over the integers and such transitions are fireable as long as $\bA \bx + \bb \ge 0$.
However, this huge expressiveness comes at a cost; almost all interesting verification problems
are either undecidable or have very high complexity for affine VASS~\cite{Affine-VASS}.

As a result of this undecidability/high complexity barrier for affine VASS, it would then be 
natural to wonder if the tractability of over-approximations developed for VASS can also
be transferred to affine VASS. In particular, a natural question might be to ask
\begin{quote}
	Does the tractability of the continuous semantics hold in the affine case as well?
\end{quote}

\subsubsection*{Our contribution. } In this paper, we consider the above question and introduce the model
of \emph{affine continuous VASS}. An affine continuous VASS is just like a VASS, except that
transitions are now allowed to be fired fractionally, i.e., before firing a transition $t$ (whose instruction is given by $\bx \leftarrow \bA \bx + \bb$ where $\bA$ and $\bb$ are a matrix and a vector over the integers respectively), 
we pick a non-zero fraction $\alpha$ and update
the counters by the rule $\bx \leftarrow \bA \bx + \alpha \bb$ provided that $\bA \bx + \alpha \bb \ge \bzero$. To put it in words, to get the new counter values,
we multiply the old counter values with the matrix $\bA$ and then add the vector $\alpha \bb$,
provided that the final result we get does not make the value of any counter drop below zero.

At this point, the reader might wonder why the semantics for the continuous case was not defined so that the update looks like $\bx \leftarrow \alpha \bA \bx + \alpha \bb$. First, defining the semantics
this way does not allow us to model various important features such as resets or transfers. 
Second, we would ideally like the semantics that we define for affine continuous VASS to
extend the semantics for continuous VASS, i.e., any meaningful extension of continuous VASS
to the affine setting should allow for a continuous VASS to be interpreted as is in the extended semantics by thinking of each transition as being equipped with an identity matrix. If we define the semantics 
for the affine case by the rule $\bx \leftarrow \alpha \bA \bx + \alpha \bb$, then 
this feature is lost. 
For these two reasons, we choose the semantics of an affine continuous VASS
so that the \textbf{matrix is not multiplied with the chosen fraction} $\alpha$.

Having defined this model, we study three of the most fundamental verification problems: Reachability (can a given configuration reach another given configuration?), Coverability 
(given two configurations $(p,\bu)$ and $(q,\bv)$ can $(p,\bu)$ reach
some $(q,\bv')$ with $\bv' \ge \bv$?) and State-reachability (given a configuration $(p,\bu)$ and a state $q$, can $(p,\bu)$ reach some configuration whose state is $q$?). 
We present an almost-complete classification of the decidability status of these problems
with respect to the various \emph{classes of affine operations} that are allowed, i.e., with
respect to the various classes of matrices that are allowed. We now briefly summarize our main
contributions.

\paragraph*{Reachability. } As our first contribution, we completely resolve the decidability status of reachability. We show that reachability for a class of affine continuous VASS is decidable if
the class contains only permutation matrices and is undecidable for \emph{any other class}. 

\paragraph*{State-Reachability. } As our second contribution, we completely resolve
the decidability status of state-reachability. We show that state-reachability 
for a class of affine continuous VASS is decidable (and in \PSPACE) if it contains only non-negative matrices and is undecidable for any other class. This proves, for instance, that state-reachability is in
\PSPACE \ for continuous VASS with resets, transfers, copies or doubling operations. To the best of our knowledge, the \PSPACE\ upper bounds for these models were not known before.

\paragraph*{Coverability. } As our third contribution, except for a single family of classes of 
of affine matrices, we resolve the decidability status of coverability.
We show that coverability is decidable for classes containing only permutation matrices
and is undecidable for any class containing a matrix with a negative entry or a matrix with a row or a column that is entirely comprised of zeroes. Since continuous VASS with resets or transfers
belong to the last category of matrices, it follows that \emph{the coverability problem
for continuous VASS with resets or transfers is undecidable}. 
This result is surprising, because the coverability
problem for (usual) VASS with resets or transfers is decidable~\cite{wsts}. Hence, while
the continuous semantics drastically reduces the complexity for VASS, for resets and 
transfers it actually strictly increases in complexity and power.

As we show in the next section, the only remaining classes of matrices for which we were not able to prove a decidability or undecidability
result for coverability are non-negative classes without any zero-rows or columns which contain at least one matrix possessing a weighted edge or two overlapping edges.
We will explain these notions formally in Section~\ref{sec:prelims}, but we describe them informally here.
We can think of a matrix as a weighted adjacency matrix of some graph. A matrix
is then said to have a weighted edge if there is a edge in its graph whose value is strictly 
larger than 1.
Also, a matrix is said to have
two overlapping edges if there are two edges in the graph whose target node is the same.
Even though we were not able to resolve the decidability status for this family,
we identify a non-trivial sub-family of matrices in this family for which coverability is decidable.
This sub-family contains any class that contains only non-negative self-loop matrices,
i.e., non-negative matrices such that each diagonal entry is non-zero. For any such class,
we show that coverability is decidable and \NP-complete, and hence
no harder than coverability for continuous VASS.

\paragraph*{Computational Complexity. } We also complement all of our upper bounds with matching lower bounds, except for a single case: State-reachability for non-negative classes without any zero-rows or columns containing weighted/overlapping edges.

More precisely, we show that reachability is \NEXP-complete for classes containing only permutation matrices with at least one non-trivial permutation matrix i.e., a permutation matrix that is not the identity matrix.
We also show that coverability (resp. state-reachability) is \NEXP-complete (resp. \PSPACE-complete) for classes containing only permutation matrices
with at least one non-trivial permutation matrix and is \NP-complete for non-negative classes with only self-loop matrices.
Finally, we show that state-reachability is \PSPACE-complete for non-negative classes containing a matrix which has either a row or a column that is completely zero.

All of these results are formally defined and stated in the next section, with a summary in Table~\ref{table:results}.

\subsubsection*{Related work. } The closest work to ours is the paper~\cite{Affine-VASS}. 
In this paper, the authors consider reachability in affine VASS and affine integer VASS according to classes of affine operations and they completely classify 
the decidability of reachability for both these models across all classes. The latter model
is an over-approximation of affine VASS in which 
the counters are allowed to go below zero. The authors also prove a complexity trichotomy 
for affine integer VASS. In particular, the authors prove that reachability for integer VASS with resets
or transfers is decidable and mention that it is undecidable for the continuous case.
In this paper, we also use the formalization of classes used in~\cite{Affine-VASS}.

The paper~\cite{lmcs/BlondinHMR21} also discusses affine integer VASS. The authors of that paper
prove that when the set of matrices that are considered form a finite monoid, then reachability becomes
decidable for affine integer VASS. They also prove some other results for infinite monoids.


\subsubsection*{Our techniques. } Our results use a number of novel techniques that are different
from the ones used for affine VASS and affine integer VASS. Below, we explain some of our techniques which illustrate why many of  the results developed for affine VASS and affine integer VASS
are not applicable to affine continuous VASS. This also exhibits many surprising phenomena in the continuous case.

\paragraph*{Continuous VASS with resets. } As a first point of departure from the usual and integer semantics, we show that coverability for continuous VASS with resets is undecidable. This is surprising, because
in both the usual and the integer semantics, this problem is decidable. This illustrates
the complex nature of the continuous semantics - while it reduces the complexity for VASS,
for VASS with resets it actually makes it harder. To prove this undecidability result,
we first show that the 1-bounded coverability problem for continuous VASS with zero-tests is undecidable. In this problem, we are given a continuous VASS which is allowed to perform zero-tests and two of its configurations. 
We are then asked to check if the first configuration can cover the second one by a run
in which all the counter values always stay below 1. We show that this problem is undecidable
by observing that runs in the continuous semantics can be shrunk by arbitrary fractions.

We then reduce 1-bounded coverability for continuous VASS with zero-tests to coverability in reset VASS.
To do this, for every counter $x$ of the given continuous VASS with zero-tests, we introduce a new
\emph{complementary counter} $\overline{x}$ and maintain the invariant that the sum of the values
of $x$ and $\overline{x}$ is 1 throughout. Then we replace every zero-test with a reset. The idea
is that if the new machine resets some counter $x$ when it has a non-zero value, then the invariant
that $x + \overline{x} = 1$ is permanently destroyed and it instead becomes $x + \overline{x} < 1$.
Since the sum of $x$ and $\overline{x}$ in the final configuration will be 1, it would
then follow that the new machine will reset a counter only when it achieves the value 0, thereby
mimicking zero-tests.

\paragraph*{Reachability for weighted/overlapping edges. } While we do not resolve the 
status of coverability for classes with non-negative weighted/overlapping edges, we prove that reachability
for such classes is undecidable. 
A similar result is true for affine VASS, which can be proved by showing that such VASS can implement the doubling operation~\cite{Affine-VASS}. The reduction in this case crucially hinges on being able to double the current value of a counter and also being able to add 1 to a counter. While 
the former can also be mimicked in the continuous case, the latter cannot be, as the additive updates
are scaled with a non-deterministically chosen fraction $\alpha \in (0,1]$. It might 
be tempting to overcome this by adding a fresh counter which always has the value 1 and
then by using matrix operations, add this fresh counter's value whenever we want to add 1; however,
this operation might not be allowed within the class. 
We overcome this problem
by giving a reduction from the 1-bounded reachability problem for continuous VASS with zero-tests.
Given a continuous VASS with zero-tests, to every counter $x$, we attach a complementary counter $\overline{x}$ and a collection of other dummy counters. Further, we maintain the invariant that the sum of the values of $x$, $\overline{x}$ and the dummy counters is 1.
We then identify a matrix from this class such that when it is multiplied with a counter valuation in which $x$ and the dummy counters have value 0, then the invariant remains; otherwise 
the invariant is permanently broken. This then allows us to simulate zero-tests.

\paragraph*{State-reachability for non-negative matrices. } In affine VASS, state-reachability is inter-reducible with coverability, which is decidable for classes which contain only non-negative matrices, i.e., matrices in which all entries are non-negative.
This is because affine VASS with non-negative matrices are well-structured~\cite{wsts,wqts}. 
However, since non-negative rationals are not well-quasi ordered (under the usual ordering), 
we cannot apply the theory
of well-structured transition systems in our setting of affine continuous VASS.
Nevertheless, we show that 
the state-reachability problem is decidable and in \PSPACE \ for affine continuous VASS
for classes that contain only non-negative matrices. We do this by constructing a \emph{support
abstraction} graph from a given affine continuous VASS. The nodes of this graph only maintain the current
control state and the current set of counters which have a non-zero value. Edges
between the nodes of this graph are defined in a manner which satisfy some basic conditions: For instance,
if a transition $t$ increments some counter $x$ and if $(p,S) \act{t} (p',S')$ is an edge
in this graph labelled by $t$ then $S'$ must contain $x$. This is because whenever $t$ is fired
in the affine continuous VASS it will necessarily lead to a configuration with a non-zero value in $x$.
We then show that maintaining some such basic conditions is necessary and sufficient to decide
state-reachability and that this can be done in \PSPACE. As a particular example, this 
proves that state-reachability for continuous VASS with resets is decidable and proves that,
unlike the case of affine VASS,
coverability and state-reachability are two different problems for affine continuous VASS.

\paragraph*{Coverability for self-loop matrices. } Finally, to prove that coverability for non-negative self-loop matrices (i.e., non-negative matrices in which no diagonal entry is zero) is in \NP, we use results developed for continuous VASS. By adapting some techniques developed for continuous VASS, we show
that the coverability relation of an affine continuous VASS containing only non-negative self-loop matrices
can be defined as a formula in the existential theory of rationals equipped with the addition and the order operation (ELRA). The techniques developed for continuous VASS cannot be applied in a black-box manner 
and require some careful restructuring of the arguments so that it can be used for the affine case, tailored specifically to the case of coverability. This is because for any class of matrices
containing a self-loop matrix that is not the identity matrix, the reachability relation
cannot be reduced to formulas in the ELRA theory, since our results for reachability imply undecidability in this case.
This is in contrast to the case of continuous VASS 
for which the reachability relation can be embedded in ELRA.

To prove the decidability of coverability in self-loop matrices, we first identify counters
which could be potentially \emph{pumped}, i.e., these are counters which while taking some transition,
either double their old value or retain their old value and also get the value of another counter by means of the matrix of that transition.
Since pumpable counters only depend on the matrix of a transition and not on its additive update,
we can choose arbitrarily small fractions and still get big values for pumpable counters.
Subject to some mild technical constraints, we show how to modify a given run that starts and ends at the same state and pumps some counters,
into a run so that all pumped counters attain arbitrarily high values whilst not modifying the
values of the non-pumped counters. This is done by roughly splitting a run into many copies of the same run
with smaller fractions.
In the end, this would mean that with respect to coverability, we can get rid of the pumpable counters
and only focus on the non-pumpable ones. We then notice that the affine VASS with only the non-pumpable
counters is a continuous VASS, for which we know how to decide coverability.

\subsubsection*{Organization of the paper. }
The rest of the paper is organized as follows. In Section~\ref{sec:prelims} we introduce 
the notions of affine continuous VASS and classes of matrices and state our main results.
In Section~\ref{sec:undec} we prove all of our undecidability results. 
In Section~\ref{sec:dec} we prove all of our decidability and upper bound results.
In Section~\ref{sec:lower-bounds} we prove all of our lower bound results.
Finally we conclude in Section~\ref{sec:conclusion}.

All missing proofs can be found in the appendix of this paper.

\section{Preliminaries}\label{sec:prelims}

In this section, we define the notion of affine continuous VASS and the associated problems that we shall consider. Then, we define the notion of a \emph{class of matrices}, akin to~\cite{Affine-VASS}
and state our main results. 

Throughout this paper we let $\qnz$ denote the set of non-negative rational numbers
and $\qn_{>0}$ denote the set of positive rational numbers. We usually write
vectors $(u_1,\dots,u_d) \in \qn^d$ in bold as $\bu$ and we use $\bu(i)$ to 
denote $u_i$. Given some $\alpha \in \qn$ and a vector $\bv$, we let $\alpha \bv$
be the vector given by $(\alpha \bv)(i) = \alpha \cdot \bv(i)$.
The support of a vector $\bv$, denoted by $\supp{\bv}$ is the set $\{i : \bv(i) \neq 0\}$.
We use $\bzero$ to denote the vector that is 0 everywhere and we let $\bunit_i$ denote the
$i^{th}$ unit vector, i.e.,
$\bunit_i$ is 0 everywhere, except in the $i^{th}$ co-ordinate, where its value is 1.
Given two vectors $\bu, \bv$ we say that $\bu \le \bv$ if $\bu(i) \le \bv(i)$ for every $i$.

Given a matrix $\bA$ we use $\bA(i,j)$ to denote its entry in the $i^{th}$ row
and the $j^{th}$ column. Further, given a matrix $\bA$ and a vector $\bu$, we let 
$\bA \bu$ denote the vector obtained by multiplying $\bA$ with $\bu$.
Throughout this paper, we will only work with
square matrices, i.e., matrices whose number of rows and columns are the same.
A matrix $\bA$ is called a \emph{diagonal matrix} if 
all of its non-diagonal entries are 0, i.e.,
$\bA_{i,j} = 0$ whenever $i \neq j$. 
\subsubsection*{Affine continuous VASS. } Let $d \in \nn$. A $d$-affine continuous VASS (or simply affine continuous VASS) is a tuple $\mach = (Q,T)$ where $Q$ is a finite set of \emph{control states} and 
$T \subseteq Q \times \zn^{d \times d} \times \zn^{d} \times Q$ is a finite set of \emph{transitions}.
Note that the matrices and the vectors appearing in the transitions range over the integers.
Intuitively, $\mach$ has access to $d$ counters, each of which can store a non-negative rational value. 
The values of these counters can be manipulated by the transitions in $T$ as follows.

Given a transition $t = (p,\bA,\bb,q)$, we will call $\bA$ as the matrix of $t$ and $\bb$
as the additive update of $t$.
A \emph{configuration} of $\mach$ is a pair $(p,\bu)$ (mostly written as $p(\bu)$), where $p \in Q$ is a state and $\bu \in \qnz^d$
is a non-negative vector representing the current values of the $d$ counters. If $C = p(\bu)$
 is a configuration, then we let $C(i) = \bu(i)$ for any counter $i$.
Given two configurations
$p(\bu), q(\bv)$, a non-zero fraction $\alpha \in (0,1]$, and a transition $t$ of the form $t = (p,\bA,\bb,q)$, we say that there is a step from $p(\bu)$ to $q(\bv)$ by means 
of the pair $\alpha t$, denoted by 
$p(\bu) \act{\alpha t} q(\bv)$ if $\bv = \bA \bu + \alpha \bb$.
Intuitively, from the configuration $p(\bu)$, the machine executes the transition $t$ and goes to the state $q$ and updates the counters by multiplying the current vector $\bu$ with the matrix $\bA$ and then
adding the vector $\alpha \bb$.  Note that by the definition of a configuration, 
we implicitly require that $\bA \bu + \alpha \bb \ge \bzero$.
We say that $p(\bu) \act{} q(\bv)$, if there exists some non-zero fraction $\alpha$ and some transition $t$ such that $p(\bu) \act{\alpha t} q(\bv)$.

A \emph{firing sequence} $\sigma$ is a sequence in $((0,1] \times T)^*$. 
Given a firing sequence $\sigma = \alpha_1 t_1, \dots, \alpha_k t_k$ and a non-zero fraction $\beta$, we let $\beta \sigma$ denote the firing sequence $\beta \sigma = \beta \alpha_1 t_1, \dots, \beta \alpha_k t_k$.

Given two configurations
$C$ and $C'$ and a firing sequence $\sigma = \alpha_1 t_1, \dots, \alpha_k t_k$, we say that
$C \act{\sigma} C'$ if there exists configurations $C_1, \dots, C_{k-1}$ such that
$C \act{\alpha_1 t_1} C_1 \act{\alpha_2 t_2} C_2 \dots C_{k-1} \act{\alpha_k t_k} C'$.
Finally, we write $C \act{*} C'$ to mean that there is some firing sequence $\sigma$ such that
$C \act{\sigma} C'$.

Note that if all the matrices appearing in an affine continuous VASS $\mach$ are the identity matrices,
then we get the usual notion of continuous VASS~\cite{blondinLogicsContinuousReachability2017}.

\subsubsection*{Classes of matrices. } Our aim in this paper is to provide a classification of 
``classes of affine continuous VASS'' with respect to decidability of the reachability, coverability and state-reachability problems. To this end, we need to formalize the notion of a ``class of affine continuous VASS''. 
For this purpose, we use the framework used in~\cite{Affine-VASS}, which was used
for classifying \emph{affine VASS} and \emph{affine integer VASS} with respect to the reachability problem.

A class of affine VASS extends the usual notion of a VASS by providing some extra operations that can be applied to the counters by means of the affine transformations that it provides,
i.e., the set of allowed matrices in its transitions. Since affine VASS extend VASS, the identity matrix
must always be allowed. Also, the authors of~\cite{Affine-VASS} noticed that the set of allowed matrices must be closed under multiplication. Finally, a class of affine VASS typically does not pose any restrictions on the number of counters that can be used, or on the subset of counters to which the operations can be applied. This means that a matrix can be extended to arbitrary dimensions and can be applied to 
any subset of counters.  These observations are utilized in the definition of a class used in~\cite{Affine-VASS}. We now describe this formal defintion of a class and use it to classify affine continuous VASS as well.

For every $k \ge 1$, let $\bI_k$ be the $k \times k$ identity matrix. We use $\mathcal{S}_k$ to denote
the set of all permutations over $\{1, 2, \ldots, k\}$. For any permutation $\sigma \in \mathcal{S}_k$, let $\bP_{\sigma} \in \{0,1\}^{k \times k}$ be its corresponding matrix, i.e., for any $1 \le i, j \le k$, $\bP_{\sigma}(i,j)$ is 1 if $i$ is mapped to $j$ and 0 otherwise.
For any matrix $\bA \in \zn^{k \times k}$ and any permutation $\sigma \in \mathcal{S}_k$ we let 
$$\sigma(\bA) := \bP_\sigma \cdot \bA \cdot \bP_{\sigma^{-1}}$$
and for any $n \ge 1$, we let
$\ext{\bA}{n}$ be the $(k+n) \times (k+n)$ matrix defined as
$$\ext{\bA}{n} := \begin{pmatrix}
	\bA & \bzero_{\phantom{n}} \\
	\bzero & \bI_n
\end{pmatrix}$$

Intuitively, if $\bA$ is a $k \times k$ matrix that dictates the updates of some $k$ counters
of an affine continuous VASS, then $\sigma(\bA)$ essentially renames the $k$ counters according to the 
permutation $\sigma$, applies $\bA$ according to this renaming and then renames the counters back
to the original names. The operation $\ext{\bA}{n}$ can be thought of as adding $n$ fresh counters
and then applying the matrix $\bA$ only on the old $k$ counters, while leaving the new $n$ counters
unchanged.

Now, a \emph{class of matrices} $\cC$ is any subset of matrices in $\cup_{k \ge 1} \zn^{k \times k}$ that
satisfies the following four closure properties:
\begin{itemize}
	\item Identity: For every $n$, $\bI_n \in \cC$.
	\item Multiplication: If $\bA, \bB \in \cC$ are $k \times k$ matrices for some $k \ge 1$, then the matrix $\bA \cdot \bB$ is also in $\cC$.
	\item Extension: If $\bA \in \cC$, then for every $n$, the matrix $\ext{\bA}{n}$ is also in $\cC$.
	\item Counter renaming: If $\bA \in \cC$ is a $k \times k$ matrix, then $\sigma(\bA)$ is also in $\cC$ for any permutation $\sigma \in \mathcal{S}_k$.
\end{itemize}

We also define a new operation called Application: If $\bA$ is a $k \times k$ matrix and $i,n$ are numbers
such that $k \le n$ and $i+k-1 \le n$, then $\apply{n}{\bA}{i}$ is the $n \times n$ matrix defined as
$$\apply{n}{\bA}{i} := \begin{pmatrix}
	\bI_{i-1} & \bzero_{\phantom{n}} & \bzero \\
	\bzero & \bA & \bzero \\
	\bzero & \bzero & \bI_{n-i-k+1}
\end{pmatrix}$$

Intuitively, $\apply{n}{\bA}{i}$ applies the matrix $A$ to dimensions $i, i+1, \dots, i+k-1$ and leaves the other dimensions unchanged.
Note that $\apply{n}{\bA}{i}$ can be obtained by extending $\bA$ and renaming some counters of its extension: Indeed, $\apply{n}{\bA}{i} = \sigma(\ext{\bA}{n})$ where $\sigma$ is the permutation that maps
every $j$ to $i-1+j$ if $i-1+j \le n$ and otherwise to $(i-1+j) \bmod n$. Hence, Application is 
simply a syntactic sugar and all classes are closed under the Application operation. As we shall
see later on, the introduction of the Application operation will help us state us our constructions in an easier way.

We now show that the notion of a class of matrices can capture various sub-models of affine continuous VASS, each of which arise from continuous VASS by adding some new feature.

\begin{example}[Classes of matrices]
	As mentioned before, a continuous VASS is an affine continuous VASS in which all the matrices
	are identity matrices. This is captured by the \emph{identity class} of matrices, i.e.,
	the class that contains only the set of all identity matrices.
	For a $d$-continuous VASS $\mach = (Q,T)$ we will simply ignore the (identity) matrices in the transitions
	and denote $T$ by $T \subseteq Q \times \zn^d \times Q$.
	
	A continuous VASS with reset operations is a continuous VASS in which transitions are further 
	allowed to reset the value of a counter to 0. This model is captured by affine continuous VASS with
	matrices from the class $\cC_R$ which contains the set of all diagonal matrices with entries
	from $\{0,1\}$. Intuitively, if $\bA \in \cC_R$, and its $i^{th}$ diagonal entry is a 0, then this means that the $i^{th}$ counter
	is reset to 0 and if it is a 1, then the $i^{th}$ counter is left unchanged. 
	
	Another example is continuous VASS with transfer operations which extend continuous VASS
	with operations which allow the value of a counter $i$ to be transferred to a counter $j$. 
	This model is captured by affine
	continuous VASS with matrices from the class of all matrices 
	with entries from $\{0,1\}$ in which every column has at most one non-zero entry. 
\end{example}

\subsection{Contribution Part I - Decidability results}\label{subsec:contribution-part-I} 
We now state the main contributions of the paper.
Given a class $\cC$ of matrices, we say that an affine continuous VASS $\mach$ is a $\cC$-continuous VASS if all the matrices appearing in $\mach$ are in $\cC$. We are now
ready to define the problems of interest to us and state our main results.

The \emph{reachability problem} for a class $\cC$ of matrices is defined as the following:

\begin{quote}
	\noindent \textbf{Given: } A $\cC$-continuous VASS $\mach$ and two of its configurations $p(\bu), q(\bv)$.\\
	\noindent \textbf{Decide: } Whether $p(\bu)$ can reach $q(\bv)$.	
\end{quote}

Our first main result is a classification of the reachability problem, which completely
characterizes the decidability/undecidability border. To state this result, we set up a notation:
We say that a matrix $\bA$ is a \emph{permutation matrix} if $\bA = \bP_\sigma$ for some permutation $\sigma$. Our result for the reachability problem can then be phrased as the following.

\begin{theorem}\label{thm:main-reach}
	The reachability problem for a class $\cC$ is decidable (and in \NEXP) if $\cC$ contains
	only permutation matrices. Otherwise, the reachability problem is undecidable.
\end{theorem}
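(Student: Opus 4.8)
The plan is to prove this in two parts: (1) the decidability and \NEXP\ upper bound for classes of permutation matrices, and (2) undecidability for every other class.

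For part (1), suppose $\cC$ contains only permutation matrices. The key observation is that if $t = (p, \bP_\sigma, \bb, q)$ is a transition applied to a configuration $p(\bu)$ with fraction $\alpha$, then the resulting configuration is $q(\bP_\sigma \bu + \alpha \bb)$, i.e., the counters are merely permuted and then a scaled additive update is applied. Since a permutation preserves non-negativity automatically, the only constraint is $\bP_\sigma \bu + \alpha \bb \ge \bzero$. The idea is to \emph{eliminate the permutations} by tracking, along a run, the cumulative permutation applied so far. Formally, I would build a new continuous VASS (with only identity matrices) whose states are pairs $(p, \tau)$ where $\tau$ is a permutation in the subgroup generated by the permutations occurring in $\mach$; a transition $t = (p, \bP_\sigma, \bb, q)$ becomes a family of transitions from $(p,\tau)$ to $(q, \sigma \circ \tau)$ whose additive update is the appropriately permuted version of $\bb$ (so that the counters of the simulating VASS always represent the counters of $\mach$ under a fixed renaming, say the identity). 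One then reduces reachability in $\mach$ from $p(\bu)$ to $q(\bv)$ to reachability in this new continuous VASS between the correspondingly renamed configurations, and since reachability for continuous VASS is in \NP\ \cite{blondinLogicsContinuousReachability2017}, the only cost is the size blow-up: the number of distinct permutations reachable can be exponential in $d$ (a permutation of $\{1,\dots,d\}$ is encoded in polynomially many bits but the group it generates can be exponentially large), which yields the \NEXP\ bound. The main subtlety here is bounding the size of the generated permutation group and verifying the blow-up is at most exponential — this should follow because each state $(p,\tau)$ is described by polynomially many bits, so the new VASS has exponentially many states, and \NP\ reachability on an exponential object lands in \NEXP.

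For part (2), suppose $\cC$ contains some matrix $\bA$ that is not a permutation matrix. I would case-split on the structure of $\bA$ and reduce a known undecidable problem — following the paper's stated technique, the \emph{1-bounded reachability problem for continuous VASS with zero-tests}, which the paper establishes is undecidable by the ``shrinking runs by fractions'' argument. The reduction idea sketched in the introduction is: given a continuous VASS with zero-tests, for each counter $x$ introduce a complementary counter $\overline{x}$ plus some dummy counters, and maintain the invariant that $x + \overline{x} + (\text{dummies}) = 1$. A non-permutation matrix $\bA$, possibly after extension, renaming, and multiplication (using the closure properties of $\cC$), must exhibit one of a few ``bad'' local patterns — a zero row/column, an entry $\ge 2$, two overlapping edges, etc. — and from such a pattern I would extract a concrete matrix in $\cC$ that, when applied to a valuation where $x$ and the dummies are $0$, preserves the invariant, but when applied to a valuation where $x$ is nonzero, permanently breaks it (turning the equality into a strict inequality that can never be repaired, since the final configuration is required to again satisfy the equality). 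This simulates zero-tests, completing the reduction. The main obstacle is the \emph{case analysis}: showing that \emph{every} non-permutation matrix, after taking closure operations, yields \emph{some} usable gadget. The cleanest route is probably a normalization lemma: any class not consisting solely of permutation matrices contains a matrix with either a zero column, a zero row, a diagonal entry $\ge 2$, or two entries in the same column (equivalently, the matrix's graph is not a permutation of the identity), and then handle each of these four (or fewer) canonical cases with a tailored complementary-counter construction. I expect the zero-row/zero-column case (resets/transfers-like behavior) to be the most delicate, since there the matrix destroys information rather than amplifying it, and one must arrange the dummy counters so that the lost mass is detectable.
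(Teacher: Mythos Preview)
Your Part (1) is correct and is exactly the paper's approach: encode the cumulative permutation in the state, obtaining an (at most) exponentially larger ordinary continuous VASS, and invoke the \NP\ bound for continuous reachability.

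Your Part (2) has the right architecture --- a case split on the shape of a non-permutation matrix, together with complementary/dummy counters and an invariant that a ``wrong'' application of the matrix permanently destroys --- but the normalization lemma you propose is false as stated. The matrix $\bA = \begin{pmatrix} -1 & 0 \\ 0 & 1 \end{pmatrix}$ is not a permutation matrix, yet it has no zero row or column, no diagonal entry $\ge 2$, and each column has exactly one nonzero entry. More generally, your list only covers \emph{non-negative} non-permutation matrices; you have completely omitted the case where $\bA$ has a negative entry. This case does not fit your invariant-breaking template: applying such an $\bA$ to a configuration where the relevant counter is positive does not ``turn the equality into a strict inequality that can never be repaired'' --- it makes the step outright infeasible (the result would be negative). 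The paper exploits precisely this, and the reduction for negative entries is in fact the \emph{simplest} of the three: one simulates a zero-test on counter $j$ directly by a transition with matrix $\bA$ (suitably extended), with no complementary counter and no invariant needed. So you need a fourth case, and it uses a different mechanism than the one you describe.

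Two smaller points. First, ``a diagonal entry $\ge 2$'' should be ``some entry $\ge 2$'' (the paper's \emph{weighted edge}); otherwise $\begin{pmatrix} 0 & 2 \\ 1 & 0 \end{pmatrix}$ escapes your cases. In the end both the weighted-edge and overlapping-edge cases are handled uniformly via the single condition ``some column sums to $>1$'', which is exactly what makes the sum-invariant strictly increase. Second, your intuition that the zero-row/zero-column case is the most delicate is slightly off: the paper routes it through an intermediate model (continuous VASS with resets), proving undecidability of coverability there via your invariant idea, and then showing that any zero-row/zero-column matrix can simulate a reset. The direct invariant-breaking argument for reachability is reserved for the weighted/overlapping case, where the matrix \emph{amplifies} mass rather than destroying it --- note in particular that for reachability (unlike coverability) either direction of breaking the invariant (sum $<1$ or sum $>1$) works, which is why the paper's reachability result covers weighted/overlapping edges while its coverability result does not.
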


For our second result, we consider the \emph{state-reachability problem} for a class $\cC$ which is defined
as follows:

\begin{quote}
	\noindent \textbf{Given: } A $\cC$-continuous VASS $\mach$, a configuration $p(\bu)$ and a state $q$.\\
	\noindent \textbf{Decide: } Whether $p(\bu)$ can reach
	some configuration of the form $q(\bv')$ for any non-negative $\bv'$. 	
\end{quote}

Our second main result completely classifies the decidability/undecidability border for the 
state-reachability problem. To state this result, we set up a notation:
We say that a matrix $\bA$ is a \emph{non-negative matrix} if all of its entries are non-negative.
Our result for the state-reachability problem can then be phrased as the following.

\begin{theorem}\label{thm:main-state}
	The state-reachability problem for a class $\cC$ is decidable (and in \PSPACE) if $\cC$ contains
	only non-negative matrices. Otherwise, the state-reachability problem is undecidable.
\end{theorem}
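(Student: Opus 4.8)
The plan is to split the statement into two parts: the decidability/\PSPACE\ upper bound for classes of non-negative matrices, and the undecidability for every other class. For the undecidability direction, note that any class $\cC$ that is not contained in the non-negative matrices must contain some matrix $\bA$ with a negative entry (we never have to worry about zero-rows or zero-columns here, unlike in the coverability case). I would first reduce to a canonical witness: using the closure properties (Extension, Counter renaming, Application, Multiplication), from a single matrix with a negative entry one should be able to derive a small fixed ``bad'' matrix operating on a bounded number of counters — plausibly a matrix that subtracts one counter from another, or negates a counter. Once such a primitive is available, the strategy is to encode a two-counter Minsky machine (or, following the paper's own recurring trick, the $1$-bounded reachability problem for continuous VASS with zero-tests, which the excerpt tells us is undecidable). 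The negative entry lets us set up a complementary-counter invariant of the form $x + \overline{x} = 1$ and break it irrecoverably when a zero-test is cheated, exactly as sketched in the ``Our techniques'' section; since state-reachability only asks to hit a target state, we route all ``invariant broken'' situations into a sink and make the target state reachable iff the simulated machine halts. The main subtlety is that additive updates are scaled by an arbitrary $\alpha \in (0,1]$, so the encoding must be robust to rescaling — this is precisely why going through the $1$-bounded zero-test problem (which is itself closed under shrinking runs) is the right move rather than a naive direct Minsky simulation.

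For the positive direction — state-reachability for a $\cC$-continuous VASS where every matrix is non-negative is in \PSPACE\ — I would build the \emph{support abstraction graph} $\mathcal{G}$ described in the overview. Its nodes are pairs $(p, S)$ with $p \in Q$ and $S \subseteq \{1,\dots,d\}$ recording which counters are currently nonzero; there are at most $|Q| \cdot 2^d$ of them, so the graph has exponential size but its nodes have polynomial description length. For each transition $t = (p,\bA,\bb,q)$ I would define which edges $(p,S) \act{t} (q,S')$ are \emph{admissible}, using only the sign information in $\bA$ and $\bb$: the new support $S'$ is forced to contain every coordinate $i$ such that either $\bb(i) > 0$ (a genuine positive additive push, which for $\alpha > 0$ lands strictly above zero — here non-negativity of $\bA$ is crucial because it guarantees $\bA\bu \ge \bzero$, so a positive additive term cannot be cancelled), or $\bA(i,j) > 0$ for some $j \in S$ (a positive counter is transferred/copied into $i$); conversely coordinates not reachable this way may or may not be in $S'$ depending on the chosen $\alpha$ and exact values, but for a reachability question the adversary can always make them nonzero if helpful and zero otherwise — so I would allow $S'$ to be any set sandwiched appropriately, or take the canonical maximal choice. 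The key lemma is soundness and completeness: $p(\bu)$ can reach some configuration with state $q$ in the affine continuous VASS if and only if there is a path in $\mathcal{G}$ from $(p,\supp{\bu})$ to some $(q,S)$. Soundness is immediate by tracking supports along a real run. Completeness is the heart of it: given an abstract path, I must exhibit concrete nonnegative rational counter values and fractions realizing it; here I would lean on the continuous/scalability features — choosing the $\alpha$'s small enough and exploiting that along a non-negative matrix the counter values stay controllable — to fill in witnesses coordinate by coordinate, much as one does for continuous VASS reachability.

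Given the lemma, the algorithm is simply a reachability check in $\mathcal{G}$, which a nondeterministic machine does by guessing the path node-by-node using only polynomial space (store the current node $(p,S)$, guess the next transition, verify admissibility of the edge in polynomial time, repeat); by Savitch / $\NPSPACE = \PSPACE$ this gives the \PSPACE\ bound. The main obstacle I expect is the completeness half of the support-graph lemma: one has to be careful that committing to a support set $S'$ at one step does not over-constrain a later step — e.g.\ a counter we declared zero must genuinely be forceable to zero, and a counter we need nonzero later must have been kept nonzero earlier without side effects. Handling this likely requires a simultaneous-induction argument that constructs the concrete run together with a schedule of fractions, and a careful argument that ``a positive value can always be made as small as desired without becoming zero'' so that support constraints propagate consistently. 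The undecidability direction's obstacle is more bookkeeping than conceptual: extracting the clean subtract-primitive from an arbitrary matrix with one negative entry via the closure operations, and checking the complementary-counter simulation is faithful under arbitrary rescaling of additive updates.
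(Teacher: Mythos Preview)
Your decidability direction is essentially the paper's proof: build the support abstraction on $Q \times 2^{\{1,\dots,d\}}$, prove soundness and completeness of the abstraction, and do on-the-fly graph reachability in \PSPACE\ via Savitch. The paper's edge relation is slightly more precise than your sketch (it imposes three explicit conditions on $S$ and $S'$, in particular a fireability condition ``if $x$ is additively decremented by $t$ then $\supp{t}^-_x \cap S \neq \emptyset$''), but your outline captures the idea and the anticipated obstacle in completeness is exactly right.

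Your undecidability direction, however, mixes up two distinct techniques from the paper and misses the much simpler argument that actually works here. The complementary-counter invariant $x+\overline{x}=1$ is the tool the paper uses for \emph{resets} and \emph{zero-row/zero-column} matrices, where a ``cheating'' step (resetting a nonzero counter) is \emph{still fireable} and the invariant is needed to record the cheat permanently. With a negative entry no such bookkeeping is needed: if $\bA(i,j)<0$ and all counters except $j$ among those touched by $\bA$ are zero, then $\bA\cdot(\lambda\bunit_j)$ has $i$-th coordinate $\lambda\,\bA(i,j)<0$ whenever $\lambda>0$, so the step is simply \emph{not fireable}. In other words, applying $\bA$ already \emph{is} a zero-test on counter $j$. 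The paper therefore reduces directly from state-reachability in continuous VASS with zero-tests (no $1$-boundedness, no complementary counters): pad each counter of the zero-test machine with $k-1$ dummy counters kept at $0$, simulate ordinary updates with identity matrices, and replace each zero-test on counter $\ell$ by the transition whose matrix is $\apply{n}{\bA}{(\ell-1)k+1}$ and whose additive update is $\bzero$.

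Your ``extract a canonical small bad matrix'' step is also problematic on its own: the closure properties (Identity, Multiplication, Extension, Counter renaming) let you \emph{enlarge} and permute a matrix but give you no projection or restriction, so there is no guarantee you can manufacture a clean $2\times 2$ subtraction or negation primitive from an arbitrary $k\times k$ matrix with one negative entry. The paper sidesteps this entirely by using $\bA$ as-is via the Application operation with dummy counters, which is both simpler and avoids the extraction issue.
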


Finally, we consider the \emph{coverability problem} for a class $\cC$ which is defined
as follows:

\begin{quote}
	\noindent \textbf{Given: } A $\cC$-continuous VASS $\mach$, and two of its configurations $p(\bu), q(\bv)$.\\
	\noindent \textbf{Decide: } Whether $p(\bu)$ can \emph{cover} $q(\bv)$, i.e., whether $p(\bu)$ can reach
	some configuration of the form $q(\bv')$ where $\bv' \ge \bv$.	
\end{quote}

Our third main result classifies the decidability/undecidability border for the 
coverability problem, except for a single family of classes. 
To state our result for this problem we need a notation.
A matrix is said to have a \emph{zero-row} (resp. \emph{zero-column}) if it contains
a row that is completely zero (resp. a column that is completely zero).
Our result for the coverability problem is then as follows.

\begin{theorem}\label{thm:main-cov}
	The coverability problem for a class $\cC$ is undecidable if $\cC$ contains
	a matrix with either a negative entry or a zero row or a zero column.
	It is decidable (and in \NEXP) if $\cC$ contains only permutation matrices.
\end{theorem}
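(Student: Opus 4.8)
First I would note that for \emph{every} class $\cC$, coverability reduces in polynomial time to reachability for $\cC$-continuous VASS. Given an instance $(\mach, p(\bu), q(\bv))$ of coverability with $\mach = (Q,T)$ of dimension $d$, build $\mach'$ by adding a fresh state $q_f$, the transition $(q, \bI_d, \bzero, q_f)$, and for every counter $i$ a ``leak'' self-loop $(q_f, \bI_d, -\bunit_i, q_f)$. All added matrices are identity matrices, which belong to $\cC$, so $\mach'$ is still a $\cC$-continuous VASS. A leak step fired with fraction $\alpha$ decreases one counter by an arbitrary value in $(0,1]$ and may be iterated, so from $q_f(\bv')$ one reaches $q_f(\bv)$ exactly when $\bv' \ge \bv$; conversely the only transition into $q_f$ leaves the counters unchanged and the only transitions at $q_f$ are leaks, so a run of $\mach'$ reaching $q_f(\bv)$ first enters $q_f$ at some $q_f(\bw)$ with $\bw \ge \bv$ and $p(\bu) \act{*} q(\bw)$ in $\mach$. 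Hence $p(\bu)$ covers $q(\bv)$ in $\mach$ iff $p(\bu)$ reaches $q_f(\bv)$ in $\mach'$. When $\cC$ contains only permutation matrices, Theorem~\ref{thm:main-reach} gives reachability in \NEXP, and since the reduction is polynomial, coverability is decidable and in \NEXP as well.

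\textbf{The undecidable case: set-up.} For the undecidability claims I would reduce from the \emph{$1$-bounded coverability problem for continuous VASS with zero tests}: given such a machine $\net$, two configurations, and the restriction to runs along which every counter stays in $[0,1]$, decide whether the first configuration covers the second. This problem is undecidable — the unrestricted coverability problem for continuous VASS with zero tests is undecidable, and any covering run can be uniformly scaled by a sufficiently small fraction $\beta$, an operation that preserves zero tests (since $\beta \cdot 0 = 0$) and the order $\ge$, so a positive instance always has a run that becomes $1$-bounded after scaling. I would then simulate $\net$ by a $\cC$-continuous VASS $\mach$: each counter $c$ of $\net$ is represented by $c$ together with a complementary counter $\bar c$ and, in the general cases, a bounded number of dummy counters, maintaining the invariant that within each such group the values sum to $1$ (with the dummies equal to $0$). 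Additive updates of $\net$ are executed directly with identity matrices, simultaneously adjusting $\bar c$ to keep the group sum; a zero test on $c$ is simulated by a transition with additive update $\bzero$ whose matrix is built from the offending matrix of $\cC$ via the Identity/Multiplication/Extension/Application closures, designed so that it (i) preserves the group invariant when $c = 0$ and the dummies are $0$, and (ii) strictly and irrevocably decreases the group sum when $c \ne 0$. Group sums can only decrease, so the coverability target ``every group sums to at least $1$'' holds at the end iff every simulated zero test was honest, i.e.\ iff $\net$ has a $1$-bounded covering run.

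\textbf{The main obstacle.} The technical heart is realising the zero-test transition for each of the three shapes of offending matrix using only the closure operations applied to a \emph{single} matrix. The prototype is $\cC_R$ (continuous VASS with resets), whose matrices have both a zero row and a zero column: the zero-test transition is just ``reset $c$'', applicable to $c$ alone via Extension/Application, and no dummies are needed. For a general matrix $\bA$ with a zero column $j_0$, placing $c$ at position $j_0$ makes the old value of $c$ vanish when $\bA$ is applied, so the group sum drops whenever $c > 0$; the dummy counters are there to absorb the (possibly messy, possibly sign-changing) action of $\bA$ on the remaining coordinates and to recover a canonical configuration when $c = 0$, and one must check this damage cannot be repaired by later transitions. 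The zero-row case is dual in flavour — the new value of the chosen coordinate is forced to $\alpha \bb_{i_0}$, hence to $0$, while the old value of $c$ bleeds into other coordinates — and the negative-entry case instead exploits the firing condition $\bA\bx + \alpha\bb \ge \bzero$: a negative coefficient turns it into a genuine linear comparison between counters from which a zero test is recovered. Making all of these gadgets coexist with the sum invariant and with the scaling of additive updates — in particular guaranteeing that a dishonest zero test is permanent — is the crux, and is exactly why these results do not follow from the constructions for affine VASS or affine integer VASS, where additive updates are unscaled and resets/zero tests are directly available.
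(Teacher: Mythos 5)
Your decidable case is exactly the paper's argument (the leak-loop reduction of Theorem~\ref{thm:reach-cov-state} followed by the \NEXP{} bound for reachability with permutation matrices), and your overall plan for undecidability --- reduce from $1$-bounded coverability for continuous VASS with zero tests via complementary counters and a matrix gadget for the zero test --- is also the paper's route for the zero-row/zero-column case. Two steps, however, do not hold as you state them. First, your justification that $1$-bounded coverability is undecidable is incomplete: scaling a covering run by $\beta$ yields a $1$-bounded run from $\beta C$ covering $\beta C'$, which is a \emph{different instance}, so undecidability does not follow from the scaling property alone. The paper (Lemma~\ref{lem:undec-cov-zero-tests-0and1}) needs two further gadgets: auxiliary counters $z,\tilde z$ that cap every firing fraction by the current value of $z$, and an initial/final gadget with a counter $st$ that nondeterministically guesses $\beta$, loads $\beta C$, and verifies $\beta C'$ at the end. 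Relatedly, for the negative-entry case the paper avoids all of this: a negative entry makes the step with update $\bA\bx+\alpha\bb\ge\bzero$ simply unfireable unless the tested counter is $0$ (Equation~\ref{eq:one}), so one reduces directly from \emph{unrestricted} state-reachability for continuous VASS with zero tests --- no $1$-bounding and no complementary counters --- and undecidability of coverability then follows from Theorem~\ref{thm:reach-cov-state}.

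Second, the claim that ``group sums can only decrease'' fails for zero-row matrices. Take $\bA=\left(\begin{smallmatrix}0&0\\2&1\end{smallmatrix}\right)$: row $1$ is zero, yet applying $\bA$ to $(c,d)$ gives $(0,2c+d)$, so the sum over the block \emph{increases} whenever $c>0$, and a dishonest zero test could be masked by inflated dummies if the target only constrains the group sum. The argument must instead be run on the pair $c,\bar c$ alone, with $\bar c$ kept \emph{outside} the block to which $\bA$ is applied: then $c+\bar c$ is preserved by additive steps, is never increased by an application of $\bA$ (which forces $c$ to $0$ because its row is zero, and does not touch $\bar c$), and strictly drops on a dishonest test; the dummies are unconstrained garbage that can never feed back into any primary counter. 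This is precisely how the paper defines $S(\bu)$ in the zero-row case. With these two repairs your proof coincides with the paper's.
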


To see how this result classifies the coverability problem except for a single family,
we need some definitions. 
A matrix $\bA$ is said to have a weighted edge if there exists
indices $i,j$ such that $\bA(i,j) > 1$. Further, $\bA$ is said to 
have two overlapping edges if there exists indices $i \neq j$ and $k$ such that
$\bA(i,k) > 0$ and $\bA(j,k) > 0$.
The intuitive idea is to think of $\bA$ as the weighted adjacency matrix of a graph.
Having a weighted edge in $\bA$ corresponds to having an edge in the graph whose value
is strictly bigger than 1. Having two overlapping edges in $\bA$ corresponds
to having two edges in the graph whose target nodes are the same.

Now, suppose $\bA$ is a non-negative matrix (over $\zn$) 
which does not contain a zero-row, a zero-column, a weighted edge or two overlapping edges.
Hence, for every index $\ell$, there is a unique index $\sigma(\ell)$ such that
$\bA(\sigma(\ell),\ell) = 1$ and $\bA(\ell',\ell) = 0$ for every $\ell' \neq \sigma(\ell)$.
We claim that $\sigma(i) \neq \sigma(j)$ whenever $i \neq j$. 
Suppose there exist $i \neq j$ such that $\sigma(i) = \sigma(j) = k$.
Then, since $\bA$ does not contain a zero-row, for every index $\ell$, there is some index $\tau(\ell)$ such that $\bA(\ell,\tau(\ell)) > 0$. Note that if $\ell' = \tau(\ell)$, then
$\sigma(\ell') = \ell$.

Now, starting from $i$, using $\tau$, we get 
a sequence of the form $\tau^0(i) := i, \ \tau^1(i) := \tau(i),\  \tau^2(i) := \tau(\tau(i)) \ldots$ Hence, there must be some $r$ such that $\tau^0(i),\cdots,\tau^{r}(i)$ are all distinct
and $\tau^{r+1}(i) = \tau^e(i)$ for some $0 \le e \le r$. 
If $e \neq 0$, then  $\bA(\tau^{e-1}(i),\tau^e(i)) > 0$ and $\bA(\tau^r(i),\tau^e(i)) > 0$, which is a contradiction because this leads to two overlapping edges.
Hence, $e = 0$ and so $\tau^r(i) = \sigma(i) = k$. Similarly, when this argument is applied to $j$, we get a similar sequence of indices $j, \tau(j), \cdots, \tau^w(j) = k$ for some $w$.

Now, by definition of $\sigma$ and $\tau$ we have that, $\sigma^{r}(k) = i$ and $\sigma^{w}(k) = j$.
Since $i \neq j$, we have $w \neq r$. Without loss of generality, let $w < r$. This means that $\sigma^{r-w}(j) = i$ and so $\tau^{r-w}(i) = j$. Since $\tau^r(i) = k$ and since $\bA(k,j) > 0$, we have $\bA(\tau^{r-w-1}(i),j) > 0$ and $\bA(\tau^r(i),j) > 0$,
which is a contradiction because this leads to two overlapping edges.
This means that $\sigma(i) \neq \sigma(j)$ whenever $i \neq j$. It can then be easily seen
that $\bA$ is a permutation matrix.

The discussion above then shows that our result for coverability covers all classes
except those that contain matrices with weighted/overlapping edges,
which we will call the weighted/overlapping family.
However, even within this family, we identify a subfamily for which the coverability problem is decidable.
We now describe this class of matrices.

A matrix $\bA$ is said to be a \emph{self-loop matrix} if no entry on the diagonal is zero. The intuition once again comes from graphs: If $\bA$ is a self-loop matrix, then 
in the graph of $\bA$, all nodes will have a self-loop.
We show that
\begin{theorem}\label{thm:cov-self-loop}
	The coverability problem for $\cC$ is decidable (and in \NP) if $\cC$ contains only
	non-negative self-loop matrices.
\end{theorem}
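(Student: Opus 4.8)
The plan is to reduce coverability for $\cC$-continuous VASS with non-negative self-loop matrices to coverability for (plain) continuous VASS, which is known to be in \NP\ via its embedding into the existential theory of the rationals with order (ELRA). The key structural observation is that a self-loop matrix $\bA$ acts on a counter $j$ by keeping a positive multiple of its own old value (because $\bA(j,j) \ge 1$) and possibly adding in non-negative multiples of other counters' old values; crucially, $\bA(j,j) \ge 1$ means that no amount of multiplication ever kills counter $j$. First I would make precise the notion of a \emph{pumpable counter}: along a run $\rho$ from $p(\bu)$ to $q(\bv)$ that starts and ends in the same state with the same support, call a counter \emph{pumpable} if, over one traversal of the cycle, its value is guaranteed to be at least its old value plus a fixed positive multiple of some other counter's value (i.e.\ it strictly gains from some companion), or more simply it is a counter that can be made arbitrarily large by iterating the cycle while simultaneously shrinking the additive fractions $\alpha_i$ so as not to disturb anything else. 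Because the matrices govern the multiplicative behavior and the $\alpha_i$ only scale the additive part, one can pick the $\alpha_i$ arbitrarily small and still blow up pumpable counters through repeated copies of the cycle.

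The heart of the argument is a \emph{pumping lemma} for this setting: given a run $\rho : p(\bu) \act{\sigma} q(\bv)$, I would show that one can transform it into a run $\rho'$ with the same endpoints' states, reaching a configuration $q(\bv')$ where every pumpable counter in $\bv'$ is as large as we like while the values of non-pumpable counters are unchanged (up to the mild technical constraints that the supports behave consistently). The construction is exactly the one sketched in the introduction: split $\rho$ into many scaled copies $\beta_1\sigma, \beta_2\sigma, \dots$, choosing the $\beta_k$ small enough that (a) all the intermediate nonnegativity constraints $\bA\bx + \alpha\bb \ge \bzero$ continue to hold — this uses non-negativity of the matrices and of the surviving counters — and (b) the self-loop property ensures pumpable counters accumulate additively across copies and grow without bound. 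One then argues that, with respect to coverability of the target $q(\bv)$, the pumpable counters impose no real constraint: any target value on those coordinates can be met, so they can be projected out.

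Having eliminated the pumpable counters, the residual system on the remaining (non-pumpable) counters has the property that no matrix in it can genuinely amplify a counter beyond an eventual constant — concretely I would argue that, restricted to the non-pumpable coordinates, every reachable-by-a-covering-run matrix product behaves like a permutation-with-loss, i.e.\ effectively like the identity up to reshuffling and zeroing, so the non-pumpable fragment \emph{is} a continuous VASS (or can be simulated by one). At that point I invoke the known result that coverability for continuous VASS is in \NP\ (expressible in ELRA), guess the pumpable/non-pumpable partition and the relevant cyclic structure in \NP, verify it, and solve the residual continuous-VASS coverability instance. The \NP\ upper bound follows since all guesses are polynomial-size and each verification — including checking that the residual system is genuinely a continuous VASS and building the ELRA formula — is polynomial. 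I expect the main obstacle to be the pumping lemma itself: precisely formulating which counters are pumpable, handling the interaction between the multiplicative matrix action and the nonnegativity guards when splitting a run into many rescaled copies (ensuring no guard is violated in any copy), and proving that after projecting out the pumpable counters what remains is faithfully a continuous VASS rather than something with residual affine power. This is the step where the continuous-VASS techniques "cannot be applied in a black-box manner" and need careful restructuring, exactly because a self-loop matrix that is not the identity already yields undecidable reachability.
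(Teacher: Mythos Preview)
Your proposal is essentially the paper's approach: a Pumping Counters Lemma for cyclic runs built from repeated scaled copies, projection onto the non-pumpable counters to obtain a plain continuous VASS, and an \NP\ guess-and-verify via ELRA. One point needs correcting: the residual system on the non-pumpable set $X$ is \emph{not} obtained by arguing that the matrices become ``permutation-with-loss'' on those coordinates---that is false as a statement about the matrices---but rather by \emph{defining} $\mach_X$ to replace every matrix by the identity and then observing that along the specific witnessing run a non-pumped counter $x$ satisfies $\bu_i(x)=\bu_{i-1}(x)+\alpha_i\bb_i(x)$, because every off-diagonal contribution $\bA_i(x,y)\bu_{i-1}(y)$ vanishes (either $\bA_i(x,y)=0$ or $\bu_{i-1}(y)=0$) and $\bA_i(x,x)>1$ with $\bu_{i-1}(x)>0$ would already make $x$ pumped; this is a run-dependent fact, not a structural one. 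The paper also needs one more ingredient your ``verify it'' glosses over: an ELRA-expressible characterization (its Lemma~\ref{lem:far}) of when there exists a $p$-admissible run with a prescribed transition support that pumps a prescribed set of counters, so that the guessed partition $X$ can actually be certified.
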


Note that \NP \ is the best possible upper bound possible for a class, 
since the coverability problem for continuous VASS is \NP-complete. Hence,
we have strictly extended the model of continuous VASS whilst retaining the same complexity
for the coverability problem.

\subsection{Contribution Part II - Complexity results} \label{subsec:contribution-part-II} In addition to the above results,
we complement all of our upper bounds with lower bounds, except for a single family:
The weighted/overlapping family. Our results
are then as follows.

\begin{theorem}\label{thm:complexity-reach-cov}
	The reachability and coverability problems for the class $\cC$ are \NEXP-complete if $\cC$ contains only permutation matrices with at least one non-trivial permutation	matrix.
	For the identity class both these problems are \NP-complete.
\end{theorem}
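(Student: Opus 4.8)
The plan is to establish matching upper and lower bounds in two regimes. For the \emph{identity class}, the claim is that reachability and coverability are \NP-complete; the upper bound is inherited from the known results on continuous VASS (reachability and coverability in continuous VASS are in \NP, as cited in the introduction), and the matching \NP-hardness already holds for continuous VASS, so nothing new is needed there beyond invoking those results. The substance of the theorem is the \NEXP-completeness for classes $\cC$ that contain only permutation matrices but contain at least one non-trivial permutation matrix $\bP_\sigma \neq \bI$.

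For the \NEXP \emph{upper bound}, I would argue that a $\cC$-continuous VASS in which every matrix is a permutation matrix is, up to an exponential blow-up, equivalent to an ordinary continuous VASS. The point is that a permutation matrix applied to a counter vector merely renames the counters, so along any firing sequence the ``active permutation'' is an element of a permutation group. Two natural ways to exploit this: either (i) track the current permutation in the control state — this costs at most $d!$ many states, i.e.\ exponential in the size of the input — thereby producing an exponentially larger \emph{continuous VASS} (all matrices now identity), and then invoke the \NP \ procedure for continuous VASS, for a total \NEXP \ bound; or (ii) guess the relevant data (the sequence of permutations, support information, etc.) of exponential size and verify in polynomial time relative to it, directly giving an \NEXP \ algorithm. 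I expect approach (i) is cleanest: the exponential product construction is routine, and the resulting object genuinely is a continuous VASS because composing away the permutations leaves each transition with an identity matrix and a (permuted) additive update. One technical point to be careful about: the reduction must respect the $\alpha$-scaling semantics, but since $\bv = \bA\bu + \alpha\bb$ and $\bA$ is a permutation matrix, $\bA(\bA^{-1}\text{-renamed state})$ composes correctly and the fraction $\alpha$ only touches the additive part, so the simulation is exact.

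For the \NEXP \emph{lower bound}, I would reduce from a known \NEXP-complete problem — the natural candidate is a succinctly-encoded / exponentially-bounded counter machine halting problem, or $2^n$-bounded reachability for a bounded counter machine, mirroring the style of reduction used for the other lower bounds in the paper. The key gadget is to use the fixed non-trivial permutation matrix $\bP_\sigma$ to build, via the Application and Extension closure properties of the class, a ``rotation'' on a block of counters; composing this rotation allows one to create exponentially long cyclic behaviour while using only polynomially many counters and states — this is what pushes the complexity up to \NEXP \ rather than \NP. Concretely I would lay out a block of $k$ counters on which $\bP_\sigma$ acts as a cyclic shift, use the continuous (fractional) additive updates together with repeated application of the permutation to encode a binary counter of exponential range, and have the target configuration force the machine to have simulated an accepting run of the \NEXP \ machine. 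The \textbf{main obstacle} I anticipate is precisely this lower-bound simulation: controlling the fractional scalars $\alpha \in (0,1]$ so that the continuous semantics cannot ``cheat'' by shrinking updates — one typically needs complementary counters and a conserved-sum invariant (exactly the technique the paper uses elsewhere, e.g.\ maintaining $x + \overline{x} = 1$) to pin the values down and force faithful simulation, and verifying that such an invariant survives multiplication by the permutation matrices drawn from $\cC$ requires care. For coverability the reduction is essentially the same as for reachability, with the final equality test relaxed to a $\ge$ test in the obvious way, since the forced invariants make the covering configuration unique in the coordinates that matter.
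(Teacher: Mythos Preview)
Your upper bound argument is correct and matches the paper exactly: track the accumulated permutation in the control state, obtain an exponentially larger continuous VASS, and invoke the \NP\ procedure for continuous VASS reachability; coverability reduces to reachability as usual.

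Your lower bound sketch, however, diverges from the paper and has a real gap in the mechanism. The paper does \emph{not} try to extract ``exponentially long cyclic behaviour'' from the permutation matrix, and indeed it is unclear how your rotation-on-a-block idea would yield a faithful exponential binary counter: the order of any permutation you can build from $\bP_\sigma$ via Extension and Application on a polynomial number of counters is bounded by something like the lcm of small cycle lengths, and you have not explained how the continuous fractional updates interact with this to force exponentially many distinct intermediate configurations. Instead, the paper reduces from an \emph{already known} \NEXP-hard source: reachability/coverability for \emph{continuous Boolean programs} (Boolean programs augmented with continuous counters). The role of the non-trivial permutation $\bP_\sigma$ is much more modest than you suggest: it is used only to \emph{swap the contents of two counters} $x_i$ and $\overline{x_i}$ (via $\bP_\sigma$ and $\bP_\sigma^{n-1}$, where $n$ is the order of $\sigma$), which is precisely what is needed to simulate a Boolean \emph{set} operation; Boolean tests are handled by decrement/increment pairs. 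This gives a polynomial-time reduction from continuous Boolean programs to $\cC$-continuous VASS, and the \NEXP-hardness is inherited. So the conceptual step you are missing is that the exponential blow-up lives in the \emph{source problem}, not in any clever exploitation of the permutation's cycle structure; the permutation is just a two-counter swap gadget.

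Your instinct about complementary counters and conserved invariants is relevant to other reductions in the paper, but for this particular lower bound the invariant is simply ``exactly one of $x_i,\overline{x_i}$ is nonzero, and all dummy counters are zero'', which survives multiplication by the permutation matrices trivially because permutations preserve the zero/nonzero pattern.
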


Here a non-trivial permutation matrix is a permutation matrix that is not the identity matrix.
Note that along with our decidability result for the reachability problem, this completely characterizes
the complexity of reachability for every class. 
We now move on to the state-reachability problem.

\begin{theorem}\label{thm:complexity-state}
	The state-reachability problem for $\cC$ is \PSPACE-complete, if $\cC$ contains
	only permutation matrices with at least one non-trivial permutation matrix
	or if $\cC$ contains a zero-row or a zero-column matrix.
	It is \NP-complete if $\cC$ is the identity class or $\cC$ is any class containing only
	self-loop matrices.
\end{theorem}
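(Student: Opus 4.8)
The plan is to establish the \PSPACE\ and \NP\ parts independently, and within each part to prove membership and hardness separately.

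\emph{Upper bounds.} A class consisting only of permutation matrices is in particular a class of non-negative matrices, and a class containing a zero-row or a zero-column matrix lies in the decidable regime only if it consists entirely of non-negative matrices (Theorem~\ref{thm:main-state}); in both cases Theorem~\ref{thm:main-state} already yields membership in \PSPACE. For the \NP\ upper bound, observe that state-reachability reduces to coverability by a reduction that leaves the machine untouched and takes the target to $q(\bzero)$: indeed $p(\bu)$ can reach some configuration with control state $q$ iff $p(\bu)$ can cover $q(\bzero)$, since $q(\bzero)$ is covered by every configuration with control state $q$. When $\cC$ consists only of non-negative self-loop matrices --- which subsumes the identity class, all of whose matrices are self-loop matrices --- Theorem~\ref{thm:cov-self-loop} then places coverability, hence state-reachability, in \NP.

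\emph{\NP-hardness.} Every class $\cC$ contains all identity matrices, so every $\cC$-continuous VASS specialises to an ordinary continuous VASS; it therefore suffices to note that state-reachability is already \NP-hard for continuous VASS, by the standard reduction from $3$-\textsc{Sat} (guess an assignment, walk through a clause-checking gadget, and enter a designated state exactly when all clauses are satisfied)~\cite{blondinLogicsContinuousReachability2017}. This matches the \NP\ upper bound for the identity class and for every self-loop class.

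\emph{\PSPACE-hardness.} The substantive step is \PSPACE-hardness when $\cC$ contains a non-trivial permutation matrix, and when $\cC$ contains a zero-row or a zero-column matrix. I would reduce from the halting problem of a deterministic Turing machine running in space $p(n)$ --- equivalently, reachability in a Boolean program over $p(n)$ Boolean variables --- which is \PSPACE-complete. The finite control, head position and program counter are encoded by ordinary continuous-VASS control states and identity-matrix counters; each of the $O(p(n))$ tape bits is encoded by a constant-size gadget of counters, and the reduction hinges on realising, robustly, ``overwrite this bit'' and ``read this bit''. The extra ingredient a non-trivial class provides is precisely the one plain continuous VASS lacks. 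If $\cC$ has a zero-row matrix, that matrix realises an \emph{unconditional clear} of a counter: on the corresponding transition the counter's new value is $\alpha\,\bb(i)$, which we fix to $0$, so the counter becomes $0$ regardless of the adversary's fraction; a bit is then a complementary pair $(x,\bar x)$ with the invariant ``exactly one of $x,\bar x$ is positive'', an overwrite clears one of them and increments the other, and a read nondeterministically guesses the bit and verifies it by decrementing the corresponding counter by a tiny fraction and re-incrementing it (enabled exactly when that counter is positive). An arbitrary zero-row matrix is accommodated by padding each gadget with dummy counters kept at $0$, so that the extension/renaming of that matrix realises the clear --- exactly the device of the undecidability proof for coverability of reset continuous VASS, but applied to a polynomially bounded machine, so no genuine zero-test on an unbounded counter is needed; a zero-column matrix is handled dually. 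If instead $\cC$ has only a non-trivial permutation $\pi$ and no clear is available, the bit counters are manipulated \emph{only} by $\pi$: picking a cycle of $\pi$ of length $\ell \ge 2$, a bit is encoded by the position, among $\ell$ ``slots'', of a single positive ``token'' (the other cycles of $\pi$ being filled by dummy counters kept at $0$); an overwrite, following a read that has pinned down the current symbol, rotates the token by the number of steps determined by the control state to reach its target slot, and a read nondeterministically guesses the bit and verifies it by a positivity test on the guessed slot. Since $\pi$ preserves the multiset of counter values, no token is created or destroyed and the encoding cannot degrade; the simulated device is a reversible space-bounded machine, for which reachability is still \PSPACE-hard.

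\emph{Main obstacle.} I expect the crux to be the robustness of these gadgets against the two peculiarities of the continuous semantics: additive updates are scaled by an adversarially chosen fraction $\alpha \in (0,1]$, so no counter can ever be pinned to an exact value, and there is no primitive zero-test. This is what forces the gadgets to track only ``positive versus zero'' rather than exact values, forces reads to be nondeterministic-and-verified, and forces the permutation gadget to avoid additive updates on the bit counters altogether; the heart of the correctness argument is then to show that the adversary's freedom --- in choosing fractions, and in firing transitions only partially --- cannot manufacture a spurious run into the accepting state, so that the target state is reachable if and only if the simulated machine halts.
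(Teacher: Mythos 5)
Your proposal is correct and follows essentially the same route as the paper: \PSPACE{} membership from Theorem~\ref{thm:main-state}, \NP{} membership by reducing state-reachability to covering $q(\bzero)$ and invoking Theorem~\ref{thm:cov-self-loop}, and \PSPACE-hardness by simulating Boolean programs with complementary-counter pairs (clear via the zero-row/column matrix, guess-and-verify reads via decrement--increment) and, for permutations, by encoding a bit as the position of a positive token moved around a cycle of $\sigma$ --- exactly the paper's gadgets. The only quibbles are cosmetic: the reads in your permutation gadget do additively touch the bit counters (only positivity, not exact values, is preserved), and the detour through reversible space-bounded machines is unnecessary since the guess-and-verify read already pins down the current bit before each overwrite.
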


All of our results are summarized in Table~\ref{table:results}. We note that
the results for the identity class, i.e., continuous VASS, are already known~\cite{blondinLogicsContinuousReachability2017} and are only included here and in the previous
two theorems for the
sake of completeness.
\begin{center}
	\begin{table}[H]
		\caption{All of our results. The leftmost column contains types of matrices and the uppermost row contains the problems. An undecidability/hardness result for a cell means that the problem in that column is undecidable/hard for any class containing \emph{any} matrix in that row. A decidability/complexity upper bound
			result for a cell means that the problem in that column is decidable with the given upper bound for classes containing 
			\emph{no} matrices from any of the previous rows and containing
			\emph{only} those matrices in that row.}
	\begin{tabular}{|c|c|c|c|}
		\hline
		& \textbf{Reach} & \textbf{Cover} & \textbf{State-reach} \\ \hline
		\textbf{neg. entry matrices} & Undec. & Undec. & Undec. \\ \hline
		\textbf{zero-row/column} & Undec. & Undec. & \PSPACE-c \\ \hline
		\textbf{wt./overlap. edges} & Undec. & \multirow[t]{2}{*}{} $?$ & \multirow[t]{2}{*}{} \PSPACE \\ & & \NP-c ({\scriptsize self-loop}) & \NP-c ({\scriptsize self-loop}) \\ \hline  
		\textbf{non-trivial perm.} & \NEXP-c & \NEXP-c & \PSPACE-c \\ \hline
		\textbf{identity matrices} & \NP-c & \NP-c & \NP-c \\ \hline
	\end{tabular}
\end{table}\label{table:results}
\end{center}
\vspace{-20pt}
\paragraph*{Generic reductions between reachability, coverability and state-reachability. }
It is easy to see that state-reachability is a special case of coverability where
we want to cover the $\bzero$ vector. Further, coverability can be easily reduced to reachability:
Given an instance $(\mach,p(\bu),q(\bv))$ of the coverability problem, we modify $\mach$ to first add a transition which moves from $q$ to a new state $q'$. Then at $q'$ we add loops to decrement every possible counter. It then follows that $p(\bu)$ can cover $q(\bv)$ in $\mach$ if and only if 
$p(\bu)$ can reach $q'(\bv)$ in the new machine. Hence, we have

\begin{theorem}\label{thm:reach-cov-state}
	For any class $\cC$, the state-reachability problem for $\cC$ is poly.-time reducible
	to the coverability problem for $\cC$ which in turn is poly.-time reducible
	to the reachability problem for $\cC$.
\end{theorem}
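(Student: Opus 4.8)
The plan is to establish the two claimed reductions separately and then compose them, fleshing out the sketch given just above the statement.

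For state-reachability to coverability, I would map an instance $(\mach, p(\bu), q)$ to the coverability instance $(\mach, p(\bu), q(\bzero))$. This transformation does nothing to the machine, so it is trivially polynomial and stays inside the same class $\cC$. Correctness is immediate: every configuration of an affine continuous VASS carries a counter vector in $\qnz^d$, so $\bv' \ge \bzero$ holds for every $q(\bv')$; hence $p(\bu)$ covers $q(\bzero)$ exactly when $p(\bu)$ reaches some configuration whose control state is $q$.

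For coverability to reachability, given $(\mach, p(\bu), q(\bv))$ with $\mach = (Q,T)$ over $d$ counters, I would build $\mach' = (Q \cup \{q'\}, T')$ where $q'$ is a fresh state and $T'$ extends $T$ by (i) a transition $(q, \bI_d, \bzero, q')$ and (ii) for each $i \in \{1,\dots,d\}$ a self-loop $(q', \bI_d, -\bunit_i, q')$; the output instance is $(\mach', p(\bu), q'(\bv))$. Only identity matrices are introduced and $\bI_d \in \cC$ by the Identity closure property, so $\mach'$ is a $\cC$-continuous VASS, and the construction adds one state and $d+1$ transitions, hence is polynomial. For correctness, observe that $q'$ is only entered from $q$ via $(q,\bI_d,\bzero,q')$ and that the only transitions available at $q'$ are counter-decrementing self-loops. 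If $p(\bu)$ covers $q(\bv)$ in $\mach$, say $p(\bu) \act{*} q(\bw)$ with $\bw \ge \bv$, then in $\mach'$ we follow this run, move to $q'(\bw)$, and for each $i$ lower counter $i$ from $\bw(i)$ down to $\bv(i)$: a firing of $(q', \bI_d, -\bunit_i, q')$ with fraction $\alpha$ replaces $\bw$ by $\bw - \alpha \bunit_i$ and leaves the other counters fixed, and since $\alpha$ ranges over $(0,1]$ any nonnegative decrement $\bw(i)-\bv(i)$ is realizable in $\lceil \bw(i)-\bv(i)\rceil$ steps (the length of the witnessing run is immaterial for the reduction to be correct). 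Conversely, any run $p(\bu) \act{*} q'(\bv)$ in $\mach'$ must first reach some $q(\bw)$ using only transitions of $\mach$, then move to $q'(\bw)$, then only decrement counters; hence $\bw \ge \bv$, so $p(\bu)$ covers $q(\bv)$ in $\mach$. Composing the two reductions yields the theorem.

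I do not expect a real obstacle here; the only two points deserving a line of care are that the fraction-scaled decrement gadget at $q'$ can reach an exact target value and can do nothing but decrease counters, and that $\mach'$ remains within $\cC$, which is immediate since every added transition carries only the identity matrix.
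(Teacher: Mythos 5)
Your proposal is correct and follows essentially the same route as the paper: state-reachability is reduced to covering $q(\bzero)$, and coverability is reduced to reachability by adding a fresh state $q'$ reachable from $q$ together with identity-matrix decrement self-loops at $q'$, targeting $q'(\bv)$. The extra care you take with the fractional decrements and the Identity closure property is consistent with the paper's (briefer) argument.
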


We will use this theorem to simplify the presentation of many of our results.

\section{Undecidability results}\label{sec:undec}

In this section, we prove all of our undecidability results. 
We begin with our undecidability result on classes containing a matrix with a negative entry.

\subsection{State-reachability for classes with negative entries}

Our main result of this section is that

\begin{theorem}\label{thm:undec-negative-entries}
	The reachability, coverability and state-reachability problems for $\cC$-continuous VASS is undecidable
	if $\cC$ contains a matrix with some negative entry.
\end{theorem}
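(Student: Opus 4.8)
By Theorem~\ref{thm:reach-cov-state} it suffices to prove that the state-reachability problem for $\cC$ is undecidable, since undecidability of state-reachability propagates through the reductions of that theorem to coverability and to reachability. I would reduce from the halting problem for two-counter Minsky machines. The reduction has two conceptually separate ingredients: (a) turning a single negative matrix entry into an \emph{exact zero-test}, and (b) faithfully simulating a Minsky machine inside a continuous VASS that may zero-test its counters.

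For (a), fix $\bA\in\cC$ with $\bA(r,s)=-c$ for some $c\ge 1$, say $\bA$ is $k\times k$. Since $\cC$ is closed under extension and counter-renaming — equivalently under the Application operation — on any dimension we may fire a transition whose matrix applies a copy of $\bA$ to a chosen block of $k$ consecutive counters and is the identity elsewhere. The idea is to keep every ``real'' counter $x$ as the $s$-th coordinate of its own private $k$-block, maintaining the invariant that the other $k-1$ coordinates of each block are always $0$. Under this invariant, firing the localized copy of $\bA$ on the block of $x$ is enabled exactly when $-c\cdot x\ge 0$, i.e.\ exactly when $x=0$, and then it leaves the whole block equal to $\bzero$; when $x>0$ it is blocked. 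So this transition is an exact zero-test on $x$ that disturbs no counter. (For $\bA=\mathrm{diag}(-1,1,\dots,1)$ this is immediate, and the block trick handles an arbitrary negative entry at no extra cost.)

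Ingredient (b) is where the continuous semantics must be tamed, and I expect it to be the main obstacle: additive updates are scaled by an adversarial fraction $\alpha\in(0,1]$, so no fixed amount can be added to or subtracted from a counter in a single step. The plan is to encode a Minsky counter holding $n$ by two counters $x,\overline{x}$ with $x=2^{-n}$ and $\overline{x}=1-2^{-n}$, so that $n=0$ iff $\overline{x}=0$, which is testable by the gadget of (a). Incrementing $n$ amounts to halving $x$ and moving the freed mass into $\overline{x}$; decrementing $n$ (legal exactly when $n\ge 1$, which matches $\overline{x}\ge x$) amounts to doubling $x$ by pulling $x$ worth of mass out of $\overline{x}$. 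Each such operation is implemented by a short loop of scaled transfers followed by a zero-test of a fresh scratch counter — e.g.\ to halve $x$ one first drains $x$ into a scratch counter $t$ (repeatedly decrementing $x$ and incrementing $t$ by the same scaled amount, then zero-testing $x$), and then drains $t$ while incrementing $x$ and $\overline{x}$ each by half of the mass removed from $t$, then zero-testing $t$. Whatever fractions are chosen, the concluding zero-tests force the total mass removed from $x$ (resp.\ from $t$) to equal its previous value, so the net effect of the whole sub-procedure is \emph{exactly} a halving, independent of the fractions used; the remaining sub-procedures are analogous, and using the standard test-and-decrement instruction form makes the non-negativity side-conditions of the decrementing sub-procedure coincide with the Minsky machine's guards. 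Completeness of the reduction is then easy (pick all fractions so that each step does exactly what is intended), and soundness holds because every sub-procedure is forced to be exact by its concluding zero-test, so any run reaching the halting control state of the constructed VASS traces out a genuine halting computation of the Minsky machine; the delicate points to verify are precisely these exactness claims and the matching of side-conditions with guards.
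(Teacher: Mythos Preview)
Your part~(a) is exactly the paper's argument: place each counter to be zero-tested at the column index $j$ of the negative entry inside its own private $k$-block, keep the rest of the block at zero, and use the localized $\bA$ with additive update $\bzero$ as the zero-test transition. The paper records this as equation~\eqref{eq:one} and carries out precisely the same construction via the $\apply{n}{\bA}{\cdot}$ operator.

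Your part~(b), however, takes a different and longer route. The paper does not reduce from Minsky machines at all: it simply invokes a known result (Theorem~\ref{thm:cont-vass-zero-tests}, due to~\cite{blondinLogicsContinuousReachability2017}) that state-reachability for continuous VASS \emph{with zero-tests} is already undecidable, and then the entire proof is just your part~(a). Your encoding of a Minsky value $n$ by the pair $(2^{-n},\,1-2^{-n})$, with increment and decrement realized as halve/double sub-procedures implemented by drain-then-zero-test loops, is a correct way to reprove that cited theorem from scratch --- it is essentially the classical argument behind Theorem~\ref{thm:cont-vass-zero-tests} --- so your proposal is sound but re-derives a lemma the paper treats as a black box. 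One small presentational point: additive updates must be integer vectors, so ``incrementing $x$ and $\overline{x}$ each by half of the mass removed from $t$'' should be realized as the update $(+1,+1,-2)$ on $(x,\overline{x},t)$ rather than literal $+\tfrac12$'s; this does not affect correctness. What the paper's approach buys is brevity and modularity; what yours buys is self-containment, at the cost of having to verify the exactness of each drain-and-test gadget and the guard correspondence you flag as the delicate points.
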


Note that by Theorem~\ref{thm:reach-cov-state}, it suffices to prove the above theorem
for only the state-reachability problem.
To prove this result, we introduce the notion of a continuous VASS with zero tests.
These are continuous VASS which can additionally test if the value of a counter is zero. 
Formally, a $d$-continuous VASS with zero tests is a tuple $\mach = (Q,T,T_{=0})$ where
$Q$ is a finite set of states, $T \subseteq Q \times \zn^d \times Q$ is a finite 
set of transitions and $T_{=0} \subseteq Q \times [d] \times Q$ is a finite set of zero-tests. 
Configurations of $\mach$ and steps of $\mach$ by transitions from $T$ are defined in the same way as for a continuous VASS. In addition, we also have steps defined by zero-tests in the following manner.
Given two configurations $p(\bu), q(\bv)$ and a zero-test $t$ of the form $t = (p,i,q)$, we say that there is a step from $p(\bu)$ to $q(\bv)$ by means of the zero-test $t$,
denoted by $p(\bu) \act{t} q(\bv)$ if $\bu(i) = 0$ and $\bv = \bu$. (Note that steps corresponding to a zero-test do not have any chosen fractions).
A run is then defined as a sequence of steps (defined by transitions and zero-tests) from one 
configuration to another. Reachability, coverability and state reachability are then defined in the same way as for a continuous VASS.
From~\cite[Theorem 4.17]{blondinLogicsContinuousReachability2017}, we have the following result.

\begin{theorem}\label{thm:cont-vass-zero-tests}
	The reachability, coverability and state-reachability problems for continuous VASS with zero-tests
	are undecidable.
\end{theorem}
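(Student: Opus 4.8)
The plan is to reduce from the halting problem for two-counter Minsky machines, which is undecidable. Given such a machine $M$ with counters $c_1,c_2$ and a designated halting instruction, I would construct a continuous VASS with zero-tests $\mach$ whose counters are $x_1,x_2$ (tracking $c_1,c_2$) together with a constant number of auxiliary counters, say $u$ and $v$. The obstacle in the continuous setting is that an additive update is rescaled by an arbitrary fraction $\alpha\in(0,1]$, so $\mach$ cannot increment $x_i$ by exactly $1$. The key idea is that $\mach$ can instead fix, once and for all, a ``unit'' value: from the initial configuration (all counters $0$) it fires a single transition with additive update $+1$ on $u$, landing with $u=\gamma$ for some $\gamma\in(0,1]$ of its choosing, and never again changes $u$ except through gadgets that restore it. It then maintains the invariant that whenever $M$ is at instruction $\ell$ with $c_i=n_i$, the VASS is at a corresponding state with $u=\gamma$, all scratch counters $0$, and $x_i=n_i\gamma$. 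Since $\gamma>0$, zero-tests are compatible with this scaling ($x_i=0$ iff $n_i=0$), so each zero-test instruction of $M$ is simulated directly by a zero-test on the corresponding $x_i$.

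It remains to simulate increments and decrements by adding or removing exactly $\gamma$ from $x_i$. For an increment of $c_i$, I would use a transition looping at an auxiliary state with additive update equal to $-1$ on $u$, $+1$ on $x_i$ and $+1$ on $v$, fired an arbitrary number of times, followed by a zero-test $u=0$. Because a single transition is fired at each step, one common fraction governs all three coordinates, so the total amount drained from $u$ equals the total added to $x_i$ equals the total added to $v$; the zero-test then certifies that this common total is exactly $u$'s previous value $\gamma$. A symmetric loop with update $+1$ on $u$, $-1$ on $v$ and a zero-test $v=0$ restores $u$ to $\gamma$, so the net effect on $x_i$ is an increase of exactly $\gamma$. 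A decrement of $c_i$ uses the loop with update $-1$ on $x_i$, $-1$ on $u$, $+1$ on $v$ together with the zero-test $u=0$: to pass the zero-test the run must drain all of $\gamma$ from $u$ while keeping $x_i\ge 0$, which is possible exactly when $x_i\ge\gamma$, i.e. $n_i\ge 1$; hence a decrement on a zero counter is correctly blocked, and the ``nonzero'' branch of a test-and-decrement instruction needs no separate treatment. Consequently $M$ halts iff $\mach$ can reach the control state associated with the halting instruction, so state-reachability for continuous VASS with zero-tests is undecidable; coverability and reachability then follow by the generic reductions (state-reachability reduces to covering $\bzero$, and coverability reduces to reachability by appending counter-draining loops), exactly as in Theorem~\ref{thm:reach-cov-state}.

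The step requiring the most care is verifying the faithfulness of the transfer gadgets: one must check that no choice of fractions inside a loop can leak value (guaranteed by coupling the decrement of one counter and the increment of another into a \emph{single} transition, so a common $\alpha$ applies), that all intermediate counter values remain non-negative throughout each loop, and that the zero-tests genuinely pin the transferred quantity to $\gamma$ rather than to some smaller amount. Once the boundary invariant ($u=\gamma$, scratch counters $0$, $x_i=n_i\gamma$) is shown to be preserved by each gadget, the simulation is exact for every admissible $\gamma$, so $\mach$ reaches the halting state iff $M$ halts, independently of which scaling factor was chosen. (This recovers the statement of \cite[Theorem~4.17]{blondinLogicsContinuousReachability2017}, which we may alternatively invoke directly.)
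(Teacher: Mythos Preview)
Your proposal is correct. The paper, however, does not prove this theorem at all: it simply imports it as \cite[Theorem~4.17]{blondinLogicsContinuousReachability2017} and uses it as a black box for the subsequent reductions. You in fact acknowledge this at the end of your proposal.

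What you have written is essentially a reconstruction of the argument behind that cited result. The idea of fixing a ``unit'' $\gamma$ at the start and then using drain-and-zero-test loops to transfer exactly $\gamma$ between counters is the standard way to tame the nondeterministic scaling of the continuous semantics, and your gadgets are sound: coupling the $-1$ on $u$ with the $+1$ on $x_i$ (and on $v$) in a \emph{single} transition forces a common factor, and the subsequent zero-test on $u$ pins the total to $\gamma$; the restore loop with zero-test on $v$ is symmetric. Your observation that the decrement gadget is blocked precisely when $x_i<\gamma$, i.e.\ when $n_i=0$, correctly handles the test-and-decrement branching of a Minsky instruction without an explicit nonzero test. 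The only thing left implicit is the control-state wiring (each loop is a self-loop at an intermediate state, with the zero-test as the sole exit), but this is routine. So your write-up both subsumes the paper's treatment (which is just a citation) and sketches why the cited theorem holds.
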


Using this result, we shall prove Theorem~\ref{thm:undec-negative-entries}. 

\begin{proof}[Proof of Theorem~\ref{thm:undec-negative-entries}]
	Let $\cC$ be a class of matrices which has a matrix $\bA$ (of size $k \times k$ for some $k$)
	such that $\bA(i,j) < 0$ for some indices $i$ and $j$. First, we note the following property: For any $\lambda \in \qnz$,
	we have
	\begin{equation}\label{eq:one}
		\bA \cdot \lambda \bunit_j = \begin{cases}
			\bzero & \text{if } \lambda = 0\\
			\lambda \cdot \bA_j & \text{otherwise, where } \bA_j \text{ is the } j^{th} \text{ column of } \bA
		\end{cases}	
	\end{equation}
	
	This means that if we multiply $\bA$ by a non-negative vector $\bv$ which is zero everywhere, except possibly in the $j^{th}$ counter, 
	then the only way we can have a non-negative vector again is if 
	$\bv$ is $\bzero$. Intuitively, this means that if we have a configuration which is zero everywhere, except possibly 
	in the $j^{th}$ counter, then to test if the value of the $j^{th}$ counter is 0,
	we simply have to multiply the configuration with the matrix $\bA$. In this way, we can simulate
	a zero-test on counter $j$ by the matrix $\bA$. 
	
	We now show that $\cC$-continuous VASS can simulate continuous VASS with zero tests, which by Theorem~\ref{thm:cont-vass-zero-tests} will prove
	Theorem~\ref{thm:undec-negative-entries}. The reduction is similar to the one given in~\cite[Proposition 4.3]{Affine-VASS}.
	
	Let $\mach = (Q,T,T_{=0})$ be a $d$-continuous VASS with zero tests for some $d$.
	From $\mach$, we will create a $n = dk$ dimensional $\cC$-continuous VASS $\mach'$.
	For every $1 \le i \le d$, the $(j + (i-1)k)^{th}$ counter of $\mach'$ (which we will denote
	by $x_i$) will be responsible for simulating the $i^{th}$ counter of $\mach$.
	These counters, $x_1, x_2, \dots, x_d$, will be called the \emph{primary counters}
	of $\mach'$. 
%
	The remaining counters of $\mach'$ 
	will be called \emph{dummy counters} which will always have the value 0.
	
	Given a vector $\Delta \in \qn^d$, let ext$(\Delta) \in \qn^{n}$ be the vector which is 0 everywhere,
	except in counters $x_1, x_2, \dots, x_d$ where the values are respectively
	$\Delta(1),\Delta(2),\dots,\Delta(d)$. With this notation set up, we are ready to state
	the desired reduction. 
	
	The set of states of $\mach'$ will be the same as $Q$.  
	For each $t \in T \cup T_{=0}$ of $\mach$, 
	$\mach'$ will have a corresponding transition $t'$.
	Before we state this transition $t'$, we will state two properties that 
	we shall prove will be satisfied by $t'$.
	
	\begin{quote}
		\textsc{Property 1:} If 
		$(p,\text{ext}(\bu)) \act{\alpha t'} q(\bv')$ is a step in $\mach'$, then $\bv' = \text{ext}(\bv)$
		for some vector $\bv \in \qnz^d$.
	\end{quote}
	\begin{quote}
		\textsc{Property 2a):} 	If $t \in T$ is a transition of $\mach$,
		then $p(\bu) \act{\alpha t} q(\bv)$ is a step in $\mach$ if and only if 
		$p(\text{ext}(\bu)) \act{\alpha t'} q(\text{ext}(\bv))$ is a step in $\mach'$.
	\end{quote}
	\begin{quote}
		\textsc{Property 2b): } If $t \in T_{=0}$ is a zero-test of $\mach$,
		then $p(\bu) \act{t} q(\bv)$ is a step in $\mach$ if and only if 
		$p(\text{ext}(\bu)) \act{\alpha t'} q(\text{ext}(\bv))$ is a step in $\mach'$ for all $\alpha \in (0,1]$.
	\end{quote}

	We now proceed to state the transitions of $\mach'$.
	If $t = (p,\Delta,q)$ is a transition in $T$, then corresponding to $t$, we will have
	the transition $t' = (p,\bI_n,\text{ext}(\Delta),q)$ in $\mach'$.
	Note that $t'$ updates the counters $x_1, x_2, \dots, x_d$ of $\mach'$ in the same way
	as $t$ updates the counters $1,2,\dots,d$ of $\mach$. Further, $t'$ does not update the dummy counters
	of $\mach'$ at all. It can then be easily seen that $t'$ satisfies both the properties.
	
	If $t = (p,\ell,q)$ is a zero-test in $T_{=0}$, then corresponding to $t$, 
	we will have the transition $t' = (p,\apply{n}{\bA}{(\ell-1)k+1},\bzero,q)$ in $\mach'$.
	By definition of the Application operation, $t'$ does not update
	any counters in the set $\{1,2,\dots,(\ell-1)k\} \cup \{\ell k+1,\dots n\}$. Furthermore, by 
	equation~\ref{eq:one}, if at some configuration, the values of the counters in the set $\{(\ell-1)k+1,\dots,(\ell-1)k+j-1\} \cup \{(\ell-1)k+j+1,\dots,\ell k\}$ are zero, then the only way 
	$t'$ can be fired from that configuration is if the value of the counter $(\ell-1)k+j = x_\ell$ 
	is also zero.
	Further, firing $t'$ only results in all of the counters in the set $\{(\ell-1)k+1,(\ell-1)k+2,\dots,\ell k\}$ taking 
	the value 0.
	Hence, $t'$ also satisfies both of these properties.
	
	By using both the properties, it then follows that in $\mach$, a configuration $p(\bu)$
	can reach a configuration with state $q$ if and only if in $\mach'$, the configuration $p(\text{ext}(\bu))$
	can reach a configuration with state $q$. Hence, the desired reduction is complete.
\end{proof}

\subsection{Coverability for non-negative classes with zero-rows/columns}\label{subsec:zero-row-column}

Because of the negative result from the previous subsection, it suffices to only consider classes
which only contain non-negative matrices for the rest of the paper. Such a class will be called a non-negative class. We are now ready to state the main result of this subsection.

\begin{theorem}\label{thm:undec-zero-row-column}
	The reachability and coverability problems for $\cC$-continuous VASS is undecidable if $\cC$ is a non-negative class which contains
	a zero-row or a zero-column matrix.
\end{theorem}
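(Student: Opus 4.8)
The plan is to reduce from the $1$-bounded reachability problem for continuous VASS with zero tests, i.e.\ reachability restricted to runs along which every counter value stays $\le 1$ (so in particular the source and target configurations are $\le 1$ componentwise). This problem is undecidable: it follows from Theorem~\ref{thm:cont-vass-zero-tests} by a scaling argument, since any run of a continuous VASS with zero tests can be multiplied, together with all of its chosen fractions, by an arbitrarily small positive rational $\beta$ --- which leaves the zero tests and the non-negativity constraints untouched while forcing all counter values $\le 1$ (after a routine normalisation making the endpoints scaling-invariant). Since coverability reduces to reachability (Theorem~\ref{thm:reach-cov-state}), it suffices to prove that coverability is undecidable for $\cC$-continuous VASS whenever $\cC$ contains a non-negative matrix with a zero row or a zero column.

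So I would fix such a matrix $\bA \in \cC$ of size $k\times k$, let $p \le k$ be the index of its zero row (resp.\ zero column), and work with $\bA' := \ext{\bA}{1} \in \cC$. Then $\bA'$ still has zero row (resp.\ column) $p$, but it acts as the identity on the new coordinate $k+1$: its $(k{+}1)$-st column is $\bunit_{k+1}$ and its $(k{+}1)$-st row is supported on coordinate $k+1$. The one computation that drives the reduction is this: for all $x,y\in\qnz$ and $\bu = x\bunit_p + y\bunit_{k+1}$, the vector $\bA'\bu$ equals $0$ on coordinate $p$, equals $y$ on coordinate $k+1$, is non-negative elsewhere, and is $0$ outside coordinate $k+1$ whenever $x=0$; in particular the coordinates $p$ and $k+1$ of $\bA'\bu$ sum to $y = (x+y) - x$. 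Thus, on vectors of this shape, $\bA'$ behaves like ``a zero test on coordinate $p$ that, whenever the test fails, irrevocably destroys mass $x$ and scatters it onto coordinates that can never feed back into coordinates $p$ or $k+1$''.

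Next I would set up the simulation. Let $\mach = (Q,T,T_{=0})$ be a continuous VASS with zero tests over $d$ counters and $p_0(\bu)\to q(\bv)$ a $1$-bounded reachability instance. I build a $\cC$-continuous VASS $\mach'$ over $n = d(k{+}1)$ counters, grouped into $d$ blocks of $k{+}1$; block $i$ stores the simulated value $x_i$ of counter $i$ on coordinate $p$ and a complement $\overline{x_i}$ on coordinate $k+1$, with intended invariant $x_i + \overline{x_i} = 1$. A transition $(s,\Delta,s')\in T$ is simulated by $(s,\bI_n,\Delta',s')$ where $\Delta'$ adds $\Delta(i)$ to coordinate $p$ and subtracts $\Delta(i)$ from coordinate $k+1$ in every block $i$; while the invariant holds, firing this with fraction $\alpha$ is possible iff $0 \le x_i + \alpha\Delta(i) \le 1$ for all $i$, i.e.\ iff the original transition is firable with $\alpha$ in the $1$-bounded semantics, and it preserves the invariant. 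A zero test $(s,\ell,s')\in T_{=0}$ is simulated by the transition applying $\bA'$ to block $\ell$ (a matrix of $\cC$, via the Application operation) with zero additive update: by the computation above it is a no-op on block $\ell$ when $x_\ell = 0$, and strictly decreases the sum of coordinates $p$ and $k+1$ of block $\ell$ when $x_\ell > 0$. This ``block sum'' starts at $1$ and is non-increasing, hence always stays $\le 1$, and it remains $=1$ in a block iff $\bA'$ was applied to that block only while its stored value was $0$ (no ``cheating''), in which case the block's other coordinates stay $0$. Taking as initial configuration of $\mach'$ the encoding of $p_0(\bu)$ (coordinate $p$ of block $i$ set to $\bu(i)$, coordinate $k+1$ to $1-\bu(i)$, rest $0$ --- valid since $\bu \le 1$) and as coverability target the configuration at $q$ whose block $i$ holds $\bv(i)$ on coordinate $p$, $1-\bv(i)$ on coordinate $k+1$, and $0$ elsewhere, I would argue: covering this target forces, in each block, coordinate $p \ge \bv(i)$ and coordinate $k+1 \ge 1-\bv(i)$, which together with block sum $\le 1$ become equalities; so every block sum is exactly $1$ (no cheating), the remaining coordinates are $0$, and the run is a faithful simulation reaching $q(\bv)$ in $\mach$. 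Conversely, any $1$-bounded run of $\mach$ from $p_0(\bu)$ to $q(\bv)$ lifts step by step to a run of $\mach'$ reaching the target exactly. Hence $\mach'$ covers the target iff $\mach$ has a $1$-bounded run from $p_0(\bu)$ to $q(\bv)$, so coverability --- and, by Theorem~\ref{thm:reach-cov-state}, reachability --- is undecidable for $\cC$-continuous VASS.

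The main obstacle, and the point of departure from the reset-VASS reduction, is that one is handed an \emph{arbitrary} non-negative zero-row/column matrix rather than a clean reset, so $\bA$ generically disturbs the other coordinates of a block. Passing to $\ext{\bA}{1}$ supplies a protected complement coordinate on which $\bA'$ is the identity, and one must verify that the extra mass $\bA$ scatters when a test fails --- which happens only in the zero-row case --- is genuinely inert: it never flows back into the two coordinates carrying $x_i$ and $\overline{x_i}$, so it cannot undo a broken invariant. A secondary subtlety is that the reduction naturally certifies \emph{exact} reachability of the target in $\mach$, which is why it is phrased as a reduction from $1$-bounded reachability rather than from unrestricted reachability.
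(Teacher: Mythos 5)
Your proof is correct, and it rests on the same core ingredients as the paper's (the scaling argument making $1$-bounded zero-test continuous VASS undecidable, complementary counters enforcing a block-sum invariant of $1$, and the zero-row/column matrix acting as a conditional reset that irrevocably destroys the invariant when the test is violated), but it is organized differently. The paper proceeds in two stages: it first proves the standalone result that coverability is undecidable for continuous VASS with resets (Theorem~\ref{thm:cont-vass-resets}), via complementary counters, and then separately reduces reset-VASS coverability to $\cC$-coverability, handling the zero-row and zero-column cases by a case distinction on what the dummy coordinates of each block may contain (arbitrary junk vs.\ forced zero, encoded in the set $S(\bu)$). You instead give a single one-shot reduction from $1$-bounded reachability, and you neutralize the arbitrariness of $\bA$ by passing to $\ext{\bA}{1}$, which supplies a protected complement coordinate on which the matrix acts as the identity; the monotone block-sum argument then treats both the zero-row and zero-column cases uniformly and renders the scattered mass harmless without tracking it. What the paper's route buys is the intermediate reset-VASS theorem, which it highlights as surprising in its own right and reuses for the \PSPACE-hardness of state-reachability; what your route buys is a shorter, case-free argument. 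One cosmetic remark: the paper's Stage 1 is nominally a reduction from $1$-bounded \emph{coverability}, but its complementary-counter target also forces exact equality at the endpoint, so your choice to reduce from $1$-bounded reachability (and your explicit acknowledgment of why) is the cleaner framing of the same phenomenon; both variants are covered by Lemma~\ref{lem:undec-cov-zero-tests-0and1} and its reachability analogue.
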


By Theorem~\ref{thm:reach-cov-state}, it suffices to prove the above theorem only for the coverability problem. We do this in two stages.
In the first stage, we show that coverability for continuous VASS with resets, i.e., $\cC_R$-continuous VASS
where $\cC_R$ is the class of all diagonal matrices with entries from $\{0,1\}$ is undecidable.
In the second stage, we reduce from coverability for continuous VASS with resets to  
coverability for $\cC$-continuous VASS where $\cC$ contains a zero-row or a zero-column matrix.


\subsubsection*{Stage 1: Continuous VASS with resets. }

A continuous VASS with resets is simply an affine continuous VASS in which all
the matrices that appear are diagonal matrices with entries from $\{0,1\}$. 
For every $d,i$, let $\mathbf{R}^{(d,i)}$ be the $d\times d$ diagonal matrix
which has a 1 in each entry of the diagonal, except at the position $(i,i)$, where the value is 0.
Intuitively, if $\bu$ is some vector and $\bv = \mathbf{R}^{(d,i)}(\bu)$,
then $\bv$ is the same as $\bu$, except that $\bv(i) = 0$. The idea is that if $\bu$ represents
the current contents of all the counters, then $\mathbf{R}^{(d,i)}$ resets the value of the counter $i$
to 0 and does not alter the contents of the other counters. In the sequel,
whenever we say that a transition resets counter $i$, we mean that the current counter values
are multiplied by $\mathbf{R}^{(d,i)}$.
We will now show that
\begin{theorem}\label{thm:cont-vass-resets}
	The coverability problem for continuous VASS with resets
	is undecidable.
\end{theorem}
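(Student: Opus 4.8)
The plan is to reduce from the 1-bounded coverability problem for continuous VASS with zero-tests, exactly as the introduction foreshadows. First I would establish that the \emph{1-bounded coverability problem for continuous VASS with zero-tests} is undecidable: here we are given a continuous VASS with zero-tests together with configurations $p(\bu)$ and $q(\bv)$, and we ask whether $p(\bu)$ can cover $q(\bv)$ via a run along which every counter value stays in $[0,1]$. This should follow from Theorem~\ref{thm:cont-vass-zero-tests} by a rescaling argument: any covering run of an arbitrary continuous VASS with zero-tests can be shrunk by a suitable non-zero fraction $\beta$ so that all intermediate counter values fall below $1$ (since a run is a finite sequence of steps, there are only finitely many counter values to bound, and replacing $\sigma$ by $\beta\sigma$ scales all counter values linearly while preserving the zero-tests, which do not depend on the chosen fractions). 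One has to be slightly careful: coverability targets $\bv$ must also be scaled, and a covering run may overshoot the target, so the cleanest statement is to take the target to be $\bzero$ on the counters (i.e.\ reduce from state-reachability with zero-tests) — scaling by $\beta$ then yields a bounded run witnessing the same state-reachability instance, and the converse is trivial since a $1$-bounded run is in particular a run.

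Next I would perform the main reduction: from a 1-bounded instance $(\mach, p_0(\bu_0), q_0)$ of state-reachability for a continuous VASS $\mach$ with zero-tests on $d$ counters, build a continuous VASS with resets $\mach'$. For each counter $x$ of $\mach$ I introduce a fresh \emph{complementary counter} $\overline{x}$, and I aim to maintain the invariant $x + \overline{x} = 1$ throughout any ``good'' run of $\mach'$, starting from the initial configuration where $\overline{x} = 1 - \bu_0(x)$ (legal since $\bu_0(x)\le 1$). A transition $t=(p,\Delta,q)$ of $\mach$ that changes counter $x$ by $\Delta(x)$ is simulated by a transition of $\mach'$ that adds $\Delta(x)$ to $x$ and simultaneously adds $-\Delta(x)$ to $\overline{x}$ (using the identity matrix, only additive updates), so the invariant is preserved and the guard $x + \Delta(x)\ge 0$, $\overline{x} - \Delta(x)\ge 0$ together with the invariant enforces $x+\Delta(x)\le 1$, i.e.\ $1$-boundedness is \emph{automatic}. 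The subtle point is that in the continuous semantics the additive update is scaled by $\alpha\in(0,1]$, so $\mach'$ actually performs $x \mathrel{+}= \alpha\Delta(x)$, $\overline{x} \mathrel{+}= -\alpha\Delta(x)$; this is fine, since this exactly matches a step of $\mach$ with the same fraction $\alpha$, and the invariant $x+\overline{x}=1$ is still preserved because the two updates cancel.

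The heart of the construction is the treatment of zero-tests. A zero-test $(p,x,q)$ of $\mach$ is replaced in $\mach'$ by a transition that \emph{resets} the counter $x$ to $0$ (and does nothing else). If at the moment this is fired we have $x=0$, then the reset does nothing and the invariant $x+\overline{x}=1$ survives, faithfully mimicking the zero-test. If instead $x>0$, the reset strictly decreases the total $x+\overline{x}$, permanently turning the invariant into the strict inequality $x+\overline{x}<1$ — and crucially, no transition of $\mach'$ can ever increase $x+\overline{x}$ again (the counter-changing transitions keep the sum constant, and a reset can only decrease it). I then set the coverability target in $\mach'$ to be the configuration $q_0(\bw)$ where $\bw$ requires $x+\overline{x}\ge 1$ for every original counter $x$ (formally: cover the vector that is $1$ on, say, every $\overline{x}$ and on every $x$, and $0$ elsewhere — actually one wants $x+\overline{x}\ge 1$, which coverability of the componentwise vector $\bunit$ on these $2d$ counters does \emph{not} quite express; instead I would add, just before reaching $q_0$, a ``checking'' gadget that for each $x$ moves the entire value of $\overline{x}$ into $x$ via additive transitions and then requires $x\ge 1$ to be coverable, which is possible iff the sum was still $\ge 1$, hence $=1$). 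Consequently $\mach'$ can reach $q_0$ with all sums still equal to $1$ iff $\mach$ has a $1$-bounded run reaching state $q_0$, and by Stage~1 this is undecidable. The main obstacle I anticipate is precisely this last bookkeeping: coverability only lets us check lower bounds on individual counters, so expressing ``the invariant $x+\overline{x}=1$ was never broken'' requires a small but careful gadget (moving $\overline{x}$ into $x$ and then covering $1$) rather than a direct target vector, and I would need to verify that this gadget cannot be abused to ``repair'' a broken invariant — which holds because the gadget only ever transfers mass between $x$ and $\overline{x}$ and never creates new mass.
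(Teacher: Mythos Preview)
Your two-stage strategy matches the paper's: first establish undecidability of a $1$-bounded problem for continuous VASS with zero-tests, then reduce to coverability with resets via complementary counters and the invariant $x+\overline{x}=1$. Stage~2 is correct and essentially the paper's argument; the only difference is that the paper reduces from $1$-bounded \emph{coverability} with a specific target $D$ and takes the reset-VASS target to be $D'(x)=D(x)$, $D'(\overline{x})=1-D(x)$, so that covering $D'$ already forces $x+\overline{x}\ge 1$ on every pair and no transfer gadget is needed. Your gadget works too, so this is a cosmetic difference.

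There is, however, a genuine gap in your Stage~1. Scaling a run by $\beta$ turns a run from $p(\bu)$ into a run from $p(\beta\bu)$; it does \emph{not} witness ``the same state-reachability instance'' unless $\bu=\bzero$. You correctly observe that the \emph{target} should be scale-invariant (hence your move to state-reachability), but the \emph{source} configuration is not, and the factor $\beta$ needed to make a particular witnessing run $1$-bounded depends on that run, so it cannot be fixed in advance as part of the reduced instance. The paper's proof of Lemma~\ref{lem:undec-cov-zero-tests-0and1} handles this with nontrivial extra machinery: fresh counters $z,\tilde z,st$ and fresh initial/final states are added so that the machine itself non-deterministically guesses $\beta$ (by firing an initialization transition with fraction $\beta$ from a configuration with $st=1$), stores $\beta$ in a counter, and uses it to cap all subsequent firing fractions; a symmetric check at the end recovers $\beta$ exactly. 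An alternative repair you might pursue is to argue that state-reachability for continuous VASS with zero-tests is already undecidable from the all-zero initial configuration (so scaling leaves the source fixed), but that claim would need its own justification and is not what Theorem~\ref{thm:cont-vass-zero-tests} states.
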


This theorem is surprising, because coverability in the usual VASS model enriched with resets
is decidable~\cite{wsts}. Hence, while the continuous semantics drastically reduces the expressive power 
in VASS, 
upon addition of resets,
the continuous semantics becomes even more powerful than the usual semantics.
%

To prove Theorem~\ref{thm:cont-vass-resets}, we first introduce the following notion.
Suppose $\mach$ is a continuous VASS \emph{with zero-tests} and $C$ and $C'$ are configurations of $\mach$.
A 1-bounded run between $C$ and $C'$ is a run in which all the counter values along the run are at most 1;
in particular this means that the counter values of $C$ and $C'$ are at most 1. The 1-bounded coverability problem for continuous VASS with zero-tests is the following: Given a tuple $(\mach,C,C')$, decide if there is a 1-bounded run from $C$ which can cover $C'$.
In order to prove Theorem~\ref{thm:cont-vass-resets}, we first prove the following lemma.
\begin{restatable}{lemma}{undeconebounded}\label{lem:undec-cov-zero-tests-0and1}
	The 1-bounded coverability problem is undecidable for continuous VASS with zero-tests.
\end{restatable}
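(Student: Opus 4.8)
The plan is to reduce from the (unrestricted) coverability problem for continuous VASS with zero-tests, which is undecidable by Theorem~\ref{thm:cont-vass-zero-tests}. Given an instance $(\mach, C_0, C_f)$ of that problem, I want to produce a new continuous VASS with zero-tests $\mach'$ together with configurations $C_0', C_f'$ such that $C_0$ covers $C_f$ in $\mach$ (by an arbitrary run) if and only if $C_0'$ covers $C_f'$ in $\mach'$ \emph{by a $1$-bounded run}. The key observation that makes this possible is the central feature of the continuous semantics already highlighted in the introduction: runs can be uniformly scaled. Concretely, if $C \act{\sigma} C'$ is a run of a continuous VASS (with zero-tests), then for every $\beta \in (0,1]$ we have $\beta C \act{\beta\sigma} \beta C'$, because each step $\bv = \bu + \alpha\bb$ scales to $\beta\bv = \beta\bu + (\beta\alpha)\bb$ (the zero-test steps are unaffected, since $\bu(i)=0 \iff \beta\bu(i)=0$). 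Hence covering is preserved under scaling: $C_0$ covers $C_f$ iff $\beta C_0$ covers $\beta C_f$ for any/all $\beta \in (0,1]$.

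The construction of $\mach'$ is then essentially a ``rescaling gadget'' prepended to $\mach$. I would add a fresh initial state $r$ and a fresh counter, say $z$, never used by $\mach$; from $r$, a single transition with additive update setting up some positive configuration leads into the old initial state of $\mach$, but the point is to exploit that the \emph{amount} by which $z$ and the primary counters get loaded is chosen by a fraction $\alpha \in (0,1]$. More carefully: pick a fresh start configuration $C_0' = r(\bzero)$ with all counters zero, and from $r$ take a transition $(r, \bzero, \bb, q_0)$ where $q_0$ is the original start state of $\mach$ and $\bb$ encodes the vector $C_0$ (i.e.\ $\bb$ has $C_0(i)$ in coordinate $i$). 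Firing this transition with fraction $\alpha$ lands us in $q_0(\alpha C_0)$ for an \emph{arbitrary} $\alpha \in (0,1]$. Now by the scaling property, from $q_0(\alpha C_0)$ one can cover $\alpha C_f$ iff from $q_0(C_0)$ one can cover $C_f$, i.e.\ iff the original instance is positive. The target is $C_f' := q_f(C_f \cdot \text{(scaled down)})$ — but here is the subtlety I need to handle: the target threshold must itself be fixed, not scaled, so I should instead ask to cover $q_f(\bzero)$, reducing to \emph{state-reachability} with a $1$-bound; but state-reachability for continuous VASS with zero-tests is also undecidable (Theorem~\ref{thm:cont-vass-zero-tests}), so this is fine, or alternatively one normalizes $C_f$ to the all-ones vector by first scaling $\mach$ so $\|C_f\|_\infty \le 1$ and noting the run can be shrunk to keep all intermediate values $\le 1$.

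The remaining work is to bound all intermediate counter values by $1$. Since $C_0, C_f$ and all additive updates of $\mach$ are fixed, any covering run of $\mach$ uses fractions in $(0,1]$ and hence the counter values along it are bounded by some constant $M$ depending only on $(\mach, C_0)$ (each step multiplies nothing — identity matrices — and adds at most $\|\bb\| \le M_0$, and the number of distinct transitions is finite, but runs can be long; so one argues instead that \emph{there exists} a covering run, fix it, let $M$ be its maximum value, then scale the whole run — including the choice of $\alpha$ in the gadget — by $1/M$). After scaling by $1/M$, every intermediate value is $\le 1$, zero-tests are preserved, and the final configuration covers $(1/M) C_f$; to make the target independent of the unknown $M$ I reduce to covering the zero vector in a designated state, i.e.\ to state-reachability. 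So the clean statement of the reduction is: $\mach$ can state-reach $q_f$ from $C_0$ iff $\mach'$ has a $1$-bounded run from $r(\bzero)$ covering $q_f(\bzero)$.

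The main obstacle, and where care is needed, is exactly this decoupling of ``the run can be shrunk'' from ``the target threshold is a fixed constant.'' One cannot simply shrink a run toward a fixed positive target, since shrinking drives the final values toward $0$. The resolution — reducing to a version of the problem whose target is the zero vector (state-reachability), for which undecidability is already available from Theorem~\ref{thm:cont-vass-zero-tests} — sidesteps this entirely, at the cost of having to verify that the gadget's nondeterministic loading step genuinely realizes ``$\alpha C_0$ for every $\alpha$'' and that no spurious $1$-bounded runs are introduced (they are not, since the fresh counter $z$ and state $r$ are only used in the gadget and the gadget is a single forward transition). I would write the formal simulation invariant — ``every $1$-bounded run of $\mach'$ from $r(\bzero)$ decomposes as the gadget step followed by a $(1/M)$-scaled run of $\mach$'' — and check both directions against it.
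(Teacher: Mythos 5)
Your proposal correctly identifies the scaling property and one of the two obstacles (a fixed positive covering target cannot survive shrinking, which you sidestep by reducing from state-reachability). But it misses the second, equally essential obstacle, and this makes the backward direction of your reduction unsound. After your loading gadget fires with fraction $\alpha$, the rest of the $1$-bounded run is an arbitrary run of $\mach$ from $q_0(\alpha C_0)$ whose firing fractions range over all of $(0,1]$. To pull such a run back to a run of $\mach$ from $q_0(C_0)$ you must scale it up by $1/\alpha$, which turns a fraction $\gamma$ into $\gamma/\alpha$, possibly exceeding $1$. So the "simulation invariant" you propose at the end — that every $1$-bounded run of $\mach'$ decomposes as the gadget step followed by a scaled run of $\mach$ — is false. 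Concretely: take one counter $x$, $C_0=(2)$, a transition from $q_0$ to $q_1$ with update $-1$, and a zero-test on $x$ from $q_1$ to $q_f$. From $q_0(2)$ the counter stays $\ge 1$ after any single firing, so $q_f$ is unreachable in $\mach$; but your $\mach'$ can load $\alpha=1/2$, reach $q_0(1)$, fire the decrement with fraction $1$ to hit $x=0$, pass the zero-test, and reach $q_f$ by a $1$-bounded run. State-reachability is simply not invariant under scaling the initial configuration when zero-tests are present, because it is not monotone.

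The paper's proof is built around exactly this point. It first observes that what \emph{is} undecidable after scaling is the existence of $\beta$ such that $\beta C$ covers $\beta C'$ by a $1$-bounded run \emph{in which all appearing fractions are at most $\beta$}; it then adds a pair of counters $z,\tilde z$ so that each simulated transition must "borrow" its fraction from $z$ (whose value is $\beta$) and return it, thereby enforcing the fraction cap syntactically. A second counter $st$, initialized to $1$ and restored to $1$ only if the final transition is fired with exactly the same fraction $\beta$, lets the target remain a genuine fixed coverability target. Your construction needs an analogue of the $z$-gadget before the backward direction can go through; without it the reduction accepts spurious instances.
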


The proof of this lemma can be found in Subsection~\ref{subsec:appendix-scaling-property}.
The intuitive idea behind this lemma is that continuous VASS with zero-tests have a \emph{scaling property},
which allows runs to be \emph{shrunk} by arbitrary fractions. More precisely, it is easy
to see that whenever $D \act{\alpha t} D'$ is a step in a continuous VASS with zero-tests
where $t$ is a transition, then so is $\beta D \act{\beta \alpha t} \beta D'$ for any $\beta > 0$
such that $\beta \cdot \alpha \le 1$. (The converse holds as well). This means
that runs and the configurations appearing in them can be shrunk down to very small values,
hence enabling us to reduce the coverability problem for continuous VASS with zero-tests
to its 1-bounded version.

Assuming that Lemma~\ref{lem:undec-cov-zero-tests-0and1} is true, we shall now prove Theorem~\ref{thm:cont-vass-resets}. 
Let $(\mach,C,D)$ be an instance of the 1-bounded coverability problem for 
continuous VASS with zero-tests.
To every counter $x$ of $\mach$, we add a \emph{complementary counter} $\overline{x}$ and modify the transitions of $\mach$ so that the sum of the values of $x$ and $\overline{x}$ is 1 at any point. 
This can be done by modifying each transition $t$ of $\mach$
so that if $t$ increments (resp. decrements) $x$ by some value $a$, then $t$ decrements (resp. increments) $\overline{x}$ by $a$ and vice versa. 
Then, we replace each zero-test of $\mach$ on the counter $x$, by a reset on the counter $x$.
Call this new machine $\mach'$. Note that $\mach'$ is a continuous VASS with resets.
Let $C'$ (resp. $D'$) be the configuration of $\mach'$ such that 
$C'(x) = C(x)$ and $C'(\overline{x}) = 1 - C(x)$ (resp. $D'(x) = D(x)$ and $D'(\overline{x}) = 1 - D(x)$)
for every counter $x$ of $\mach$. Note that since $(\mach,C,D)$ is an 
instance of the 1-bounded coverability problem, $C(x), D(x) \le 1$ for every $x$ and so $C'(\overline{x}), D'(\overline{x})$ are non-negative rationals.

If in any run of $\mach'$, a counter $x$ is reset when it already has the value 0, then this reset
does not disturb the invariant that the sum of the values of $x$ and $\overline{x}$ is 1.
On the other hand, if a run of $\mach'$ resets $x$ when it has a positive value $\alpha > 0$,
then this reset will make the sum of the values of $x$ and $\overline{x}$ to be $1 - \alpha$.
Since all the other transitions either reset a counter or maintain the sum of $x$ and $\overline{x}$ to be a constant, it would follow that such a run can never cover the configuration $D'$. 
Hence, in a run which covers $D'$ from $C'$ in $\mach'$, a counter $x$ is reset only when it has the value 0. Since we have replaced zero-tests in $\mach$ with resets in $\mach'$, it then 
follows that any run that covers $D$ from $C$ in $\mach$ can be uniquely mapped to a run
that covers $D'$ from $C'$ and vice-versa. This then completes the proof of Theorem~\ref{thm:cont-vass-resets}.

\subsubsection*{Stage 2: Reduction from resets to zero-row/zero-column matrices. }
We now show that affine classes that contain a matrix with a zero-row/column can simulate resets.
To this end, let $\cC$ be a non-negative class such that $\cC$ contains a matrix $\bA$ (of size $k \times k$ for some $k$) which either has a 
zero-row or a zero-column. Let $j$ be the row or column of $\bA$ which is completely zero.
Note that 
\begin{multline}\label{eq:two}
	\text{If $j$ is a row then for any $\bu \in \qnz^k$, }\bA \cdot \bu  = \bv \\ \text{ where } \bv \in \qnz^k \text{ satisfies } \bv(j) = 0
\end{multline}
and
\begin{equation}\label{eq:three}
	\text{If $j$ is a column then for any $\lambda \in \qnz$, } \bA \cdot \lambda \bunit_j = \bzero
\end{equation}

Equation~\ref{eq:two} tells us that we can use the matrix to always reset the counter $j$.
Equation~\ref{eq:three} tells us that as long as all the counters apart from $j$
on which we are applying $\bA$ have the value 0, then we can reset the counter $j$.
We can then exploit these equations to 
give a reduction from continuous VASS with resets to $\cC$-continuous VASS,
similar in style to the proof of Theorem~\ref{thm:undec-negative-entries}. 
For full details, we refer the reader to Section~\ref{subsec:appendix-stage2} of the appendix.

\subsection{Reachability for non-negative classes with weighted/overlapping edges}

We now move on to our final undecidability result. Our main result of this subsection is that

\begin{restatable}{theorem}{undecweight}~\label{thm:undec-weight-overlap-edges}
	The reachability problem for $\cC$-continuous VASS is undecidable if $\cC$ is a non-negative
	class which contains a matrix with a weighted edge or two overlapping edges.
\end{restatable}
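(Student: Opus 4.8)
The plan is to reduce from the $1$-bounded coverability problem for continuous VASS with zero-tests, which is undecidable by Lemma~\ref{lem:undec-cov-zero-tests-0and1}; a cleanup gadget will convert covering into reaching. First I would dispose of one case: if $\cC$ contains a matrix with a zero row or a zero column, Theorem~\ref{thm:undec-zero-row-column} already applies, so from now on I assume every matrix of $\cC$ has no zero row and no zero column, hence every column of every matrix of $\cC$ has coordinate-sum at least $1$. Fix a $k \times k$ matrix $\bA \in \cC$ witnessing the hypothesis: a weighted edge ($\bA(a,b) \ge 2$) or two overlapping edges ($\bA(a,c),\bA(b,c) \ge 1$ with $a \neq b$) both force some column of $\bA$ to have coordinate-sum at least $2$; call such a column \emph{heavy}. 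The driving fact is $\sum_{r} (\bA\bu)(r) = \sum_{s} c_s\,\bu(s)$, where $c_s := \sum_{r} \bA(r,s)$ is the $s$-th column sum: multiplying a non-negative vector by $\bA$ never decreases its coordinate-sum and strictly increases it as soon as the coordinate lying at a heavy column is positive. This monotonicity is the substitute, for a non-negative class, of the ``a negative entry disables the step'' mechanism used in Theorem~\ref{thm:undec-negative-entries}.

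Given a $d$-continuous VASS with zero-tests $\mach$, I would build a $\cC$-continuous VASS $\mach'$ of dimension $d(k+1)$ organised into $d$ disjoint \emph{blocks}, block $j$ handling counter $j$ of $\mach$. Inside block $j$, one coordinate $x_j$ sits at a heavy column of $\bA$, one coordinate $\overline{x_j}$ is the extra diagonal coordinate coming from $\ext{\bA}{1}$ (so the column at $\overline{x_j}$ is a unit vector), and the remaining $k-1$ coordinates are \emph{dummies}. The invariant to be maintained is: all dummies equal $0$ and $x_j + \overline{x_j} = 1$, i.e.\ each block has coordinate-sum $1$. A transition $(p,\Delta,q)$ of $\mach$ is simulated by a transition of $\mach'$ with identity matrix whose additive update adds $\Delta(j)$ to $x_j$ and subtracts $\Delta(j)$ from $\overline{x_j}$ for every $j$; fired at a fraction $\alpha$, this step is enabled exactly when $0 \le x_j + \alpha\Delta(j) \le 1$ for all $j$, which is precisely the condition that the corresponding step of $\mach$ stay $1$-bounded --- this is why the source problem must be the $1$-bounded one. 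A zero-test $(p,j,q)$ of $\mach$ is simulated by a single transition of $\mach'$ applying $\ext{\bA}{1}$ to the coordinates of block $j$ and the identity elsewhere, a matrix in $\cC$ since $\bA \in \cC$ and $\cC$ is closed under extension and counter renaming; when $x_j = 0$ and the dummies of block $j$ are $0$, the block vector is the unit vector at $\overline{x_j}$ and $\ext{\bA}{1}$ fixes it, whereas if $x_j > 0$ the block's coordinate-sum strictly increases by the heavy column. Finally, from the target state of $\mach$ I add a fresh ``cleanup'' state bearing self-loops that move value from $x_j$ to $\overline{x_j}$; the source configuration of $\mach'$ encodes the source configuration of $\mach$ block-by-block with dummies $0$, and the target configuration encodes the target of the coverability instance in the cleanup state with dummies $0$.

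Correctness splits as usual. Forward: a $1$-bounded run of $\mach$ covering the target lifts step by step --- transitions to their simulations, zero-tests (necessarily taken with $x_j = 0$) to the $\ext{\bA}{1}$-transitions, then cleanup --- to a run of $\mach'$ reaching its target, the invariant being preserved throughout. Backward: since no transition of $\mach'$ can decrease the coordinate-sum of any block and all blocks start at sum $1$, any run that reaches the (fixed) target configuration, whose blocks all have sum $1$, must keep every block at sum exactly $1$; a parallel induction then forces the dummies to be $0$ at every configuration and forces every $\ext{\bA}{1}$-transition to have been taken with $x_j = 0$ (otherwise the block's sum jumps strictly above $1$ and, by monotonicity, can never come back). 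Hence the run of $\mach'$ projects onto a genuine $1$-bounded run of $\mach$ from the prescribed source --- the counter values of $\mach$ being exactly the $x_j$-values --- whose final values dominate the coverability target, since the cleanup self-loops can only decrease the $x_j$. The reduction is polynomial, since $k$ and the entries of $\bA$ are constants of $\cC$.

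I expect the main obstacle to be conceptual: realising that the right monotone potential is the per-block coordinate-sum, and that an arbitrary weighted or overlapping edge --- about which one controls nothing else in $\bA$ --- can still be harnessed into a usable one-sided, sum-non-decreasing zero-test by using the single fresh coordinate of $\ext{\bA}{1}$ as the complementary counter and the remaining original coordinates of $\bA$ as always-zero dummies. The accompanying technical point is the bookkeeping that the dummies stay $0$ along every run reaching the target, which is exactly what makes precise the slogan that a mis-applied zero-test ``permanently breaks the invariant''.
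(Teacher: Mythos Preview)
Your proposal is correct and follows essentially the same approach as the paper. The only cosmetic differences are that you reduce from $1$-bounded \emph{coverability} and add a cleanup gadget, whereas the paper reduces directly from $1$-bounded \emph{reachability} (which it obtains from Lemma~\ref{lem:undec-cov-zero-tests-0and1} via the standard coverability-to-reachability transformation of Theorem~\ref{thm:reach-cov-state}); and you place the complementary counter $\overline{x_j}$ at the fresh coordinate introduced by $\ext{\bA}{1}$, whereas the paper uses $\apply{n}{\bA}{\cdot}$ with the complementary counter adjacent but outside the range touched by $\bA$. Your two-step backward argument (first the per-block sum is monotone and hence pinned at $1$, then an induction forces dummies to $0$ and zero-tests to be honest) is exactly the paper's good/bad-configuration invariant unpacked.
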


The proof of this theorem will combine ideas from the previous two reductions. More specifically, we prove this theorem by giving a reduction from the 1-bounded reachability problem for continuous VASS with zero-tests. 
In this problem, we are given a continuous VASS with zero-tests $\mach$ and two configurations
$C,D$, and we have to check if $C$ can reach $D$ by a run in which all the counter values are at most 1 at any point.
In Lemma~\ref{lem:undec-cov-zero-tests-0and1}, we showed that the 1-bounded \emph{coverability} problem for continuous VASS with zero-tests is undecidable. By mimicking the transformation given in Theorem~\ref{thm:reach-cov-state}, 
it then follows that the 1-bounded reachability problem for continuous VASS with zero-tests is also undecidable.

\begin{proof}[Proof of Theorem~\ref{thm:undec-weight-overlap-edges}]
	Let $\cC$ be a non-negative class. We can assume that $\cC$ does not have any matrices with zero-rows/columns,
	as otherwise Theorem~\ref{thm:undec-zero-row-column} applies.
	Let $\bA$ be a matrix (of size $k \times k$ for some $k$) which either has a weighted edge
	or two overlapping edges. This implies that there is an index $z$ such that 
	$\sum_{1 \le i \le k} \bA(i,z) > 1$. This then allows us to deduce 
	the following equation. 
		
	\begin{equation}\label{eq:four}
		\bA \cdot \bu = \begin{cases}
			\bzero & \text{if } \bu = \bzero\\
			\bv & \text{with } \sum_{1 \le i \le k} \bv(i) > \sum_{1 \le i \le k} \bu(i) \text{ if } \bu(z) > 0
		\end{cases}	
	\end{equation}

	Intuitively, this equation gives us an ability to perform a zero-test on the $z^{th}$ counter.
	As in the reductions before, we can think of the $z^{th}$ counter as a primary counter
	and all the other $k-1$ counters as dummy counters holding the value 0.
	In our reduction below, we will pair the $z^{th}$ counter with a complementary counter $\overline{z}$
	such that any additive increase in the $z^{th}$ counter will be accompanied by an additive
	decrease in its complementary counter and vice-versa. This will ensure that the 
	sum of the values of $z, \overline{z}$ and the $k-1$ dummy counters
	will be 1 with respect to the additive updates.
	If at some point, the $z^{th}$ counter along with all the dummy counters have value 0, 
	then multiplying by 
	$\bA$ has no effect. On the other hand, if the $z^{th}$ counter has a non-zero value,
	then multiplying by $\bA$ has the effect of strictly increasing the sum of the values
	of these counters. 
	Since, the overall sum of the values
	of these counters in the final configuration will also be 1, 
	it follows that in any run between 
	the initial and the final configuration, we can multiply our current valuation with the matrix $\bA$ at some point
	only if the value of $z$ at that point is 0, which enables us to do a zero-test.
	Hence, we can then give a reduction from the 1-bounded reachability problem
	for continuous VASS with zero-tests in a style similar to Theorem~\ref{thm:undec-negative-entries}, which will prove the required
	undecidability result. 
	For full details, we refer the reader to Section~\ref{sec:appendix-weight-overlap-edges} of the appendix.	
	
	Note that this same reduction will \emph{not} work for coverability, because in coverability it is permissible to have strictly higher values than the target configuration.
\end{proof}

Let us now look at the big picture for all of our undecidability results.
We have shown all the undecidability results mentioned in Theorems~\ref{thm:main-state}
and~\ref{thm:main-cov}. As for the reachability problem, 
the discussion given in Subsection~\ref{subsec:contribution-part-I} tells
us that if $\cC$ contains a matrix $\bA$ that is not the
permutation matrix, then $\bA$ either has a negative entry, or has a zero-row/column,
or has weighted/overlapping edges. It then
follows that we have also proved all the undecidability results for the reachability problem mentioned in Theorem~\ref{thm:main-reach}.

\section{Decidability results and complexity upper bounds}\label{sec:dec}

We now prove all of our decidability results. 
We begin by showing decidability of state reachability for non-negative classes.

\subsection{State-reachability for non-negative classes}\label{subsec:dec-state-reach}

Our main result of this subsection is that
\begin{theorem}\label{thm:dec-state-reach}
	The state-reachability problem for $\cC$-continuous VASS is in \PSPACE \  if $\cC$ contains only non-negative matrices.
\end{theorem}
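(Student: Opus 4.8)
The plan is to build a finite \emph{support abstraction} of the given $\cC$-continuous VASS $\mach$ and show that state-reachability in $\mach$ is equivalent to reachability of a suitable node in this finite graph, while arguing that this graph can be explored in polynomial space. A node of the abstraction is a pair $(p, S)$ where $p \in Q$ and $S \subseteq [d]$ is intended to record exactly the set of counters with a strictly positive value. The starting node is $(p, \supp{\bu})$ and the target nodes are all $(q, S)$ for arbitrary $S \subseteq [d]$. The number of nodes is $|Q| \cdot 2^d$, so a single node has polynomial-size description and a nondeterministic search through the graph needs only polynomial space; by Savitch's theorem this gives \PSPACE.

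The key technical point is to define the edge relation $(p,S) \act{t} (p', S')$ for a transition $t = (p, \bA, \bb, q)$ so that it is both \emph{sound} (every concrete step $p(\bu) \act{\alpha t} q(\bv)$ with $\supp{\bu} = S$ satisfies $\supp{\bv} = S'$ for the $S'$ chosen by some abstract edge) and \emph{complete} (whenever an abstract edge $(p,S) \act{t} (p',S')$ exists, there is a concrete witness $p(\bu) \act{\alpha t} q(\bv)$ with $\supp{\bu} = S$, $\supp{\bv} = S'$, \emph{and} all intermediate positivity constraints realizable along a whole path). Since $\bA$ is non-negative, for a counter $i$ the value $\bv(i) = (\bA\bu)(i) + \alpha \bb(i)$ is a sum of non-negative contributions (from $\bA$ applied to the positive coordinates of $\bu$) plus $\alpha \bb(i)$; the sign of $\bv(i)$ is therefore controlled by whether $\bA$ has a positive entry $\bA(i,k)$ with $k \in S$, and by the sign of $\bb(i)$. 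Concretely: if $\bb(i) > 0$, then $i \in S'$ is forced (pick $\alpha$ small); if $\bb(i) < 0$ and $\bA(i,k) = 0$ for all $k \in S$, then the step can only fire if we also land with $\bv(i) = 0$, which constrains $\alpha$; and if $\bA$ has a positive entry into row $i$ from a column in $S$, we may put $i$ into $S'$ or keep it out depending on whether the additive part cancels it. I would make the abstract edge carry, in addition to $S'$, the information of which counters in $S'$ are ``forced positive'' regardless of $\alpha$ and which are ``tunable,'' so that a path of abstract edges can be concretely instantiated by choosing a single geometric-progression schedule of fractions; here the scaling/linearity arguments already used elsewhere in the paper (runs can be uniformly rescaled) let one combine the per-step choices into one global continuous run.

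The main obstacle I anticipate is \emph{completeness of a whole path}, not of a single step: choosing $\alpha$ to realize a given support change at one step may be incompatible with the positivity requirements one step later (e.g. one step wants $\bv(i)=0$ exactly, the next wants $\bv(i)>0$). To handle this I would prove a normalization/amalgamation lemma: given an abstract path $(p_0,S_0) \act{t_1} (p_1,S_1) \act{t_2} \cdots \act{t_m} (p_m,S_m)$, one can pick the fractions $\alpha_1, \dots, \alpha_m$ — and, crucially, an initial concrete vector $\bu_0$ with $\supp{\bu_0} = S_0$ that may differ from the original $\bu$ but is still reachable (using that state-reachability, unlike reachability, lets us forget exact values and only track supports) — so that all constraints hold simultaneously. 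The trick is that a counter is ever forced to value exactly $0$ only at a step where $\bA$ has a zero column restricted to the current support; right after such a step the counter is $0$, and non-negativity plus a fresh choice of a very small $\alpha$ at the \emph{next} step that adds to it (or a $\bB$-entry from a positive column) can restore positivity, with no interference backwards. Formalizing exactly when the supports can be chosen freely versus when they are forced, and checking that the forced/tunable bookkeeping is closed under composition along paths, is where the real work lies; everything else (the \PSPACE\ bound, reduction to graph reachability, instantiation of continuous runs by scaling) is routine given the machinery already developed for continuous VASS.
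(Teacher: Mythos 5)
Your overall plan --- a support-abstraction graph over $Q \times 2^{\{1,\dots,d\}}$, explored on the fly in nondeterministic polynomial space and determinized via Savitch's theorem --- is exactly the paper's. But there is a genuine gap in the correctness argument, and it sits precisely where you say ``the real work lies.'' You set up the abstraction so that $S$ records \emph{exactly} the set of positive counters, and you are then forced to confront the case where realizing a prescribed support change requires tuning $\alpha$ so that some counter lands exactly on $0$; your proposed fix (forced/tunable bookkeeping plus a normalization/amalgamation lemma along paths) is left unproved, and it is not clear it closes: the value of $\alpha$ needed to zero a counter depends on the concrete counter values, not just on the support, and two counters may demand incompatible values of $\alpha$ at the same step.

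The paper sidesteps all of this with one observation you gesture at but do not exploit: for state-reachability over non-negative matrices, a larger support never hurts. A transition $t=(p,\bA,\bb,q)$ is firable from $p(\bu)$ iff for every additively decremented counter $x$ one has $\supp{t}^-_x \cap \supp{\bu} \neq \emptyset$, a condition monotone in $\supp{\bu}$; and one never \emph{needs} to drive a counter to exactly $0$, since zero counters can only disable future transitions. Accordingly, the paper's two correctness statements are asymmetric: soundness says a concrete step $p(\bu)\act{\alpha t}q(\bv)$ induces the abstract edge $(p,\supp{\bu})\act{t}(q,\supp{\bv})$ with exact supports, while completeness says that given an abstract edge $(p,S)\act{t}(q,S')$ and \emph{any} concrete $p(\bu)$ with $S \subseteq \supp{\bu}$ (inclusion, not equality), a sufficiently small $\alpha$ yields a step to some $q(\bv)$ with $S' \subseteq \supp{\bv}$. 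With these, the induction along a path is immediate --- no amalgamation lemma, no global schedule of fractions, no tunable counters. If you replace ``exactly the set of positive counters'' by ``a lower bound on the set of positive counters'' in the completeness direction and drop from the edge relation the option of tuning a counter down to zero, your argument becomes the paper's.
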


Let $\cC$ be a class containing only non-negative matrices and 
let $\mach = (Q,T)$ be a $d$-dimensional $\cC$-continuous VASS. 
For every counter $x$ and every transition $t = (p,\bA,\bb,q)$ of $\mach$,
we say that 
\begin{itemize}
	\item $x$ is additively incremented by $t$ if $\bb(x) > 0$.
	\item $x$ is additively decremented by $t$ if $\bb(x) < 0$. In this case,
	we let $\supp{t}^-_x$ be the set $\{y : \bA(x,y) > 0\}$.
\end{itemize} 
The intuition behind additive increments/decrements is that if $x$ is additively incremented/decremented
by $t$, then the additive update of $t$ increases/decreases the value of $x$.
The intuition behind $\supp{t}^-_x$ is that, 
if $x$ is additively decremented by $t$, then to fire $t$ at some point, 
there must be some counter $y \in \supp{t}^-_x$ 
which has a non-zero value at that point. 
We will expand upon these intuitions in the upcoming paragraphs, but first, we state 
the main object of our algorithm, namely the \emph{support abstraction}.

Corresponding to $\mach$, we associate an edge-labelled graph $SU_\mach$, called its support abstraction, as follows.
The set of vertices of $SU_\mach$ will be $Q \times 2^{\{1,\dots,d\}}$. Intuitively, each configuration $p(\bu)$ of $\mach$ corresponds to the vertex $(p,\supp{\bu})$ of $SU_\mach$. 
The set of edges of $SU_\mach$ (and its labels) are as follows. 
Suppose $t = (p,\bA,\bb,q)$ is a transition of $\mach$. Then, there exists an edge between vertices $(p,S)$ and $(q,S')$ labelled by $t$ if and only if 
for all counters $x$, 
\begin{itemize}
	\item If $x$ is additively decremented by $t$, then $\supp{t}^-_x \cap S \neq \emptyset$
	\item If $x$ is additively incremented by $t$, then $x \in S'$ and
	\item If $x \in S'$ then either $x$ is additively incremented by $t$ or $\supp{t}^-_x \cap S \neq \emptyset$.
\end{itemize}

The intuitive idea behind this edge relation is as follows: Suppose $p(\bu) \act{\alpha t} q(\bv)$
is a step in $\mach$ with $t = (p,\bA,\bb,q)$. Then $\bv = \bA \bu + \alpha \bb$. 
If $x$ is additively decremented by $t$ this means that $\alpha \bb(x) < 0$, and so
it must be the case that, $(\bA \bu)(x) > 0$. Since $\bA$ is a non-negative matrix, the latter condition is equivalent to 
$\supp{t}^-_x \cap \supp{\bu} \neq \emptyset$. 
Furthermore, notice that if $\bb(x) > 0$, then so is $\bv(x)$ and if $\bv(x) > 0$ then
either $(\bA \bu)(x) > 0$ or $\bb(x) > 0$. This argument, along with
the definition of edges in the support graph then allows us to conclude that
\begin{quote}
	\textsc{Soundness: } Suppose $p(\bu) \act{\alpha t} q(\bv)$ is a step in $\mach$.
	Then $(p,\supp{\bu}) \act{t} (q,\supp{\bv})$ is an edge in the support abstraction.
\end{quote}

Furthermore by analysing the construction of the support graph, we can show that 
\begin{quote}
	\textsc{Completeness: } Suppose $(p,S) \act{t} (q,S')$ is an edge in the support abstraction 
	$SU_\mach$
	and suppose $p(\bu)$ is a configuration of $\mach$ such that $S \subseteq \supp{\bu}$. Then, 
	there exists a configuration $q(\bv)$ and a fraction $\alpha$ such that $S' \subseteq \supp{\bv}$ and
	$p(\bv) \act{\alpha t} q(\bv)$ is a step in $\mach$.
\end{quote}

Combining these two properties along with a simple induction on the length of a run 
allows us to prove the following:
\begin{quote}
	$p(\bu)$ can reach a configuration whose state is $q$ in $\mach$ if and only if $(p,\supp{\bu})$
	can reach some vertex of the form $(q,S)$ in the graph $SU_\mach$.
\end{quote}

For more details behind the proof of correctness of these claims, we refer the reader to Section~\ref{sec:appendix-dec-state-reach}.

Hence, to decide state reachability in $\mach$, it suffices to explore the support abstraction graph. By standard techniques of constructing this graph on-the-fly in a non-deterministic manner,
we get a non-deterministic polynomial space algorithm for deciding state reachability of $\mach$.
Applying Savitch's theorem then gives us a \PSPACE \ algorithm for deciding state reachability.

\subsection{Coverability for non-negative classes with only self-loop matrices}\label{subsec:dec-cov}

The main result of this subsection is that 
\begin{theorem}\label{thm:dec-cov-self-loop}
	The coverability and state-reachability problems for $\cC$-continuous VASS is in NP \ if $\cC$ is a non-negative class
	that only contains self-loop matrices.
\end{theorem}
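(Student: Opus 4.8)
The plan is to prove that coverability (and hence, by Theorem~\ref{thm:reach-cov-state}, state-reachability) for $\cC$-continuous VASS with only non-negative self-loop matrices is in \NP by reducing it to coverability for ordinary continuous VASS, which is known to be \NP-complete~\cite{blondinLogicsContinuousReachability2017}. The key conceptual insight is the one sketched in the introduction: because a self-loop matrix $\bA$ has every diagonal entry non-zero, any counter $x$ whose value is positive when $\bA$ is applied \emph{stays} positive after the application (indeed $(\bA\bu)(x) \ge \bA(x,x)\cdot\bu(x) > 0$), and moreover $x$ ``inherits'' contributions from every counter $y$ with $\bA(x,y) > 0$. The first step is to make precise the notion of a \emph{pumpable} counter: a counter $x$ is pumpable along a transition $t = (p,\bA,\bb,q)$ if either $\bA(x,x) \ge 2$ (it doubles, or more) or there is some $y \neq x$ with $\bA(x,y) > 0$ and $\bA(y,y) \ge 1$ (it absorbs another live counter's value while retaining its own). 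More globally, I would define a counter to be pumpable in $\mach$ if it lies on a cycle (in the appropriate reachability-of-support sense) along which it gets pumped; the support abstraction graph $SU_\mach$ from Section~\ref{subsec:dec-state-reach} is the natural bookkeeping device here, and its size being polynomial is what keeps us in \NP.

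First I would establish a \emph{pumping lemma}: given a run $\rho$ from $p(\bu)$ back to $p(\bu')$ that visits a configuration in which some set $P$ of counters is simultaneously positive and is ``pumped'' along $\rho$, one can modify $\rho$ — by concatenating many rescaled copies $\beta_1\rho, \beta_2\rho,\dots$ of it with geometrically decreasing fractions — to obtain a run from $p(\bu)$ to some $p(\bu'')$ where every counter in $P$ attains an arbitrarily large value while the values of all counters \emph{outside} $P$ are left essentially unchanged (up to the controllable errors from the additive updates, which shrink with $\beta_i$). The crucial point that makes this work in the continuous-but-affine setting, and which fails for reachability, is that pumping depends only on the \emph{matrix} part of each transition, not on the additive update $\bb$; so by choosing the scaling fractions $\beta_i$ small we make the additive perturbations negligible while the multiplicative growth of pumpable counters compounds. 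I would prove this by induction on the structure of the run, being careful that the non-negativity constraint $\bA\bu + \alpha\bb \ge \bzero$ is preserved throughout — this is where the self-loop condition does real work, since it guarantees that once a pumpable counter is made large it remains a ``reservoir'' from which the additive decrements along the run can always be paid.

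Next I would use this lemma to argue that, for the coverability question, pumpable counters can be \emph{eliminated}: if the target configuration $q(\bv)$ requires some pumpable counter $x$ to have value $\ge \bv(x)$, then as long as there is \emph{some} witnessing run reaching a state-$q$ configuration with $x$ positive (which is detectable via the support abstraction), we can, by the pumping lemma, boost $x$ past $\bv(x)$ without disturbing the other counters. So coverability of $q(\bv)$ reduces to: (i) a reachability-of-support condition in $SU_\mach$ witnessing that all pumpable-and-required counters can be kept live simultaneously, checkable in \NP; and (ii) coverability, in the sub-system obtained by projecting away the pumpable counters, of the restriction of $\bv$ to the non-pumpable counters. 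Here I would prove the key structural fact that the projection of $\mach$ onto its non-pumpable counters behaves like an ordinary continuous VASS — more precisely, a non-negative self-loop matrix restricted to counters none of which is pumpable must act as the identity on that sub-block (any off-diagonal $\bA(x,y)>0$ with $x\neq y$ non-pumpable forces $\bA(y,y)=0$, contradicting the self-loop property; and $\bA(x,x)\ge 2$ would make $x$ pumpable), so on the non-pumpable counters $\mach$ performs only additive continuous-VASS updates. Then part (ii) is just continuous-VASS coverability, solvable in \NP, and the whole procedure is in \NP: guess the relevant support-reachability certificates and the continuous-VASS coverability certificate, and verify.

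The main obstacle I expect is the pumping lemma itself, specifically making the error-control argument fully rigorous: when we splice together rescaled copies $\beta_i\rho$ of the original run, the additive updates contribute perturbations to \emph{all} counters (pumpable and not), and we need the non-pumpable counters' values to converge to a prescribed target rather than drift. This requires a delicate choice of the fractions $\beta_i$ (and possibly interleaving ``correction'' steps using the original additive transitions) so that the telescoping sum of perturbations on non-pumpable counters is exactly absorbable, all while never violating non-negativity at any intermediate configuration. A secondary subtlety is handling counters that are ``eventually pumpable'' only after first being made positive by another pumpable counter — this chaining must be ordered correctly, which is again where reasoning through the support abstraction graph, rather than directly on configurations, keeps the argument manageable. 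Full details, including the careful restructuring of the continuous-VASS techniques of~\cite{blondinLogicsContinuousReachability2017} needed to phrase the eliminated system's coverability relation in the existential theory of $(\Q,+,\le)$, would be deferred to the appendix.
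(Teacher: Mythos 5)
Your overall architecture coincides with the paper's: identify counters that can be pumped, prove a pumping lemma by iterating rescaled copies of a cyclic run (exploiting that pumping depends only on the matrices, so the fractions can be made tiny), then eliminate the pumped counters and observe that the residual system behaves like an ordinary continuous VASS, whose coverability is in \NP. Your structural observation that a self-loop matrix acts as the identity on the rows of non-pumpable counters is exactly the fact the paper uses to justify the projection $\mach_X$, and your pumping lemma is the paper's Pumping Counters Lemma (Lemma~\ref{lem:pumping-counters}).

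There is, however, a genuine gap in the step where you combine the certificates. You reduce coverability to two conditions checked \emph{independently}: (i) a support-abstraction condition witnessing that the pumpable counters can be kept live, and (ii) coverability of $\bv$ restricted to the non-pumpable counters in the projected continuous VASS. This is not sound as stated, because the run realizing (i) and the run realizing (ii) need not be combinable into a single run of $\mach$: the pumping phase perturbs the non-pumpable counters through its additive updates, and the covering run for (ii) must be fireable from wherever the pumping phase actually lands, not from $p(\bu_X)$. The paper's Coverability Characterization Lemma (Lemma~\ref{lem:cov-charac}) makes this compatibility explicit by (a) first decomposing general coverability into same-state cyclic coverability via a witness sequence, and (b) requiring that the pumping sequence $\pi_{\textsf{fwd}}$ and the continuous-VASS sequence $\pi'$ use \emph{exactly the same set of transitions} ($\supp{\pi'}=\supp{\pi_{\textsf{fwd}}}$); the proof then runs a scaled-down Rep-Half of $\pi_{\textsf{fwd}}$ in both $\mach$ and $\mach_X$ in lockstep to show the pumped configuration dominates the starting point of the covering run. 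Your proposal omits both ingredients. Relatedly, your notion of pumpability oscillates between a syntactic one (some matrix has an off-diagonal entry in the counter's row) and the semantic one actually needed for soundness (the source counter is positive when that matrix is applied, and still positive at the end of the cycle so the cycle can be iterated); the support abstraction graph alone does not certify the semantic condition in a form compatible with the covering run, which is why the paper instead guesses the transition support $S$ and the counter set $X$ and certifies admissibility-plus-pumping via the first-occurrence characterization of Lemma~\ref{lem:far}. Your worry about making non-pumpable counters ``converge to a prescribed target'' is, by contrast, a non-issue: for coverability, upward drift is harmless by monotonicity (Lemma~\ref{lem:basic-facts}).
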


Note that by Theorem~\ref{thm:reach-cov-state}, we need only prove the above theorem for coverability.
We will prove this by capturing the coverability relation of such a $\cC$-continuous VASS as a formula in existential linear rational arithmetic (ELRA). Here ELRA is the existential theory of 
the rationals equipped with the addition and order operations.
Since satisfiability of this existential theory can be done in \NP~\cite{sontagRealAdditionPolynomial1985},
Theorem~\ref{thm:dec-cov-self-loop} would then follow.

We capture the coverability relation in ELRA by adapting some techniques developed for continuous VASS.  
Let us fix a class $\cC$ that contains only non-negative self-loop matrices and let us fix a  $d$-dimensional $\cC$-continuous VASS $\mach$ for the rest of this section. We first state a couple of basic facts
regarding $\mach$.

\begin{restatable}{lemma}{basicfacts}\label{lem:basic-facts}
	The following statements are true for any firing sequence $\pi$ and any fraction $\alpha \in (0,1]$:
	\begin{enumerate}
		\item\label{item:a} Suppose $p(\bu) \act{\pi} q(\bv)$. Then, $p(\alpha \bu) \act{\alpha \pi} q(\alpha \bv)$.
		\item\label{item:b} Suppose $p(\bu) \act{\pi} q(\bv)$. Then, for any $\bw \in \qnz^d$, 
		$p(\bu + \bw) \act{\pi} q(\bv')$
		where $\bv' \ge \bv + \bw$.
		\item\label{item:c} Suppose $p(\bu) \act{\pi} q(\bv)$. Then, $p(\bu) \act{\alpha \pi} q(\bv')$
		where $\bv' \ge \alpha \bv + (1-\alpha) \bu$.
	\end{enumerate}
\end{restatable}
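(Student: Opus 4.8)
The plan is to prove all three statements by induction on the length of the firing sequence $\pi$, so in each case the only real work is to establish the single-step version, and then the inductive step follows by transitivity of $\act{}$ together with a little bit of monotonicity bookkeeping. Throughout, write a transition $t = (p, \bA, \bb, q)$ and recall that a step $p(\bx) \act{\beta t} q(\by)$ means exactly $\by = \bA\bx + \beta \bb \ge \bzero$, where $\bA$ is a non-negative self-loop matrix. The self-loop property (every diagonal entry is $\ge 1$, since entries are integers and nonzero) is what will drive the inequalities, because it gives $\bA \bx \ge \bx$ componentwise for every $\bx \in \qnz^d$; more generally, since $\bA$ is non-negative, $\bx \le \bx'$ implies $\bA\bx \le \bA\bx'$.

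For statement~\ref{item:a}, the single step is immediate: if $\by = \bA\bx + \beta\bb \ge \bzero$ then $\alpha\by = \bA(\alpha\bx) + (\alpha\beta)\bb$, and $\alpha\by \ge \bzero$, so $p(\alpha\bx) \act{(\alpha\beta) t} q(\alpha\by)$; since $\alpha\beta \in (0,1]$ this is a legal step, and over a whole sequence we get exactly $\act{\alpha\pi}$. For statement~\ref{item:b}, again do one step: given $p(\bx) \act{\beta t} q(\by)$ with $\by = \bA\bx + \beta\bb$, consider starting from $\bx + \bw$. Then $\bA(\bx+\bw) + \beta\bb = \by + \bA\bw \ge \by + \bw \ge \bzero$ (the first inequality uses the self-loop property $\bA\bw \ge \bw$, the second uses $\by \ge \bzero$ and $\bw \ge \bzero$), so the step $p(\bx+\bw) \act{\beta t} q(\by')$ is enabled with $\by' = \by + \bA\bw \ge \by + \bw$. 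The inductive step then needs the monotonicity observation that if $\by' \ge \by + \bw$ and $p(\by) \act{\pi'} q'(\bz)$, then running $\pi'$ from $\by'$ lands in some $\bz' \ge \bz + \bw$ — which is just statement~\ref{item:b} applied to the tail with the "extra mass" $\by' - \by \ge \bw$, combined with the fact (provable by the same one-step argument, dropping the $\bw$) that extra nonnegative mass in the source only ever helps and is never destroyed. So \ref{item:b} is genuinely a statement one proves by induction where the hypothesis must be carried in the $\ge$ form.

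For statement~\ref{item:c}, the cleanest route is to combine the first two: from $p(\bu) \act{\pi} q(\bv)$, statement~\ref{item:a} gives $p(\alpha\bu) \act{\alpha\pi} q(\alpha\bv)$; now write $\bu = \alpha\bu + (1-\alpha)\bu$ and apply statement~\ref{item:b} with the extra mass $\bw = (1-\alpha)\bu \ge \bzero$ to the run $p(\alpha\bu) \act{\alpha\pi} q(\alpha\bv)$, obtaining $p(\bu) \act{\alpha\pi} q(\bv')$ with $\bv' \ge \alpha\bv + (1-\alpha)\bu$, exactly as claimed. Here I should double-check that "$\alpha\pi$" is a legal firing sequence, i.e. that each scaled fraction stays in $(0,1]$, which it does since the original fractions are in $(0,1]$ and $\alpha \in (0,1]$.

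The main obstacle — really the only place that needs care — is statement~\ref{item:b}: one must be disciplined about stating the inductive hypothesis with the inequality $\bv' \ge \bv + \bw$ rather than equality, and about the fact that the "surplus" in the source configuration can grow (because of the matrix) but never shrink, so that it is still at least $\bw$ at the end. Statements~\ref{item:a} and~\ref{item:c} are then essentially corollaries. It is worth flagging explicitly in the proof that the self-loop hypothesis on $\cC$ is used precisely in the inequality $\bA\bw \ge \bw$, since this is exactly the feature that fails for general non-negative matrices (e.g. a reset matrix) and is the reason the lemma — and hence the \NP\ upper bound — holds for self-loop classes but not for arbitrary non-negative ones.
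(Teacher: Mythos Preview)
Your proposal is correct and matches the paper's own proof essentially step for step: both argue by induction on the length of $\pi$, handle the single-step case using $\bA\bw \ge \bw$ from the self-loop property, and derive \ref{item:c} from \ref{item:a} and \ref{item:b} via the decomposition $\bu = \alpha\bu + (1-\alpha)\bu$. Your write-up is in fact slightly more explicit than the paper's about the inductive bookkeeping in \ref{item:b} and about where the self-loop hypothesis is used.
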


Indeed in all these cases, the proofs are immediate when $\pi$ is a single step. 
The general case then follows by induction on the length of $\pi$.
A formal proof can be found in Section~\ref{subsec:appendix-basic-facts} of the appendix.

Now we move on to proving our main result, i.e., Theorem~\ref{thm:dec-cov-self-loop}.
To prove the result, first we introduce a notation. We say that an ELRA formula $\phi$ can be
computed in NP, if there is a Turing machine running
in non-deterministic polynomial time such that every accepting path of the machine
computes a \emph{disjunct} of the formula $\phi$. Hence, if 
$\phi_1,\dots,\phi_\ell$ are all of the produced formulas by all of the accepting paths,
then $\phi = \lor_{i=1}^\ell \phi_i$. Notice that if we can compute $\phi$ in NP, 
then we can test for satisfiability of $\phi$ in non-deterministic polynomial time
as well: Non-deterministically compute a disjunct $\phi_i$ of $\phi$ and then
check for emptiness of $\phi_i$, which can be done in \NP~\cite{sontagRealAdditionPolynomial1985}.

Equipped with this notion, we now prove Theorem~\ref{thm:dec-cov-self-loop}.
For this, notice that it suffices to show the following lemma.

\begin{lemma}\label{lem:cycle-cov}
	Given a state $p$ of $\mach$, in \NP, we can construct an ELRA formula 
	$\phi_p$ such that $\phi_p(\bu,\bv)$ is satisfiable if and only if 
	$p(\bu)$ can cover $p(\bv)$.
\end{lemma}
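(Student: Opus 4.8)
The goal is to build, for each state $p$, an ELRA formula $\phi_p(\bu,\bv)$ that holds exactly when $p(\bu)$ covers $p(\bv)$, and to build it in \NP\ (i.e. each nondeterministic branch produces one disjunct). The central idea, following the self-loop observation from the introduction, is to separate the counters into \emph{pumpable} and \emph{non-pumpable} ones relative to a chosen cyclic run, and to show that the pumpable counters can be made arbitrarily large without affecting the rest. So I would first guess, along the nondeterministic branch, a ``shape'' for the covering run from $p$ to $p$: namely a sequence of states $p = p_0, p_1, \dots, p_m = p$ together with transitions $t_1, \dots, t_m$ connecting them (with $m$ polynomially bounded — this bound needs to be argued, presumably via a small-model / support-pattern argument analogous to the classical continuous-VASS results, where one only needs to record which counters are nonzero at each step). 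Given this guessed skeleton, I would then determine the set $P$ of \emph{pumpable} counters: a counter is pumpable if, somewhere along the skeleton, a transition's matrix either doubles it (diagonal entry $\ge 2$, using the weighted-edge case) or keeps it and adds the contribution of some counter already known to be reachable/pumpable — this is a least-fixed-point computation over the guessed skeleton, and since the skeleton is fixed it is just polynomial-time bookkeeping, so it can be done deterministically within the branch.

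**Reducing to a continuous VASS on the non-pumpable counters.** The key structural claim, which I would isolate as a sub-lemma, is: if the skeleton admits \emph{any} firing sequence $\pi$ with $p(\bu) \act{\pi} p(\bw)$ realizing it (for the given $\bu$), then for every bound $N$ there is a firing sequence $\pi'$ realizing (a repetition of) the skeleton with $p(\bu) \act{\pi'} p(\bw')$ where $\bw'(x) \ge N$ for all $x \in P$ and $\bw'(y) \ge \bw(y)$ for all $y \notin P$. The proof of this is the heart of the matter and the part I expect to be hardest: the idea sketched in the introduction is to split $\pi$ into many copies of itself run with successively smaller fractions, using parts~\ref{item:a} and~\ref{item:c} of Lemma~\ref{lem:basic-facts} (scaling a run, and the ``convex combination'' effect of scaling) to keep the non-pumpable counters from dropping while letting the pumpable counters accumulate geometrically, since their growth comes from the matrices (independent of $\alpha$) rather than from additive updates. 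One has to be careful that running the skeleton multiple times returns to state $p$ (it does, since the skeleton is a cycle), and that the preconditions $\bA\bu + \alpha\bb \ge \bzero$ stay satisfied at the smaller fractions — this is where part~\ref{item:c} and the self-loop property (no counter is ever zeroed by a matrix) are essential.

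**Assembling the formula.** Once the pumpable counters are eliminated, consider the ``projected machine'': restrict every transition in the skeleton to the non-pumpable counters $\{1,\dots,d\}\setminus P$. Because $P$ is closed under the pumping fixed point, every matrix restricted to the complement of $P$ is — I claim — effectively an identity-like update on those coordinates (any off-diagonal contribution into a non-pumpable coordinate would have made that coordinate pumpable, contradicting maximality), so the projected machine is an ordinary continuous VASS. For ordinary continuous VASS, the coverability (indeed reachability) relation along a fixed path is expressible in ELRA by the standard flow/Parikh-style encoding — a system of linear (in)equalities over the fractions $\alpha_i$ and the intermediate configurations, with the $\ge$ relaxation for the target coordinates. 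So the disjunct $\phi_p^{(\text{skeleton})}(\bu,\bv)$ produced on this branch is: ``there exist fractions and intermediate values realizing the projected skeleton from $\bu$ restricted to non-pumpable coordinates to some $\bv'$ with $\bv' \ge \bv$ on the non-pumpable coordinates, and $\bv$ is unconstrained (can be exceeded) on the pumpable coordinates.'' The full $\phi_p$ is the disjunction over all guessed skeletons; each disjunct is polynomial-size and computable in \NP, and by the sub-lemma plus soundness of the support-pattern abstraction it is correct. The remaining loose end to nail down carefully is the polynomial bound on skeleton length — I would derive it the same way one bounds continuous-VASS reachability certificates (the number of distinct support patterns is at most $2^d$ per state, but one wants a genuinely polynomial-length witness, which typically follows from a decomposition-into-elementary-cycles argument, each of which can be summarized by a multiplicity variable in the ELRA formula rather than unfolded).
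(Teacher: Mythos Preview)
Your high-level architecture is right and matches the paper: split counters into pumpable vs.\ non-pumpable, argue the pumpable ones can be made arbitrarily large (the paper packages this as its Pumping Counters Lemma, proved exactly via the ``many scaled copies of $\pi$'' idea you sketch), and observe that the non-pumpable ones evolve as in a plain continuous VASS along the run.

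The genuine gap is the part you yourself flag as a loose end: the polynomial bound on the guessed skeleton length. Your fallback plan, a decomposition into elementary cycles whose multiplicities become ELRA variables, does not port to the affine setting: because matrices do not commute with the additive updates (and different matrices do not commute with each other), the order in which cycles are traversed matters, so you cannot summarise a cycle by a single multiplicity variable the way one does for continuous VASS. Without this bound, your NP claim does not go through.

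The paper's way around this is to \emph{not guess a path at all}. Instead one guesses only two sets: a set $S$ of transitions (the intended support of the run) and a set $X$ of counters (the non-pumped ones). Condition (a), ``there is a run in the projected continuous VASS $\mach_X$ from $p(\bu_X)$ to some $p(\bw_X)$ with $\bw \ge \bv$ using exactly the transitions in $S$'', is encoded in ELRA directly by the known continuous-VASS machinery. Condition (b), ``there exists a $p$-admissible run from $p(\bu)$ in $\mach$ with support exactly $S$ that pumps every counter outside $X$'', is handled by a separate combinatorial characterisation (Lemma~\ref{lem:far}): such a run exists iff there is an injection $f\colon S \to \qn_{>0}$ recording only the \emph{first-appearance order} of the transitions, subject to local reachability and support conditions that are all expressible in ELRA. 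This first-appearance device is the missing ingredient in your proposal; it replaces the need for any path-length bound, because one only quantifies over an ordering of the (linearly many) transitions in $S$, not over a path. The Coverability Characterisation Lemma then ties (a) and (b) together, requiring that the two witnessing firing sequences share the same support $S$.
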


Indeed, by standard arguments on decomposing paths into steps and (non-simple) cycles~\cite{blondinLogicsContinuousReachability2017},
we can establish that a configuration $p(\bu)$ can cover a configuration $q(\bv)$ if and only if 
there is a \emph{witness}, i.e.,
a sequence of configurations $$p_1(\bu_1),p_1(\bu_1'),p_2(\bu_2),p_2(\bu_2'),
\dots,p_k(\bu_k),p_k(\bu_k')$$ such that each $p_i \neq p_j$ for $i \neq j$,
$p_1(\bu_1) = p(\bu), p_k = q, \bu_k' \ge \bv$ and for each $i$, $p_i(\bu_i)$ can cover $p_i(\bu_i')$
and $p_i(\bu_i') \act{} p_{i+1}(\bu_{i+1})$. Since we can 
we can easily come up with a formula $\phi_{p,q}^{s}$ in ELRA
such that $\phi_{p,q}^s(\bu,\bv)$ is true if and only if $p(\bu) \act{} q(\bv)$, i.e.,
$p(\bu)$ can reach $q(\bv)$ by a single step, Lemma~\ref{lem:cycle-cov} then
allows us to solve coverability by reducing it to formulas in ELRA, which can
be checked for satisfiabilty in \NP~\cite{sontagRealAdditionPolynomial1985}.
More details can be found in Subsection~\ref{subsec:appendix-cyclic-to-general}.

\subsubsection*{Proving Lemma~\ref{lem:cycle-cov}. }

Hence, all that remains is to prove Lemma~\ref{lem:cycle-cov}. 
To prove this, we
introduce some definitions, then give a high-level overview of our proof strategy
and then prove a number of preparatory results leading to the main result.

Recall that we have fixed a class $\cC$ that contains only non-negative self-loop matrices and a $d$-dimensional $\cC$-continuous VASS $\mach$ for the rest of this section. 
Given a transition $t = (p,\bA,\bb,q)$ of $\mach$ and a counter $x$,
we had already defined the notions of $x$ being additively incremented by $t$, decremented by $t$ and $\supp{t}^-_x$ in 
Subsection~\ref{subsec:dec-state-reach}. Here, we introduce another notion called pumping.
We say that a counter $x$ can be pumped by transition $t$ using $y$ if either $y = x$ and $\bA(x,y) > 1$ or $y \neq x$ and $\bA(x,y) > 0$. We then let $\supp{t}^p_x = \{y : (y \neq x \land \bA(x,y) > 0) \lor (y = x \land \bA(x,y) > 1)\}$.	If $x$ can be pumped by $t$ by using counter $y$, then since 
the matrix of $t$ is a self-loop matrix, once we fire $t$, the new value of $x$
will have the sum of the old values of $x$ and $y$ (and also the additive changes done by $\bb$).
Moreover, $\supp{t}^p_x$ contains all the counters with which $t$ can pump $x$.

Given a run between $p(\bu)$ and $q(\bv)$ and a counter $x$, we say that counter $x$ was \emph{pumped} in this run if there exists a step $p'(\bu') \act{\alpha t} q'(\bv')$
in this run
such that $x$ can be pumped by $t$ using $y$ and $\mathbf{\bu'(y)} > 0$.
As we shall see later on, counters that can be pumped in a run can be made to have arbitrarily high values
subject to certain mild restrictions on the other counters.


With these notations set up, we now give a high-level overview behind the proof of Lemma~\ref{lem:cycle-cov}. First, we prove a lemma called the Pumping Counters lemma which states that
if a counter $x$ is pumped in a run of the form $p(\bu) \act{*} p(\bv)$, then (subject to some technical constraints), we can get another run from $p(\bu)$ in which the final counter value of $x$ can be made
arbitrarily high and all the other counter values are at least as much as their respective values in $\bv$.
Hence, the Pumping Counters lemma allows us to \emph{pump} the value of $x$ whilst maintaining the values
of the other counters as well. Hence, for the purposes of coverability, we can throw away all of these 
pumpable counters. As for the remaining counters,
notice that the affine continuous VASS with respect to this set of counters behaves exactly like a continuous VASS, for which
we know how to decide coverability. Hence, intuitively the overall idea is to then identify pumpable counters, remove
them from $\mach$ and only concentrate on the machine with the remaining counters which behaves like a continuous VASS and which we know how to solve. This reasoning is proved in another lemma which 
we call the Coverability Characterization lemma. It then suffices to translate the conditions
given in this lemma to a formula in ELRA, which we do in order to prove Lemma~\ref{lem:cycle-cov}.

Having presented the high-level idea, we now begin the proof of Lemma~\ref{lem:cycle-cov} by 
first proving a number of preparatory results.
Our first such result states that whenever there is a run from a state $p$ back to itself, then 
by following the same path with different firing fractions, we can reach a configuration which is non-zero on every counter that was ``affected'' by the run.
This proposition is similar to~\cite[Lemma 4.4]{blondinLogicsContinuousReachability2017}.

\newcommand{\rephalf}{\text{Rep-Half}}
\begin{restatable}{proposition}{supportprop}\label{prop:support-prop}
	Suppose we have a run $p(\bu) \act{\pi} p(\bv)$ for some $p, \bu, \bv$ and $\pi$.
	Then, there is a firing sequence $\rephalf(\pi)$ such that 
	\begin{itemize}
		\item If $\pi = \alpha_1 t_1, \alpha_2 t_2, \dots, \alpha_\ell t_\ell$, then
		$$\rephalf(\pi) = \alpha_1/2 t_1, \alpha_2/4 t_2, \dots, \alpha_\ell/2^\ell t_\ell$$
		\item $p(\bu) \act{\rephalf(\pi)} p(\bw)$ for some $\bw$ such that $\supp{\bu} \subseteq \supp{\bw}$, and
		\item If $t$ is a transition in $\pi$ and $x$ is some counter, then
		\begin{itemize}
			\item If $x$ is additively incremented by $t$, then $\bw(x) > 0$.
			\item If $x$ is additively decremented by $t$, then $\supp{t}^-_x \cap \supp{\bw} \neq \emptyset$.
			\item If $x$ was pumped by $t$ in the run $p(\bu) \act{\pi} p(\bv)$  , then $\supp{t}^p_x \cap \supp{\bw} \neq \emptyset$.
		\end{itemize}
\end{itemize}
\end{restatable}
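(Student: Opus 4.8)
The plan is to prove this by induction on the length $\ell$ of the firing sequence $\pi$, establishing a slightly stronger statement that keeps track of a useful invariant along the way. The base case $\ell = 0$ is trivial: $\rephalf(\pi)$ is the empty sequence, $\bw = \bu$, and all three sub-conditions are vacuous. For the inductive step, suppose $\pi = \pi' \cdot (\alpha_\ell t_\ell)$ with $p(\bu) \act{\pi'} p'(\bu'') \act{\alpha_\ell t_\ell} p(\bv)$. The first key observation is that since $p$ appears at \emph{both} ends of $\pi$ but we want to decompose from the front, we should instead think of $\pi$ as built by \emph{prepending}: write $\pi = (\alpha_1 t_1) \cdot \pi''$ where $p(\bu) \act{\alpha_1 t_1} p_1(\bu_1) \act{\pi''} p(\bv)$. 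But $\pi''$ does not start and end at the same state, so a naive induction breaks. The clean fix is to strengthen the statement to allow arbitrary start and end states, \emph{dropping} the requirement $\supp{\bu} \subseteq \supp{\bw}$ from being about $p$ in particular; actually, re-reading the statement, $\supp{\bu}\subseteq\supp{\bw}$ is needed and does hold generally because self-loop matrices never kill a coordinate under the linear part (every diagonal entry is nonzero and the matrix is non-negative, so $(\bA\bx)(i) \ge \bA(i,i)\bx(i)$), and $\alpha\bb$ can only be subtracted when $(\bA\bx)(i)$ was already positive. So the strengthened induction hypothesis is: for \emph{any} run $q_0(\bx_0) \act{\pi} q_n(\bx_n)$, the sequence $\rephalf(\pi)$ defined by halving the $k$-th fraction $k$ extra times is fireable from $q_0(\bx_0)$ reaching some $\bw$ with $\supp{\bx_0}\subseteq\supp{\bw}$ and the three bullet conditions.

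The engine of the argument is Lemma~\ref{lem:basic-facts}, especially item~\ref{item:c}: if $p(\bu)\act{\pi}q(\bv)$ then $p(\bu)\act{\alpha\pi}q(\bv')$ with $\bv'\ge \alpha\bv + (1-\alpha)\bu$. Combined with item~\ref{item:b} (monotonicity: starting higher ends higher), this lets me run $\rephalf(\pi'')$ from a configuration that dominates the one the induction hypothesis handles, so the support only grows. Concretely: by IH applied to $\pi''$ from $p_1(\bu_1)$, there is $\bw''$ with the desired properties reached by $\rephalf(\pi'')$. Now $\rephalf(\pi) = (\alpha_1/2\, t_1)\cdot (\text{further-halved }\rephalf(\pi''))$ — note that prepending and re-indexing exactly doubles the halving on each tail step, matching the formula $\alpha_k/2^k$. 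After firing $\alpha_1/2\, t_1$ from $\bx_0$ we reach some $\bx_1'$; since $\alpha_1/2 < \alpha_1$, Lemma~\ref{lem:basic-facts}\ref{item:c} with $\alpha = 1/2$ gives $\bx_1' \ge \tfrac12\bu_1 + \tfrac12 \bx_0$, hence $\supp{\bx_1'} \supseteq \supp{\bu_1}\cup\supp{\bx_0}$. Then fire the halved $\rephalf(\pi'')$ from $\bx_1'$; by monotonicity (item~\ref{item:b}) and the scaling identity (item~\ref{item:a}), this dominates $\tfrac12$ times the run $\rephalf(\pi'')$ from $\bu_1$, so we reach $\bw \ge \tfrac12\bw''$, giving $\supp{\bw}\supseteq\supp{\bw''}\supseteq\supp{\bu_1}$, and also $\supp{\bw}\supseteq\supp{\bx_0}$ is preserved since $\supp{\bx_1'}\supseteq\supp{\bx_0}$ and no coordinate is ever killed. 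The three bullet conditions for transitions inside $\pi''$ transfer from $\bw''$ to $\bw$ because $\supp{\bw}\supseteq\supp{\bw''}$. For the new transition $t_1$: if $x$ is additively incremented by $t_1$ then $\bx_1'(x)>0$ because $\alpha_1/2 \cdot \bb(x) > 0$ contributes positively and the non-negative matrix part cannot cancel it, and this positivity is inherited by $\bw$; if $x$ is additively decremented by $t_1$ then firing $t_1$ at all forces $(\bA_{t_1}\bx_0)(x)>0$, i.e. $\supp{t_1}^-_x \cap \supp{\bx_0}\neq\emptyset$, and $\supp{\bx_0}\subseteq\supp{\bw}$; if $x$ was pumped by $t_1$ using some $y$ with $\bx_0(y)>0$ then $y\in\supp{t_1}^p_x$ and $y\in\supp{\bx_0}\subseteq\supp{\bw}$.

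The main obstacle I anticipate is bookkeeping the interaction between the re-indexing of fractions under prepending and the precise constants coming out of Lemma~\ref{lem:basic-facts}\ref{item:c} — making sure that "halve the $k$-th step $k$ times" is exactly what prepending one step does to the suffix, and that each application of item~\ref{item:c} uses a fraction in $(0,1]$ so that the halved fractions remain legal (they do, since $\alpha_k/2^k \le \alpha_k \le 1$). A secondary subtlety is the claim that self-loop-ness guarantees supports never shrink under a step: one must check that when $\bb(x) < 0$ and $t$ fires from $\bx$, the resulting value $(\bA\bx)(x) + \alpha\bb(x)$ is still non-negative by definition of a step, and if moreover $x\in\supp{\bx}$ we need it to stay in the support — but this is exactly why we only claim $\supp{\bx_0}\subseteq\supp{\bw}$ for the \emph{overall} run via the domination argument rather than step-by-step, and why the halved fractions help: shrinking the additive part by passing to $\rephalf$ makes the subtraction smaller relative to the (unscaled) linear part, so the $\tfrac12\bu + \tfrac12\bx$ lower bound of item~\ref{item:c} does the work. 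Once these indices and inequalities are lined up, the induction closes.
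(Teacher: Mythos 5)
Your proof is correct and follows essentially the same route as the paper's: the same $\rephalf$ construction, the same reliance on Lemma~\ref{lem:basic-facts}, and the same underlying idea that the run leaves a residue of weight $1/2^{k+1}$ of each intermediate configuration $\bu_k$ in the final vector $\bw$. The only (cosmetic) difference is that you peel off the first step and recurse on the suffix with generalized endpoints, whereas the paper runs a forward induction on prefixes maintaining the explicit invariant $\bw_k \ge \frac{1}{2^k}\bu_k + \sum_{i=0}^{k-1}\frac{1}{2^{i+1}}\bu_i$; both unroll to the same bound.
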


The intuitive idea is that in the new path
we repeatedly halve the firing fraction at each step. 
$\rephalf(\pi)$ now has the property that we leave a very small ``residue''
of every intermediate configuration that we visited by firing $\pi$ from $p(\bu)$.
More details can be found in Section~\ref{subsec:appendix-support-prop} of the appendix.

We now state a main technical lemma, namely the Pumping Counters Lemma.
It states that whenever we have a run that pumps some counters,
then, subject to some mild constraints, that run can be modified to arrive at a configuration
in which all the pumped counters have an arbitrarily large value.

\begin{restatable}[Pumping Counters Lemma]{lemma}{pumpingcounters}~\label{lem:pumping-counters}
	Suppose $\rho: p(\bu) \act{\pi} p(\bv)$ is a run.
	Let $S$ be the set of counters such that $x \in S$
	only if $x$ is pumped by some transition $t$ in $\rho$
	and $\supp{t}^p_x \cap \supp{\bv} \neq \emptyset$.
	Then, for any $K > 0$, there exists a run $\pi'$ such that
	\begin{itemize}
		\item $\pi' = \pi/2 \cdot (\pi/2n)^n$ for some $n$.
		\item $p(\bu) \act{\pi'} p(\bw)$ such that $\bw \ge \bv$ and
		\item For every $x \in S$, $\bw(x) > K$.
	\end{itemize}	
\end{restatable}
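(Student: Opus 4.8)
The plan is to iterate Proposition~\ref{prop:support-prop} $n$ times to blow up the pumped counters while Lemma~\ref{lem:basic-facts} keeps everything non-negative and the non-pumped counters above their targets. The key observation is that $\rephalf(\pi)$ has a \emph{linear} residue: if $x$ is pumped by a transition $t$ in $\rho$ with $\supp{t}^p_x \cap \supp{\bv} \neq \emptyset$, then after a pre-run of copies of $\rephalf$-type sequences we can make $\supp{\bw}$ contain some $y \in \supp{t}^p_x$, so when we then run the real segment up to that step, the pumping step $p'(\bu') \act{\alpha t} q'(\bv')$ sees $\bu'(y) > 0$ and the additive contribution $\bA \bu'$ feeds a strictly positive amount into $x$ proportional to $\bu'(y)$; crucially this is \emph{not} scaled by the chosen fraction (the matrix ignores $\alpha$), so by taking many small copies we add up many such positive contributions to $x$ without the additive vectors $\bb$ ever forcing negativity.

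\medskip
\noindent\textbf{Step 1 (one-shot amplification).} First I would establish a ``single pump'' version: if $\rho: p(\bu) \act{\pi} p(\bv)$ pumps $x$ via step $p'(\bu') \act{\alpha t} q'(\bv')$ with $\bu'(y) > 0$ for some $y \in \supp{t}^p_x$, and additionally $\supp{t}^p_x \cap \supp{\bv} \neq \emptyset$ (so the run ``preserves'' a witness counter at the end), then I claim there is a run $p(\bu) \act{\pi/2 \cdot (\pi/2)} p(\bw^{(1)})$ with $\bw^{(1)} \ge \bv$ and $\bw^{(1)}(x) \ge \bv(x) + c$ for some constant $c > 0$ depending only on $\bu$, $\pi$. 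The idea: run $\rephalf(\pi)$ first, which by Proposition~\ref{prop:support-prop} lands at some $\bw$ with $\supp{\bv} \subseteq$ (a suitable support condition) and in particular, since $x$ was pumped, $\supp{t}^p_x \cap \supp{\bw} \neq \emptyset$; then by Lemma~\ref{lem:basic-facts}\eqref{item:b} we can run $\pi$ again from $\bw$ (more precisely from $\bu + \bw$ after decomposing) to land above $\bv + \bw$, and because a witness counter for pumping $x$ is now non-zero at the pumping step, the $x$-coordinate is boosted. The precise bookkeeping of ``which $\bu'$, which scaling'' is the fiddly part but is just tracking a single positive matrix entry.

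\medskip
\noindent\textbf{Step 2 (iteration).} Then I would iterate: take the run from Step~1, call its firing sequence $\pi^{(1)} = \pi/2 \cdot (\pi/2)$, observe it \emph{still} pumps $x$ in exactly the same way (same transition $t$, and the witness counter survives because of the support-preservation clause of Proposition~\ref{prop:support-prop}), and re-apply Step~1. After $n$ rounds we get a run whose firing sequence has the shape $\pi/2 \cdot (\pi/2n)^n$ (scaling each successive copy down by a further factor so that all fractions stay in $(0,1]$ and the whole thing remains a legal firing sequence), landing at $\bw$ with $\bw \ge \bv$ and $\bw(x) \ge \bv(x) + nc'$ for a uniform $c' > 0$. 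Choosing $n$ large enough gives $\bw(x) > K$. Handling \emph{all} $x \in S$ simultaneously rather than one at a time is done either by noting the single round already boosts every pumped witness counter, or by composing the constructions for each $x \in S$ in sequence — the support-preservation clause guarantees earlier gains are not destroyed.

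\medskip
\noindent\textbf{The hard part} will be making the scaling schedule precise so that (a) the concatenation $\pi/2 \cdot (\pi/2n)^n$ is genuinely executable — every intermediate configuration stays $\ge \bzero$, which needs Lemma~\ref{lem:basic-facts}\eqref{item:b} applied with care since the matrices can shrink coordinates — and (b) the per-round additive gain to $x$ does not itself shrink with $n$ (it does shrink, since we scale by $1/n$, but there are $n$ rounds, so the total is $\Theta(1)$ rather than $\Theta(n)$; one must check the constant does not degrade to $0$). This is exactly the ``splitting a run into many copies of itself with smaller fractions'' idea sketched in the introduction, and the self-loop hypothesis is what makes pumped counters accumulate their contributions across copies; the unscaled matrix is what makes those contributions survive arbitrarily small fractions.
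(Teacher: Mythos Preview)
Your opening paragraph has the right mechanism --- the matrix is unscaled, so pumping survives arbitrarily small fractions --- but your own ``hard part'' paragraph undoes it, and that is exactly where the proof either works or fails. You write that the per-round gain ``does shrink, since we scale by $1/n$, but there are $n$ rounds, so the total is $\Theta(1)$''. If that were true the lemma would be \emph{false}: a $\Theta(1)$ total cannot exceed an arbitrary $K$. The whole point is that the per-round gain is a \emph{constant independent of $n$}, so the total is $\Theta(n)$.

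Here is why, and this is the ingredient your proposal is missing. The paper does not go through $\rephalf$ or Proposition~\ref{prop:support-prop} at all. It writes down the marking equation: if $\pi$ has matrices $\bA_1,\ldots,\bA_\ell$ and $\bB=\bA_\ell\cdots\bA_1$, then firing $\pi/(2n)$ from $\bu_i$ lands at $\bu_{i+1}=\bB\bu_i+\bz/(2n)$, where $\bz$ collects the additive part. The crucial fact (stated separately as a ``Fundamental Property of Self-loop Matrices'') is that a product of non-negative self-loop matrices entrywise dominates each factor; hence $\bB(x,c_x)\ge \bA_i(x,c_x)\ge 1$ for the pumping witness $c_x\in\supp{t}^p_x\cap\supp{\bv}$. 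Since Step~1 (just firing $\pi/2$, using Lemma~\ref{lem:basic-facts}(\ref{item:c})) gives $\bu_1\ge\bv/2+\bu/2$, one maintains inductively $\bu_i(c_x)\ge\bv(c_x)/2$, and then $(\bB\bu_i)(x)\ge \bu_i(x)+\bu_i(c_x)\ge \bu_i(x)+\bv(c_x)/2$. That $\bv(c_x)/2$ is the constant per-round gain; after $n$ rounds you have at least $n\cdot\bv(c_x)/2$, and choosing $n$ with $n\cdot\min\{\bv(y):\bv(y)>0\}/2>K$ finishes it.

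Two smaller structural issues: your Step~2 says ``re-apply Step~1 to $\pi^{(1)}$'', which would produce exponentially nested sequences, not the shape $\pi/2\cdot(\pi/2n)^n$ the lemma demands; and your Step~1 invokes $\rephalf$ to get support conditions, but the hypothesis already hands you $c_x\in\supp{\bv}$, so after one $\pi/2$ the witness is alive with value $\ge\bv(c_x)/2$ --- no support-tracking needed.
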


The intuitive idea is behind the proof is as follows: We first fire $\pi/2$ from $p(\bu)$. 
By Lemma~\ref{lem:basic-facts}~(\ref{item:c}), this has
the effect of reaching a configuration of the form $p(\bu_1)$ with $\bu_1 \ge \bv/2 + \bu/2$.
By our assumption on $S$, this now means that for every counter $x \in S$ 
that was pumped by some $t$, there
is some counter $c_x$ such that $c_x \in \supp{t}^p_x \cap \supp{\bu_1}$.
Now, from $p(\bu_1)$ we fire $\pi/2n$ to reach $p(\bu_2)$. 
The effect of doing this is that now for every counter $x \in S$,
the value of $x$ in $\bu_2$ is roughly at least $\bu_1(x) + \bv(c_x)/2$.
Now, from $p(\bu_2)$ we fire $\pi/2n$ again to reach $p(\bu_3)$.
Once again the effect of doing this is that for every counter $x \in S$,
its new value is roughly at least $\bu_2(x) + \bv(c_x)/2$ which is roughly
$\bu_1(x) + 2\bv(c_x)/2$. By iterating this procedure for $n$ times,
every counter $x \in S$ will roughly at least have the value $\bu_1(x) + n\bv(c_x)/2$.
Choosing $n$ to be large enough will then make this value bigger than $K$.
More details can be found in Subsection~\ref{subsec:appendix-pumping-counters} of the appendix.

We will now use the Pumping Counters lemma to prove a characterization of coverability in $\mach$. 
To state this characterization we first need some notation. Given a firing sequence $\pi$
and a configuration $p(\bu)$, we say that $\pi$ is $p$-admissible from $p(\bu)$
if $p(\bu) \act{\pi} p(\bv)$ for some configuration $p(\bv)$.
Given a set of counters $X$ and a vector $\bv$ we use $\bv_X$ to denote the projection of $\bv$
onto the counters $X$.
Further, we let $\mach_X$ be the $|X|$-dimensional \emph{continuous VASS} obtained from $\mach$,
by converting each transition $t = (p,\bA,\bb,q)$ into a transition $t_X = (p,\bI,\bb_X,q)$
Given any firing sequence $\pi = \alpha_1 t_1, \dots, \alpha_k t_k$
over $\mach$,
we will let $\pi_X$ be the firing sequence $\pi_X = \alpha_1 t_{1_X}, \dots, \alpha_k t_{k_X}$
in $\mach_X$. Note that every firing sequence in $\mach_X$ is of the form $\pi_X$ for some firing 
sequence $\pi$ in $\mach$. With these notations set up, we can now state our characterization of coverability in $\mach$.


%
%
%
%
%

\newcommand{\pitfwd}{\pi_{\textsf{fwd}}}
\newcommand{\pitbwd}{\pi_{\textsf{bwd}}}
\newcommand{\bs}{\mathbf{s}}

\begin{restatable}[Coverability Characterization Lemma]{lemma}{covcharac}\label{lem:cov-charac}
	$p(\bu)$ can cover $p(\bv)$ in $\mach$
	if and only if there exists a set of counters $X$, a vector $\bw \ge \bv$ and firing sequences $\pi',\pi_{\textsf{fwd}}$ over $\mach$ such that
	\begin{enumerate}
		\item $p(\bu_X) \act{\pi'_X} p(\bw_X)$ is a run in $\mach_X$
		\item $\pi_{\textsf{fwd}}$ is $p$-admissible from $p(\bu)$ in $\mach$
		\item Every counter $x$ that is \textbf{not in} $X$ is pumped by the $p$-admissible run obtained from $p(\bu)$
		by firing $\pi_{\textsf{fwd}}$
		\item $\supp{\pi'} = \supp{\pi_{\textsf{fwd}}}$
	\end{enumerate}
\end{restatable}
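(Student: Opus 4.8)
The plan is to prove the two directions separately, with the ``hard'' direction being soundness (that the existence of such $X$, $\bw$, $\pi'$, $\pitfwd$ implies coverability). For the \emph{completeness} direction (coverability implies the existence of these objects), I would start from a run $p(\bu) \act{\pi} p(\bv')$ with $\bv' \ge \bv$. Let $X$ be the set of counters that are \textbf{not} pumped in this run. Take $\pitfwd := \pi$; then conditions 2 and 3 hold essentially by definition of $X$. For condition 1, I would project the run onto $X$: since no counter in $X$ is pumped, each transition $t = (p,\bA,\bb,q)$ restricted to the $X$-coordinates behaves so that the new value of $x \in X$ is at most $\bA(x,x) \cdot (\text{old value of } x) + \alpha \bb(x)$ plus contributions only from coordinates outside $X$ — but here I need to be careful: the matrix might still bring in values of pumped counters. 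The cleanest route is to invoke the Pumping Counters Lemma: using it on the pumped counters, I can assume those pumped counters can be made arbitrarily large, which should let me argue that the projection onto $X$ is faithfully simulated by $\mach_X$ (where the identity matrix is used). I would set $\bw$ to be $\bv$ on coordinates outside $X$ (padded suitably, using that pumped counters can cover anything) and $\bw_X = \bv'_X$. Condition 4, $\supp{\pi'} = \supp{\pitfwd}$, is immediate from taking $\pi' = \pi$.

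For the \emph{soundness} direction, suppose we are given $X$, $\bw \ge \bv$, $\pi'$, $\pitfwd$ satisfying 1--4. The strategy is: first run $\pitfwd$ from $p(\bu)$ enough times (using the Pumping Counters Lemma) so that every counter $x \notin X$ attains a value exceeding any desired bound, while the counters in $X$ are left at values $\ge$ their values after one pass (this is exactly what Lemma~\ref{lem:pumping-counters} gives, provided the set $S$ there contains all of $\{1,\dots,d\}\setminus X$ — which is condition 3). Call the resulting configuration $p(\bu^*)$; we have $\bu^*(x) > K$ for all $x \notin X$ and $\bu^*_X \ge \bu_X$ (by Lemma~\ref{lem:basic-facts}(\ref{item:c}) applied along the repeated firings, the $X$-coordinates don't decay below a fraction of $\bu_X$, and one more application of the Pumping Counters lemma's guarantee $\bw \ge \bv$ handles this — I'd need to track the fractions carefully). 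Then, from $p(\bu^*)$, I would fire a suitably scaled copy of $\pi'$: the run $p(\bu_X) \act{\pi'_X} p(\bw_X)$ in $\mach_X$ needs to be lifted to a run in $\mach$, and the point of having $\bu^*(x) > K$ huge for $x \notin X$ is that, since $\supp{\pi'} = \supp{\pitfwd}$ (condition 4) and the matrices are non-negative self-loop matrices, multiplying by these matrices never decreases a counter below its current value scaled by the (positive) diagonal entry, so the large values of counters outside $X$ stay large enough to keep all transitions of $\pi'$ fireable, and the $X$-coordinates evolve exactly as in $\mach_X$ up to extra non-negative contributions. Choosing $K$ large enough depending on the (finitely many) additive updates in $\pi'$ and on $\bv$ ensures the final configuration dominates $\bv$ on every coordinate: on $X$ it dominates $\bw_X \ge \bv_X$, and on the complement it exceeds $K \ge \max_i \bv(i)$.

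The main obstacle I anticipate is the bookkeeping in the soundness direction around \emph{how firing $\pi'$ in the full machine $\mach$ relates to firing $\pi'_X$ in $\mach_X$}. In $\mach_X$ each transition uses the identity matrix, whereas in $\mach$ the same transition uses a non-negative self-loop matrix; I must argue that the $X$-projection of the $\mach$-run dominates the $\mach_X$-run coordinatewise (so that reaching $\bw_X$ in $\mach_X$ implies reaching something $\ge \bw_X$ in $\mach$), and simultaneously that all the transitions remain fireable in $\mach$ — for this I rely on the self-loop property (strictly positive diagonal) together with the huge values on counters outside $X$, and crucially on condition 4 guaranteeing that $\pi'$ uses exactly the transitions whose non-$X$ counters were pumped and hence are available. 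A secondary subtlety is getting the scaling of $\pi'$ and the number of repetitions of $\pitfwd$ to be compatible (all fractions must stay in $(0,1]$), which is handled by Lemma~\ref{lem:basic-facts}(\ref{item:a}) and the fact that we may shrink $\pi'$ by an arbitrarily small fraction without affecting which counters have positive/zero value in the end, up to the threshold arguments above. The formal details of these domination and fireability invariants, proved by induction on the length of $\pi'$, are where the real work lies; I would defer them to the appendix as indicated.
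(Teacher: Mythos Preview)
Your overall structure is right, but there are two genuine gaps that are not just bookkeeping.

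\textbf{Forward direction.} You worry that ``the matrix might still bring in values of pumped counters'' into the $X$-coordinates, and propose to fix this via the Pumping Counters Lemma. That detour is both unnecessary and ineffective. The key point you miss is purely definitional: if $x\in X$ is \emph{not} pumped at a step $p'(\bu')\act{\alpha t}q'(\bv')$ with matrix $\bA$, then for every $y$ either $\bu'(y)=0$ or $x$ cannot be pumped by $t$ using $y$. For $y\neq x$ the latter means $\bA(x,y)=0$, and for $y=x$ it means $\bA(x,x)\le 1$, hence $\bA(x,x)=1$ since $\bA$ is a self-loop matrix. Consequently $(\bA\bu')(x)=\bu'(x)$ and $\bv'(x)=\bu'(x)+\alpha\bb(x)$ exactly. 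So the projected run is literally a run of $\mach_X$ and condition~1 is immediate with $\bw$ equal to the actual endpoint of~$\pi$. The Pumping Counters Lemma plays no role here.

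\textbf{Backward direction.} Two steps do not go through as you sketch them.
First, you invoke the Pumping Counters Lemma directly on the $\pitfwd$-run and justify this by ``which is condition~3''. But the hypothesis of that lemma is stronger than ``$x$ is pumped somewhere in the run'': it requires $\supp{t}^p_x\cap\supp{\text{endpoint}}\neq\emptyset$. Condition~3 does not give you that; the paper obtains it by first firing $\lambda\,\rephalf(\pitfwd)$ (Proposition~\ref{prop:support-prop}) so that the endpoint has the needed support.
Second, your claim $\bu^*_X\ge\bu_X$ is not delivered by the Pumping Counters Lemma: it only guarantees $\bu^*\ge(\text{endpoint of the run you pumped})$, and that endpoint can have smaller $X$-coordinates than $\bu_X$. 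This is precisely why the paper does \emph{not} fire $\pi'$ after pumping. Instead it invokes a continuous-VASS result to produce, inside $\mach_X$, a new sequence $\eta_X$ with $p(\bu_X)\act{\lambda\,\rephalf(\pi_{\text{fwd}_X})}p(\bz')\act{\eta_X}p(\bw_X)$, then shows the pumped configuration $p(\bs)$ in $\mach$ satisfies $\bs_X\ge\bz'$ (comparing the two $\lambda\,\rephalf$ runs coordinatewise), and finally fires $\eta$ from $p(\bs)$. The detour through $\bz'$ and $\eta$ is exactly what bridges the starting-point mismatch you flag but do not resolve.
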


If $p(\bu)$ can cover $p(\bv)$ in $\mach$ by a run of the form $p(\bu) \act{\pi} p(\bw)$ with $\bw \ge \bv$,
then setting $\pi' = \pitfwd = \pi$ and $X$ to be the set of counters not pumped in $\pi$ satisfies all the required conditions.


On the other hand, suppose we have $X, \bw, \pi'$ and $\pitfwd$ satisfying all the conditions of the lemma.
Intuitively, the counters that are in $X$ are roughly the counters that are not pumped.
For these counters, we can simply ignore the matrices and concentrate only on the underlying continuous VASS $\mach_X$. By assumption, we have the run $p(\bu_X) \act{\pi'_X} p(\bw_X)$ in $\mach_X$.
By using results about continuous VASS, we show that there is a small enough fraction $\lambda$ and a configuration $p(\bz')$ in $\mach_X$
such that $p(\bu_X) \act{\lambda \rephalf\left(\pi_{\text{fwd}_X}\right)} p(\bz')$ 
and  $p(\bz')$ can reach $p(\bw_X)$ by a firing sequence $\eta_X$ in $\mach_X$. 
This means that if in the original machine $\mach$, we have a vector $\bs$ such that
$\bs(x) \ge \bz'(x)$ for every $x \in X$ and $\bs(x)$ is very high for every $x \notin X$,
then we can cover $p(\bw)$ from $p(\bs)$ in $\mach$, by simply mimicking $\eta_X$ in $\mach$.

By then using the Pumping Counters lemma and the firing sequence $\pitfwd$, we then show that such a vector $\bs$ exists and also that $p(\bs)$ can be 
reached from $p(\bu)$ in $\mach$. This means that to cover $p(\bv)$ from $p(\bu)$, we first
move from $p(\bu)$ to $p(\bs)$ and then from there, cover $p(\bw)$ as mentioned in the previous paragraph.
Since $\bw \ge \bv$, this would give the required run from $p(\bu)$.  
The formal details behind this proof can be found in Subsection~\ref{subsec:appendix-cov-charac} in the appendix.


%

%

Hence, to embed the coverability relation as a formula $\phi_p(\bu,\bv)$ of ELRA, 
it suffices to express the conditions given by the Coverability 
Characterization lemma. We will now give a non-deterministic polynomial-time algorithm which does 
that.
Our algorithm will first guess a set of counters $X$ and a set of transitions $S$.
Then, it will write a formula of the form $\exists \bw \phi_1(\bu_X,\bw_X) \land \phi_2(\bu)$.
The formula $\phi_1(\bu_X,\bw_X)$ will express that there is a run from $\bu_X$
to $\bw_X$ in the continuous VASS $\mach_X$ whose support is $S$. Thanks to results
from continuous VASS~\cite[Lemma 4.8]{blondinLogicsContinuousReachability2017}, it is known how to construct $\phi_1(\bu_X,\bw_X)$ in polynomial-time.
The formula $\phi_2(\bu)$ will express that there is a firing sequence $\pitfwd$
whose support is $S$, which is $p$-admissible from $p(\bu)$ in $\mach$,
and the run resulting from $p(\bu)$ and $\pitfwd$ pumps all the counters not in $X$, i.e.,
all the counters in the complement of $X$. 

To construct $\phi_2$, we first give a characterization of the existence of admissible runs
which pump a given set of counters $Y$. To state this characterization, we need some notations.
The graph of $\mach$, denoted by $\gr_\mach$ is the graph whose nodes are the states of $\mach$
and there is an edge between $p$ and $q$ if and only if there is a transition $t$ such that 
$t = (p,\bA,\bb,q)$. In this case, we also let $in(t)$ be $p$, $out(t)$ be $q$ and $\supp{t}^+ = \{x : \bb(x) > 0\}$, i.e., $\supp{t}^+$ is the set of counters additively incremented by $t$.
We then have the following lemma.

\begin{restatable}{lemma}{far}\label{lem:far}
	There exists a $p$-admissible run from $p(\bu)$ which uses exactly the transitions
	from $S$ and which pumps all the counters in a set $Y$ if and only if 
	there exists an injection $f: S \to \qn_{> 0}$ 
	such that for every $t, t' \in S$,
	\begin{itemize}
		\item If $f(t)$ has the minimum value among all transitions in $S$, then $in(t) = p$.
		\item If $f(t) < f(t')$, then there exists a path in $\gr_\mach$ from
		$out(t)$ to $in(t')$ which uses only the transitions from the set $\{s: s \in S, f(s) < f(t')\}$.
		\item If $f(t)$ has the maximum value among all transitions in $S$, then there exists a path in $\gr_\mach$ from
		$out(t)$ to $p$ which uses only the transitions from $S$.
		\item For every counter $y$ additively decremented by $t$,
		$$\supp{t}^-_y \bigcap \left(\supp{\bu} \cup \bigcup\limits_{s \in S, f(s) < f(t)} \supp{s}^+\right) \neq \emptyset$$
		\item For every $y \in Y$, there exists a transition $r \in S$ such that 
		$$\supp{r}^p_y \bigcap \left(\supp{\bu} \cup \bigcup\limits_{s \in S} \supp{s}^+\right) \neq \emptyset$$
	\end{itemize} 
\end{restatable}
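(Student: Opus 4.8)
The plan is to characterize the existence of a $p$-admissible run that uses exactly the transitions in $S$ and pumps all counters in $Y$ by means of a total order on $S$, which we encode by the injection $f : S \to \qn_{>0}$. The intuition is that if such a run exists, then (by the scaling and monotonicity facts in Lemma~\ref{lem:basic-facts} together with Proposition~\ref{prop:support-prop}) a single ``firing'' of each transition in $S$, in a suitable order and with suitably chosen fractions, already suffices to obtain an admissible run that pumps everything that can be pumped. The order in which transitions are first used along the run induces $f$; and the control-state constraints simply say that consecutive transitions (in this order) can be connected by a path in $\gr_\mach$ using only transitions already introduced, that the first transition starts at $p$, and that after the last transition we can return to $p$. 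The two support constraints say: to fire a transition $t$ that additively decrements $y$, some counter in $\supp{t}^-_y$ must already be non-zero, which can happen either because it was non-zero in $\bu$ or because an earlier transition additively incremented it; and to pump $y \in Y$, some transition $r \in S$ must be firable at a point where a counter in $\supp{r}^p_y$ is non-zero, again either from $\bu$ or from an earlier additive increment.

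For the forward direction, I would take a $p$-admissible run $p(\bu) \act{\pi} p(\bv')$ using exactly the transitions of $S$ and pumping all of $Y$. Apply $\rephalf$ from Proposition~\ref{prop:support-prop} to obtain a run $p(\bu) \act{\rephalf(\pi)} p(\bw)$ along which, by the conclusions of that proposition, every counter additively incremented by some $t \in \pi$ is non-zero in $\bw$, every additive-decrement support meets $\supp{\bw}$, and every pumped counter's $\supp{t}^p_x$ meets $\supp{\bw}$. Define $f(t)$ to record the position of the \emph{first} occurrence of $t$ along $\pi$ (as a rational, to keep $f$ injective into $\qn_{>0}$). The control-state conditions are then read off directly from the fact that $\pi$ is a genuine path in $\gr_\mach$ from $p$ to $p$: between the first occurrence of $t$ and the first occurrence of $t'$ (when $f(t) < f(t')$) the run only uses transitions whose first occurrence is no later, hence transitions $s$ with $f(s) < f(t')$ (one has to be a little careful to push $t'$ itself out, which is where the strict inequality and a small reindexing argument come in). The support conditions follow because at the moment $t$ is fired for the first time, the current configuration is non-zero only on counters that were non-zero in $\bu$ or were additively incremented by an earlier transition — which is exactly the union on the right-hand side of the displayed conditions.

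For the backward direction, given $f$ satisfying the five bullets, I would construct the run explicitly by induction on $S$ ordered by $f$: maintain a current configuration reachable from $p(\bu)$ whose support contains $\supp{\bu}$ and all $\supp{s}^+$ for $s$ already processed, and whose control state is $p$; to introduce the next transition $t$, first follow the $\gr_\mach$-path guaranteed by the second (or first) bullet from $p$ — more precisely, route through the transitions with smaller $f$-value, re-firing them with tiny fractions à la Proposition~\ref{prop:support-prop} so as not to destroy any support already gained — then fire $t$ with a tiny fraction. The additive-decrement bullet guarantees $t$ is firable; firing $t$ with a small enough fraction keeps all previously non-zero counters non-zero (Lemma~\ref{lem:basic-facts}(\ref{item:c})) and adds $\supp{t}^+$ to the support. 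After processing all of $S$, use the third bullet to route back to $p$. Finally, the fifth bullet ensures that along the way each $y \in Y$ got pumped: the witnessing transition $r$ is fired at a moment when some counter in $\supp{r}^p_y$ is non-zero (it was non-zero in $\bu$ or additively incremented earlier), which is precisely the definition of $y$ being pumped in the run.

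The main obstacle I expect is the bookkeeping in the backward direction: each time we re-route through already-introduced transitions to set up the next one, we must choose fractions small enough that no counter that is currently non-zero becomes zero, while simultaneously ensuring the additive increments do accumulate; this is a uniform-smallness argument over finitely many steps, and making it precise (rather than hand-wavy) requires invoking Lemma~\ref{lem:basic-facts} and a $\rephalf$-style halving repeatedly and tracking the invariant ``support only grows''. A secondary subtlety is the strictness in $f(t) < f(t')$ versus $\le$: one must argue that when we want a path from $out(t)$ to $in(t')$ we can avoid using $t'$ itself even though $t'$ may also appear earlier in $\pi$ — handled by noting that the relevant prefix of $\pi$ ends right before the first occurrence of $t'$, so it literally does not contain $t'$. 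I would relegate the full uniform-smallness computation to the appendix (Section~\ref{subsec:appendix-cov-charac} or a companion subsection), presenting here only the order-theoretic skeleton.
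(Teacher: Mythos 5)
Your forward direction matches the paper's: order the transitions of $S$ by first occurrence in the run, set $f$ accordingly, and read off the control-state conditions from the path structure and the support conditions from the fact that a decremented counter must have a non-zero witness at the moment of first firing. (The detour through $\rephalf$ and Proposition~\ref{prop:support-prop} is unnecessary here; the paper argues directly on the original run.)

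The backward direction, however, has a genuine gap. Your single sweep through $S$ in $f$-order, re-firing earlier transitions with tiny fractions as you go, justifies bullet 4 but not bullet 5. You write that the witnessing transition $r$ for $y \in Y$ ``is fired at a moment when some counter in $\supp{r}^p_y$ is non-zero (it was non-zero in $\bu$ or additively incremented \emph{earlier})'' --- but bullet 5 only guarantees $\supp{r}^p_y \cap \bigl(\supp{\bu} \cup \bigcup_{s \in S} \supp{s}^+\bigr) \neq \emptyset$, where the union ranges over \emph{all} of $S$, not just transitions with smaller $f$-value. The witnessing counter may only become non-zero via a transition introduced after $r$ (or via $r$'s own additive update), in which case the one firing of $r$ in your sweep pumps nothing, and nothing in your construction forces $r$ to be fired again afterwards. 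This asymmetry between bullets 4 and 5 is exactly why the paper's proof goes through an auxiliary lemma (Lemma~\ref{lem:sufficient}) whose construction makes \emph{two} passes over the transition sequence: a first pass (with individually chosen small fractions) that reaches a configuration $p(\bw)$ whose support contains $\supp{\bu}$ and every $\supp{t_j}^+$, and then a second pass firing each $t_i$ again with a uniformly small $\epsilon$, during which the pumping of every $y \in Y$ actually happens. To repair your argument you would need to append such a second traversal of $S$ (after all supports have been populated and before routing back to $p$ via bullet 3); the uniform-smallness bookkeeping you anticipate is then exactly the paper's choice of $\epsilon$ relative to the minimum positive coordinate of $\bw$ and the maximal additive decrement.
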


The intuition behind this lemma is that $f$ characterizes the \emph{first order appearance}
of the transitions that appear in a $p$-admissible run, i.e., if $f(t) < f(t')$, then $t$
appears before $t'$ in the $p$-admissible run. The proof of this lemma is very similar to
the one from~\cite[Lemma 4.7]{blondinLogicsContinuousReachability2017}.
and can be found in Subsection~\ref{subsec:appendix-far} of the appendix.

Hence to construct $\phi_2$ (given $p, S$ and the complement of $X$), 
it is sufficient to express these five conditions in ELRA. In~\cite[Lemma 4.8]{blondinLogicsContinuousReachability2017}, 
it is shown how to express the first four conditions in ELRA. 
It is also easy to see that the fifth condition can also be expressed in ELRA . This then completes 
the proof of Lemma~\ref{lem:cycle-cov} and hence also Theorem~\ref{thm:dec-cov-self-loop}.

\subsection{Reachability for non-negative classes with only permutation matrices}\label{subsec:dec-reach-perm}

We now show our final decidability result.
\begin{theorem}\label{thm:dec-perm-reach}
	The reachability and coverability problems for $\cC$-continuous VASS are in \NEXP \
	if $\cC$ contains only permutation matrices.
\end{theorem}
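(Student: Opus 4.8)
The plan is to strip the permutations off every run, reducing reachability for a permutation class to reachability in an ordinary continuous VASS that is at most exponentially larger, and then to invoke the \NP\ upper bound for continuous VASS reachability~\cite{blondinLogicsContinuousReachability2017}; coverability then follows from Theorem~\ref{thm:reach-cov-state}.

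First I would analyze an arbitrary run $p_0(\bu_0) \act{\alpha_1 t_1} p_1(\bu_1) \act{\alpha_2 t_2} \cdots \act{\alpha_k t_k} p_k(\bu_k)$ of a $\cC$-continuous VASS $\mach$, where $t_i = (p_{i-1}, \bP_{\sigma_i}, \bb_i, p_i)$. Set $\mathbf{Q}_i := \bP_{\sigma_i}\bP_{\sigma_{i-1}}\cdots\bP_{\sigma_1}$ (the cumulative permutation matrix, with $\mathbf{Q}_0 = \bI_d$) and $\bw_i := \mathbf{Q}_i^{-1}\bu_i$. Since $\mathbf{Q}_i = \bP_{\sigma_i}\mathbf{Q}_{i-1}$, a one-line computation gives $\bw_i = \bw_{i-1} + \alpha_i\,\mathbf{Q}_i^{-1}\bb_i$, and because $\mathbf{Q}_i^{-1}$ is a permutation matrix we have $\bw_i \ge \bzero$ iff $\bu_i \ge \bzero$. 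In words, after undoing the accumulated permutation the run becomes a run of a plain continuous VASS whose additive update at step $i$ is the reshuffled vector $\mathbf{Q}_i^{-1}\bb_i$ — a vector that depends only on the product $\mathbf{Q}_i$ of the permutation matrices seen so far, and hence can be precomputed if we track this product in the control state.

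Concretely, I would build a continuous VASS $\mach'$ with state set $Q \times G$, where $G$ is the finite matrix group generated by the permutation matrices occurring in $\mach$, adding, for every transition $t = (p, \bP_\sigma, \bb, q)$ of $\mach$ and every $\mathbf{Q} \in G$, a transition $\big((p, \mathbf{Q}),\ \bI_d,\ (\bP_\sigma\mathbf{Q})^{-1}\bb,\ (q, \bP_\sigma\mathbf{Q})\big)$. All matrices in $\mach'$ are identity matrices, so $\mach'$ is a genuine continuous VASS; its numerical data are permutations of those of $\mach$, and $|G| \le d! = 2^{O(d\log d)}$, so $\mach'$ has size exponential in $|\mach|$. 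By the computation above (formally, by induction on run length, using that applying a fixed permutation to the coordinates of a configuration preserves both non-negativity and firability), $p(\bu) \act{*} q(\bv)$ in $\mach$ iff $(p, \bI_d)(\bu) \act{*} (q, \mathbf{Q})(\mathbf{Q}^{-1}\bv)$ in $\mach'$ for some $\mathbf{Q} \in G$. Hence a nondeterministic exponential-time procedure decides reachability: construct $\mach'$, guess $\mathbf{Q} \in G$, and run the \NP\ algorithm for continuous VASS reachability~\cite{blondinLogicsContinuousReachability2017} on $\mach'$ with source $(p,\bI_d)(\bu)$ and target $(q,\mathbf{Q})(\mathbf{Q}^{-1}\bv)$ — an \NP\ procedure on an exponential-size instance runs in exponential time. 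This puts reachability in \NEXP, and coverability follows from Theorem~\ref{thm:reach-cov-state}, whose reduction only appends a fresh control state with identity-matrix decrementing self-loops and so stays inside the permutation class $\cC$.

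The only points requiring care are routine: keeping the composition order of the permutation matrices straight so that the telescoping identity for $\bw_i$ is exact, and checking that the blow-up from $\mach$ to $\mach'$ is genuinely at most exponential. There is no essential obstacle — indeed the \NEXP-hardness of Theorem~\ref{thm:complexity-reach-cov} for classes containing a non-trivial permutation matrix shows that the exponential dependence on $|G|$ is unavoidable, so this simple reduction is already optimal.
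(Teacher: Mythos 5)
Your proposal is correct and follows essentially the same route as the paper's proof: record the accumulated permutation in the control state, obtaining an exponentially larger ordinary continuous VASS on which the \NP\ reachability algorithm of~\cite{blondinLogicsContinuousReachability2017} is run, with coverability handled via Theorem~\ref{thm:reach-cov-state}. The only differences are cosmetic — you handle transitions carrying both a non-trivial matrix and a non-zero additive vector directly via the telescoping identity for $\bw_i$, whereas the paper first splits such transitions, and you use the inverse cumulative permutation where the paper uses the forward renaming.
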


By Theorem~\ref{thm:reach-cov-state}, it suffices to prove the above theorem for only
the reachability problem.
We prove this by giving a reduction to the reachability problem for continuous VASS. This reduction is 
exactly the same as the one given in~\cite[Proposition 4.1]{Affine-VASS}, which gives a similar reduction in the case of affine VASS. The intuitive idea is that since $\cC$ contains only
permutation matrices, essentially each affine operation of $\mach$ is only a renaming 
of the counters. Hence, at each point, we can keep track of the current renaming in the
control states, which allows us to reduce the reachability problem for $\cC$-continuous VASS
to reachability in (an exponentially) bigger continuous VASS, which is in \NP.
More formal details can be found in Section~\ref{sec:appendix-dec-reach-perm} of the appendix.

Let us now review all of our decidability results and see how they prove the decidability and complexity upper bound claims made in Section~\ref{sec:prelims}. Note that the \NP \ upper bounds for the
identity class are already known~\cite{blondinLogicsContinuousReachability2017}. With this in mind,
it is then easy to check that the three decidability (and upper bound) results
proved in this section prove all of the decidability (and upper bound) claims in
Theorems~\ref{thm:main-reach},~\ref{thm:main-state}, ~\ref{thm:main-cov},~\ref{thm:cov-self-loop},~\ref{thm:complexity-reach-cov} and~\ref{thm:complexity-state}.

\section{Complexity lower bounds}\label{sec:lower-bounds}

We now prove complexity lower bounds for almost all of the upper bounds that
we proved in the previous section.  First, note that since any class containing self-loop matrices
generalizes the identity class, the coverability and state-reachability problems are \NP-hard
for self-loop classes.
We now move on to the non-trivial lower bounds.

\subsection{State-reachability for non-negative classes with zero-rows/columns}\label{subsec:lower-bound-zero-row-column}

In this subsection, we show that
\begin{theorem}
	The state-reachability problem for $\cC$-continuous VASS is \PSPACE-hard if $\cC$ contains a non-negative
	matrix with a zero-row or a zero-column.
\end{theorem}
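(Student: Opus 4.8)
The plan is to prove \PSPACE-hardness in two steps. First I would reduce a standard \PSPACE-complete problem --- the acceptance problem for space-bounded Turing machines (given a nondeterministic Turing machine $M$, an input $w$, and a space bound $m$ in unary, decide whether $M$ has an accepting run on $w$ using at most the first $m$ tape cells) --- to the state-reachability problem for \emph{continuous VASS with resets}, i.e.\ $\cC_R$-continuous VASS. Then I would compose this with the Stage~2 reduction from Subsection~\ref{subsec:zero-row-column} (used in the proof of Theorem~\ref{thm:undec-zero-row-column}), which turns any continuous VASS with resets into a $\cC$-continuous VASS with the same reachable control states, in polynomial time, whenever $\cC$ is a non-negative class containing a zero-row or a zero-column matrix. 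Together with the \PSPACE upper bound of Theorem~\ref{thm:dec-state-reach}, this in fact yields \PSPACE-completeness.

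For the first step, given an instance $(M,w,m)$, I would build a $\cC_R$-continuous VASS $\mach$ with one counter $c_{i,a}$ for each cell position $i\in\{1,\dots,m\}$ and each tape symbol $a\in\Gamma$ (dimension $m|\Gamma|$, polynomial in the input). The current state and head position of $M$ are polynomially many, so they are kept in the control state of $\mach$, whose states are pairs $(q,i)$ plus a constant number of auxiliary states per simulated step. The key invariant is a \emph{one-hot encoding} of the tape: for each cell $i$, exactly one counter $c_{i,a}$ is strictly positive, namely for the symbol currently in cell $i$. The initial configuration is $(q_0,1)(\bu_0)$ where $\bu_0$ sets $c_{i,w_i}=1$ for $1\le i\le m$ (padding with blanks past $|w|$) and every other counter to $0$; the target state $q_{\mathit{target}}$ is entered exactly when $M$ reaches its accepting state. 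A move $\delta(q,a)\ni(q',b,D)$, simulated from $(q,i)$, is realized by a chain of transitions using only the identity matrix or a reset matrix $\mathbf{R}^{(d,\cdot)}$: (i) decrement $c_{i,a}$ by $1$ (additive update $-\bunit_{\mathrm{idx}(i,a)}$, scaled by the chosen fraction), which is fireable precisely when $c_{i,a}>0$, i.e.\ by the invariant precisely when cell $i$ contains $a$, so it doubles as a positivity test; (ii) for each $a'\in\Gamma\setminus\{b\}$ reset $c_{i,a'}$ by multiplying with $\mathbf{R}^{(d,\mathrm{idx}(i,a'))}$ (vacuous if already $0$); (iii) increment $c_{i,b}$ by a positive additive update, making it strictly positive regardless; (iv) move to $(q',i\pm1)$, with no outgoing transition if the head would fall off the first $m$ cells. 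After step~(iii) cell $i$ again has exactly one positive counter and all other cells are untouched, so the invariant is restored; since steps~(i)--(iv) pass through fresh auxiliary control states, no foreign transition can interleave.

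For correctness, in the forward direction any accepting run of $M$ is simulated step by step (picking small firing fractions for the decrements), so $q_{\mathit{target}}$ is reachable from $(q_0,1)(\bu_0)$. In the backward direction, along \emph{any} run of $\mach$ from $(q_0,1)(\bu_0)$ the one-hot invariant is preserved --- an easy induction over the step gadgets, since each gadget, whatever fractions it uses, ends with exactly one positive counter in the modified cell --- so every gadget the run traverses corresponds to a legal move of $M$ from the encoded configuration, and reaching $q_{\mathit{target}}$ forces $M$ to accept. This gives \PSPACE-hardness for $\cC_R$-continuous VASS. Finally, the Stage~2 reduction blows up the dimension only by the constant factor equal to the size of the witnessing matrix $\bA$ with a zero-row/column, runs in polynomial time, and preserves the set of reachable control states (it is built in the same style as the proof of Theorem~\ref{thm:undec-negative-entries}, which already preserves control-state reachability), so the hardness transfers to every non-negative $\cC$ with a zero-row or zero-column matrix.

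The main obstacle is precisely the absence of zero-tests in continuous VASS with resets --- indeed, unlike reachability and coverability, state-reachability for such machines is decidable (Theorem~\ref{thm:dec-state-reach}), so one cannot hope to simulate a Turing machine in the naive way that would require checking that a cell-counter is zero. The one-hot encoding resolves this by converting every needed zero-check into the positivity of a \emph{different} counter, which resets and decrements can handle; the remaining work --- verifying gadget invariants under arbitrary firing fractions, and checking that the Stage~2 bridge preserves control-state reachability --- is routine.
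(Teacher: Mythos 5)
Your proof is correct and follows essentially the same route as the paper: the paper also establishes \PSPACE-hardness for continuous VASS with resets via a ``exactly one of a small group of counters is positive'' invariant (two counters per Boolean variable, with decrement-then-increment as the positivity test and reset-then-increment as the set operation), and then transfers hardness through the Stage~2 zero-row/zero-column reduction. The only difference is cosmetic: the paper's source problem is state-reachability for Boolean programs (citing its \PSPACE-hardness), whereas you reduce directly from space-bounded Turing machines with a one-hot tape encoding, which is the same gadgetry with $|\Gamma|$ counters per cell instead of two per variable.
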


We show that this theorem is true for continuous VASS with resets.
The general case then follows because the reduction given for coverability in Stage 2 in Subsection~\ref{subsec:zero-row-column} is also valid for state-reachability. 

To prove that state-reachability for continuous VASS with resets is PSPACE-hard,
we give a reduction from state-reachability for Boolean programs, for which state-reachability is 
PSPACE-hard~\cite{tacas/GodefroidY13}. Intuitively, a Boolean program
is a finite-state automaton which has access to $d$ Boolean variables (for some $d$), which can either
be set to 0 or 1 and which can be tested for its current value. 
Given a Boolean program with $d$ variables, we will construct a $2d$-continuous VASS with resets
that will simulate the Boolean program. The intuitive idea is that each Boolean variable 
will be simulated by two counters. Exactly one of these two counters 
will have a non-zero value at some point. If the first counter (resp. second) has a non-zero value,
then this corresponds to the Boolean variable being true (resp. false). Setting a Boolean variable to 0 or 1 can be done by resetting one of the counters and incrementing the other.
Testing the current value of a Boolean variable is accomplished by trying to decrement the appropriate counter. 
For a formal proof, we refer the reader to Section~\ref{sec:appendix-lower-bound-zero-row-column}
of the appendix.

\subsection{Non-negative classes with only permutation matrices}\label{subsec:lower-bound-perm}

In this subsection, we show that
\begin{theorem}
	If $\cC$ contains a non-trivial permutation matrix, then state-reachability is \PSPACE-hard and reachability and coverability
	are \NEXP-hard.
\end{theorem}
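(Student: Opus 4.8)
The plan is to reduce from standard hard problems, using the non-trivial permutation matrix only as a source of a small reversible gadget on the counters. \textbf{Normalising the gadget.} Using the closure properties of a class (identity, multiplication, extension, counter renaming), a short normal-closure argument shows that once $\cC$ contains any non-trivial permutation matrix it contains, after adding enough fresh counters, a matrix that realises a $3$-cycle on any prescribed triple of counters while fixing all others (and a transposition, if some element of $\cC$ is an odd permutation). I will write $\rho$ for such a $3$-cycle gadget. The second ingredient is semantic: in the continuous semantics a counter $x$ admits a \emph{non-destructive positivity test} --- a transition that decrements $x$ by a strictly positive amount can be fired from $p(\bu)$ iff $\bu(x)>0$, and one may pick the fraction $\alpha$ so small that $x$ stays positive afterwards. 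Combining the two, I encode each bit of a simulated machine by a pair of counters $(z^0,z^1)$ under the invariant ``exactly one of $z^0,z^1$ lies in the support'', the bit being $1$ iff $z^1$ lies in the support, and I keep a shared dummy counter $f$ out of the support. Reading a bit is a positivity test; setting a bit to a value that has first been verified (by a positivity test on the opposite counter) is a $3$-cycle $\rho$ through $z^0,z^1,f$, which leaves $f$ at $0$ precisely because the current bit value is known.

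\textbf{\PSPACE-hardness of state-reachability.} I reduce from the reachability problem for Boolean programs, which is \PSPACE-hard~\cite{tacas/GodefroidY13}. Given a Boolean program over $d$ variables, I build a $\cC$-continuous VASS whose control states are the program locations, whose counters are the $2d$ bit-counters plus a constant number of dummies, and whose transitions implement assignments (a guarded positivity test followed by a $\rho$-swap of the relevant pair) and tests (a positivity test on the appropriate bit-counter). Every transition either decrements some counter by an amount strictly below its current value or moves a value across $\rho$, so the support invariant is preserved along every run; hence the support of a reachable configuration faithfully records the corresponding Boolean store, and a run reaches the target program location iff the Boolean program does. Taking that location as the target of the state-reachability instance gives \PSPACE-hardness, matching Theorem~\ref{thm:dec-state-reach}.

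\textbf{\NEXP-hardness of reachability and coverability.} For these two problems we may constrain counter \emph{values}, not just control states, and that is exactly what lifts the complexity. On top of the previous construction I add an $n$-bit binary counter, again built from bit-counter pairs, whose increment is realised as a guarded carry chain, so that along any run it is forced to pass through $0,1,\dots,2^n-1$ with no step skipped or repeated; I also keep a numeric ``clock'' counter $c$ which every step increments by a positive amount, so that reaching $c\ge 2^n$ forces at least $2^n$ steps, i.e.\ an exponentially long, faithful run. I then reduce from a \NEXP-complete problem (an exponentially-bounded tiling / succinct-circuit problem): the $n$-bit counter supplies the exponential index, the per-index data is produced by simulating the succinct circuit with the bit-gadgets of the first step, and the permutation gadget is used to fold the resulting exponentially large continuous VASS into a polynomial-size $\cC$-continuous VASS, by letting a large-order permutation cyclically move the block of counters representing the currently active copy. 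The coverability (hence, by Theorem~\ref{thm:reach-cov-state}, reachability) instance then asks to reach the ``accept'' state with $c\ge 2^n$; because this condition mentions a counter value rather than only the control state, it does not fall back to the \PSPACE bound of Theorem~\ref{thm:main-state}. This matches the \NEXP upper bound of Theorem~\ref{thm:dec-perm-reach}.

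\textbf{Main obstacle.} The delicate part is faithfulness in a model with no zero-tests and with every additive update rescaled by an adversarially chosen fraction $\alpha\in(0,1]$: one must ensure that the \emph{only} runs reaching the target are the intended simulating ones. The support invariants, complementary counters and guarded $\rho$-swaps are exactly the tools for this, and the positivity tests in the carry chain are what forbid ``cheating'' increments of the binary counter, thereby forcing the run to be genuinely exponentially long and to enumerate each index once. The second subtlety, specific to the third step, is calibrating the reduction so that it really captures \NEXP rather than merely \PSPACE --- this is why it must target reachability/coverability through the counter-value condition on $c$, not state-reachability. Finally, the small cases of low dimension (where the normal closure need not yet contain a $3$-cycle, e.g.\ $\cC$ generated by a single double transposition) are handled by first passing to a slightly larger dimension via extension and renaming, where the $3$-cycle gadget does become available.
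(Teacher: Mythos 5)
Your \PSPACE\ part is essentially the paper's own argument: encode each Boolean variable by a complementary pair of counters, read a bit with a decrement--increment positivity test, and write a bit with a guarded application of a permutation matrix that moves the nonzero value to the correct counter. The paper does this more directly than you do: it takes the given non-trivial $\bP_\sigma$, picks a counter $z$ on a non-trivial cycle of $\sigma$, and uses $\bP_\sigma$ and $\bP_\sigma^{n-1}$ (where $\bP_\sigma^n=\bI$) as the forward and backward transfers between $z$ and $\sigma(z)$; your detour through the normal closure of $\langle\sigma\rangle$ to manufacture a $3$-cycle is correct (and you rightly flag the low-dimension cases such as a double transposition), but it is unnecessary machinery.

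The \NEXP\ part has a genuine gap. The paper does not re-derive \NEXP-hardness from scratch: it reduces from reachability/coverability for \emph{continuous Boolean programs} (Boolean programs enriched with continuous counters), whose \NEXP-hardness is a prior result of~\cite{pacmpl/BalasubramanianMTZ24}, and observes that the same permutation-based simulation of Boolean variables carries that reduction over verbatim. Your attempted direct construction does not substitute for this. First, forcing runs to be exponentially long via a clock counter $c$ with target $c\ge 2^n$ establishes nothing about \NEXP-hardness by itself --- \PSPACE-hard problems also have exponentially long computations, and the whole point of the gap between Theorems~\ref{thm:complexity-state} and~\ref{thm:complexity-reach-cov} is not ``long runs'' but which source problem you reduce from. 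Second, the step where ``the permutation gadget is used to fold the resulting exponentially large continuous VASS into a polynomial-size $\cC$-continuous VASS, by letting a large-order permutation cyclically move the block of counters representing the currently active copy'' cannot work as stated: the constructed VASS has only polynomially many counters, and every matrix available in $\cC$ (via extension, renaming and multiplication) acts on exactly those counters, so there is no mechanism for a permutation to address exponentially many distinct counter blocks. What remains of your sketch is precisely the hard content --- how polynomially many continuous counters interact with exponentially many indices to encode a succinct tiling --- and that is the content of the cited \NEXP-hardness result for continuous Boolean programs, which you would need either to invoke or to reprove in full.
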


To this end, let $\cC$ contain
a non-trivial permutation matrix $\bP_\sigma$.
Since $\bP_\sigma$ is not the identity matrix,
there exists a sequence of distinct indices $i_1, i_2, \dots, i_\ell$ with $\ell \ge 2$ such that $\sigma$ maps $i_1$ to $i_2$, $i_2$ to $i_3$ and so on and finally maps $i_\ell$ to $i_1$.
Let $z = i_1$. Further, let $n \ge 2$ be the least number such that $\bP_\sigma^n = \bI$. Note that since
$\bP_\sigma$ is a permutation matrix, such a number must exist. 

Now, suppose $\bu$ is a vector such that $\bu(j) = 0$ for every $j \neq z$. Then, 
$\bP_\sigma \cdot \bu = \bv$ where $\bv(j) = 0$ for every $j \neq \sigma(z)$.
Similarly, suppose $\bu$ is a vector such that $\bu(j) = 0$ for every $j \neq \sigma(z)$.
Then, $\bP_\sigma^{n-1} \cdot \bu = \bv$ where $\bv(j) = 0$ for every $j \neq z$.
Intuitively, $\bP_\sigma$ transfers the value of $z$ to $\sigma(z)$ and does nothing to the other counters.
Similarly, $\bP_\sigma^{n-1}$ 
transfers the value of $\sigma(z)$ to $z$ and does nothing to the other counters.
Hence, just like in the previous subsection,
we can use $z$ and $\sigma(z)$ to encode a Boolean variable. This lets us reduce 
state-reachability for Boolean programs to state-reachability for $\cC$-continuous VASS,
Further, the same reduction allows us to reduce \emph{continuous Boolean programs} to 
$\cC$-continuous VASS, for which the reachability and coverability
problems are \NEXP-hard~\cite{pacmpl/BalasubramanianMTZ24}. 
The formal proofs can be found in Section~\ref{sec:appendix-lower-bound-perm} of the appendix.

Finally, we note that the results proved in this subsection cover all the lower bounds
mentioned in Theorems~\ref{thm:complexity-reach-cov} and~\ref{thm:complexity-state}.

\section{Conclusion}\label{sec:conclusion}

We have introduced the model of affine continuous VASS and studied the reachability, state-reachability 
and coverability problems for different classes of integer affine operations. We have shown a classification
for decidability of reachability and state-reachability and an almost-complete classification for coverability.
We have also complemented most of our decidability results with tight complexity bounds.

As part of future work, we would like to completely classify the decidability status of the coverability problem by solving it for the weighted/overlapping family. Extending the current approach for self-loop classes to this general case is challenging. 
This is because for the self-loop classes, we crucially exploited the fact that at each step, the matrix in the underlying transition does not allow the counter values to drop. 
This property is however lost in the general case, which indicates that different methods might be necessary to solve the general case.

Finally, it might be worth studying affine operations in which the matrices
are allowed to have entries from the rationals. Many of our undecidability and decidability
results already apply in this case as well; however, a complete classification of problems in this case
is a possible direction for future work.

\label{beforebibliography}
\newoutputstream{pages}
\openoutputfile{main.pages.ctr}{pages}
\addtostream{pages}{\getpagerefnumber{beforebibliography}}
\closeoutputstream{pages}
\bibliography{refs}

\newpage
\appendix
\section{Proofs of Subsection~\ref{subsec:zero-row-column}}\label{sec:appendix-zero-row-column}

\subsection{Proof of Lemma~\ref{lem:undec-cov-zero-tests-0and1}}\label{subsec:appendix-scaling-property}

\undeconebounded*
\begin{proof}
	We prove this lemma by giving a series of reductions that starts at the coverability problem
	for continuous VASS with zero-tests and ends at the 1-bounded coverability problem
	for continuous VASS with zero-tests.
	
	Let $\mach$ be a continuous VASS with zero-tests. 
	By definition of steps in $\mach$, note that the following is true:
	\begin{quote}
		\textsc{Scaling Property: } Let $D$ and $D'$ be configurations of $\mach$. 
		For any transition $t$, any $\alpha \in (0,1]$ and any $\beta > 0$ such that $\beta \cdot \alpha \le 1$,
		we have $D \act{\alpha t} D'$ if and only if $\beta D \act{\beta \cdot \alpha t} \beta D'$.
		Further, for any zero-test $t$, we have $D \act{t} D'$ if and only if for any $\beta > 0$,
		we have $\beta D \act{t} \beta D'$.
	\end{quote}
	
	For some $\beta$, we say that all the appearing fractions in a run are at most $\beta$ if all the 
	fractions appearing in the firing sequence of that run are at most $\beta$.
	Note that repeated applications of the Scaling property imply that 
	if we have a run from $D$ to $D'$ in $\mach$ of the form $D_0 = D \act{} D_1 \act{} \dots \act{} D' = D_k$,
	then by choosing $\beta$ to be so small that $\beta \cdot D_i(j) \le 1$ for every $0 \le i \le k$ and every counter $j$, we can ensure that 
	there is a 1-bounded run from $\beta D$ to $\beta D'$ in which all the appearing fractions are at most $\beta$. 
	Similarly, the converse can also be shown by repeatedly applying the Scaling property.
	This means that it is undecidable to check given a triple $(\mach,C,C')$, whether there exists some $\beta \in (0,1]$ such that $\beta C$ can reach (resp. cover) $\beta C'$ by a 1-bounded run in which all the appearing fractions are at most $\beta$. 
	
	Now, to $\mach$, add two more counters $z$ and $\tilde{z}$. Replace each transition $t = (p,\Delta,q)$
	of $\mach$ by two transitions - the first transition goes from $p$ to a fresh state $s_t$, increases $z$ by 1, decreases $\tilde{z}$ by 1 and does exactly what $t$ does to the remaining counters; the second transition goes from $s_t$ to $q$, decreases $z$ by 1 and increases $\tilde{z}$ by 1.
	Intuitively, the value in $z$ at any point controls the fraction with which $t$ can be fired;
	if the value in $z$ is $\beta$ at some point, then $t$ can only be fired with fraction at most $\beta$.
	Call this new machine $\overline{\mach}$. For any configuration $D$ of $\mach$ and any $\alpha$, let
	$\overline{D}_\alpha$ be a configuration of $\overline{\mach}$ such that $\overline{D}_\alpha(x) = D(x)$ for 
	every counter $x$ of $\mach$, $\overline{D}_\alpha(z) = \alpha$ and $\overline{D}_\alpha(\overline{z}) = 0$. The following claim is easy to prove by induction:
	
	\begin{quote}
		There exists $\beta \in (0,1]$ such that $\beta C$ can reach (resp. cover) $\beta C'$ by an 1-bounded run in which all appearing fractions are at most $\beta$ in $\mach$ iff there exists
		$\beta \in (0,1]$ such that $\overline{\beta C}_\beta$ can reach (resp. cover) $\overline{\beta C'}_\beta$ by an 1-bounded run in
		in $\overline{\mach}$.
	\end{quote}
	
	Now, let the states of the configurations $C, C'$ be $p$ and $q$ respectively.
	To $\overline{\mach}$, we add two new states $p_i$ and $q_f$
	and a fresh counter $st$. From $p_i$ we add a transition to $p$
	which increments each counter $x$ of $\mach$ by $C(x)$, counter $z$ by 1 and decrements counter $st$ by 1.
	Similarly, from $q$ we add a transition to $q_f$ which decrements each counter $x$ of $\mach$
	by $C'(x)$, counter $z$ by 1 and increments counter $st$ by 1. Call this new machine $\mach_1$ and let $C_1$ and $C_1'$ be the configurations of $\mach_1$ whose counter values are all 0, except in the counter $st$ where they have the value 1 and whose states are $q_i$ and $q_f$ respectively. 
	The idea is that the transition from $p_i$ to $p$ non-deterministically guesses a fraction $\beta$,
	sets every counter $x$ of $\mach$ to $\beta C(x)$, sets counter $z$ to $\beta$ and sets $st$ to $1-\beta$.
	From there it simply mimics $\overline{\mach}$ until it reaches $q$. Finally from $q$, it can fire
	the transition to $q_f$, which decrements every counter $x$ of $\mach$ by $\beta C'(x)$, 
	decrements $z$ by $\beta$ and increments $st$ by $\beta$. If this final transition chooses
	to be fired with some fraction other than $\beta$, then either the value of $st$ will be less than 1
	or the value of $z$ will becomes less than 0. With this in mind, the following is then easy to observe:
	
	\begin{quote}
		There exists $\beta \in (0,1]$ such that $\overline{\beta C}_\beta$ can reach (resp. cover)
		$\overline{\beta C'}_\beta$ by an 1-bounded run in $\overline{\mach}$ if and only if $C_1$ can reach (resp. cover)
		$C_1'$ by an 1-bounded run in $\mach_1$.
	\end{quote}
	This completes the reduction and proves the lemma.
\end{proof}

\subsection{Proof of Stage 2: Reduction from resets to zero-row/zero-column matrices.}\label{subsec:appendix-stage2} 

We now show that affine classes that contain a matrix with a zero-row/column can simulate resets.
To this end, let $\cC$ be a non-negative class such that $\cC$ contains a matrix $\bA$ (of size $k \times k$ for some $k$) which either has a 
zero-row or a zero-column. Let $j$ be the row or column of $\bA$ which is completely zero.
Note that 
\begin{multline}\label{eq:two-app}
	\text{If $j$ is a row then for any $\bu \in \qnz^k$, }\bA \cdot \bu  = \bv \\ \text{ where } \bv \in \qnz^k \text{ satisfies } \bv(j) = 0
\end{multline}
and
\begin{equation}\label{eq:three-app}
	\text{If $j$ is a column then for any $\lambda \in \qnz$, } \bA \cdot \lambda \bunit_j = \bzero
\end{equation}

Equation~\ref{eq:two-app} tells us that we can use the matrix to always reset the counter $j$.
Equation~\ref{eq:three-app} tells us that as long as all the counters apart from $j$
on which we are applying $\bA$ have the value 0, then we can reset the counter $j$.
We will exploit these equations now to 
give a reduction from continuous VASS with resets to $\cC$-continuous VASS. 

Let $\mach = (Q,T)$ be a $d$-continuous VASS with resets for some $d$.
For the purposes of the reduction, we can assume that in each transition, $\mach$
resets at most one counter and if a transition does indeed reset some counter,
then the additive update of that transition is $\bzero$.
We can assume this since the undecidability proof given in the previous stage
already holds for continuous VASS with resets with this restriction.
Hence, each transition $t \in T$ is either of the form 
$t = (p,\bI_d,\bb,q)$ or $t = (p,\mathbf{R}^{(d,i)},\bzero,q)$.

From $\mach$, we will create a $n = dk$ dimensional $\cC$-continuous VASS $\mach'$. 
As in the reduction of Theorem~\ref{thm:undec-negative-entries}, the 
counters $x_1 = j, x_2 = j+k, \dots, x_d = j+(d-1)k$ of $\mach'$ will be called the primary counters and
will simulate the counters $1,2,\dots,d$ of $\mach$ respectively. 
The remaining counters of $\mach'$ will be called dummy counters and they will either always have the 
value 0 (if $\bA$ is a zero-column matrix) or their values will in no way affect the values of the primary 
counters (if $\bA$ is a zero-row matrix).

Given a vector $\bu \in \qn^{d}$, we define ext($\bu) \in \qn^{n}$ as the vector which is 0 everywhere, except in counters $x_1, x_2, \dots, x_d$ where the values are respectively
$\bu(1),\bu(2),\dots,\bu(d)$. Furthermore, 

\begin{itemize}
	\item If the $j^{th}$ row of $\bA$ is zero, then we set $S(\bu) = \{\bv \in \qnz^{n} : \bv(x_i) = \bu(i) \text{ for all primary counters } x_i \}$. 
	\item If the $j^{th}$ column of $\bA$ is zero, then we set $S(\bu) = \{\text{ext}(\bu)\}$ 
\end{itemize}

Intuitively, if the $j^{th}$ row is zero, then $S(\bu)$ contains any vector in $\qnz^n$
whose primary counter values coincide with the counter values of $\bu$. On the other hand,
if the $j^{th}$ column is zero, then $S(\bu)$ contains the unique vector in $\qnz^n$
whose primary counter values coincide with the values of $\bu$ and whose dummy values are 0.

The set of states of $\mach'$ will be the same as $Q$. Each transition $t \in T$ of $\mach$
will have a corresponding transition $t'$ in $\mach'$. Before we state this transition $t'$,
we will state two properties that will be satisfied by $t'$. 

\begin{quote}
	\textsc{Property 1: } If $p(\bu') \act{\alpha t'} q(\bv')$ is a step in $\mach'$,
	where $\bu' \in S(\bu)$ for some $\bu$ then there exists $\bv$ such that $\bv' \in S(\bv)$.
\end{quote}

\begin{quote}
	\textsc{Property 2: }  For any configurations $p(\bu),q(\bv)$, we have that  
	$p(\bu) \act{\alpha t} q(\bv)$ is a step in $\mach$ if and only if for every $\bu' \in S(\bu)$,
	there exists $\bv' \in S(\bv)$ such that $p(\bu') \act{\alpha t} q(\bv')$.
\end{quote}

We now proceed to define the transition $t'$ in $\mach'$ that uniquely corresponds to the transition
$t$ in $\mach$. Suppose $t = (p,\bI_d,\bb,q)$ for some states $p,q$ and some vector $\bb$. 
Then we set $t' = (p,\bI_n,\text{ext}(\bb),q)$. It is easily seen that $t'$ satisfies both the properties.

Suppose $t = (p,\mathbf{R}^{(d,i)},\bzero,q)$ for some states $p,q$ and some counter $i$. 
Then we set $t' = (p,\apply{n}{\bA}{(i-1)k+1}),\bzero,q)$. 
Intuitively we are applying $\bA$ to the counters $(i-1)k+1,\dots,(i-1)k+j-1,(i-1)k+j = x_i, (i-1)k+j+1,
\dots, ik$. 
We now show that $t'$ satisfies Properties 1 and 2.
First note that if $\bA$ is a zero-row matrix, then any vector $\bw' \in \qnz^n$ is in $S(\bw)$
where $\bw$ is obtained from $\bw'$ by projecting it onto its primary counters. Hence Property 1
is trivially true for $t'$ when $\bA$ is a zero-row matrix. On the other hand, suppose $\bA$
is a zero-column matrix. In this case, $S(\bw) = \{\text{ext}(\bw)\}$. 
Hence, to prove Property 1, it simply suffices to show that if the value of the dummy counters was 0 before applying $t'$,
then they will remain 0 even after applying $t'$.
By definition of the Application operation, $\apply{n}{\bA}{(i-1)k+1}$ cannot change the value of any counters
in the range $\{1,2,\dots,(i-1)k\} \cup \{ik+1,\cdots,dk\}$. Furthermore, by equation~\ref{eq:three-app} it follows that if all the dummy counters in the range $\{(i-1)k+1,\dots,ik\}$ have value 0 before multiplying by $\bA$,
then they will also have the value 0 afterwards. Hence, this proves Property 1 for $t'$.

Let us now prove Property 2, first in the case when $\bA$ is a zero-row matrix.
Suppose $p(\bu) \act{\alpha t} q(\bv)$ is a step in $\mach$. 
Then $\bv$ is the same as $\bu$, except that its value in the counter $i$ is 0. 
Let $\bu' \in S(\bu)$. The definition of the Application operation
along with equation~\ref{eq:two-app} then means that from $p(\bu')$ we can fire the transition $t'$ with fraction $\alpha$ to reach a configuration $q(\bv')$ where $\bv' \in \qnz^n$ is the same as $\bu'$
in all the counters in the range $\{1,2,\dots,(i-1)k\} \cup \{ik+1,\dots,dk\}$ and furthermore
satisfies $\bv'(x_i) = 0$. Hence, $\bv' \in S(\bv)$. 
Similarly, suppose for every $\bu' \in S(\bu)$, there exists $\bv' \in S(\bv)$ such that
$p(\bu') \act{\alpha t'} q(\bv')$. Take $\bu' = \text{ext}(\bu)$ and let $p(\bu') \act{\alpha t'} q(\bv')$. By definition of the Application operation along with equation~\ref{eq:two-app}, this means
that the primary counters of $\bv'$ have the same values as that of the primary counters of $\bu'$,
except that $\bv'(x_i) = 0$. This means that the counter values of $\bu$ and $\bv$ are the same,
except that there is a possible change at counter $i$, where $\bv(i) = 0$.
Hence, we have that $p(\bu) \act{\alpha t} q(\bv)$. This proves Property 2 in the case
when $\bA$ is a zero-row matrix. The case when $\bA$ is a zero-column matrix is very similar
and uses Equation~\ref{eq:three-app} instead of~\ref{eq:two-app}.

By using induction along with Properties 1 and 2, it is easy to see that
a configuration $p(\bu)$ can cover a configuration $q(\bv)$ in $\mach$
if and only if the configuration $p(\text{ext}(\bu))$ can cover $q(\text{ext}(\bv))$ in $\mach'$.
This completes the proof of correctness of the reduction and proves Theorem~\ref{thm:undec-zero-row-column}.

\section{Proof of Theorem~\ref{thm:undec-weight-overlap-edges}}\label{sec:appendix-weight-overlap-edges}

\undecweight*

Let $\cC$ be a non-negative class. We can assume that $\cC$ does not have any matrices with zero-rows/columns,
as otherwise Theorem~\ref{thm:undec-zero-row-column} applies.
Let $\bA$ be a matrix (of size $k \times k$ for some $k$) which either has a weighted edge
or two overlapping edges. This implies that there is an index $z$ such that 
$\sum_{1 \le i \le k} \bA(i,z) > 1$. This then allows us to deduce 
the following equation. 

\begin{equation}\label{eq:four-app}
	\bA \cdot \bu = \begin{cases}
		\bzero & \text{if } \bu = \bzero\\
		\bv & \text{with } \sum_{1 \le i \le k} \bv(i) > \sum_{1 \le i \le k} \bu(i) \text{ if } \bu(z) > 0
	\end{cases}	
\end{equation}

Indeed, the first part of the equation is clear. For the second part, first note that
since $\bA$ is not a zero-column matrix, for each index $i$, there is some index $\sigma(i)$ such that
$\bA(\sigma(i),i) \ge 1$. Hence, if $\bv = \bA \bu$ then
\begin{align*}
	\sum_{1 \le i \le k} \bv(i) &= \sum_{1 \le i \le k} \sum_{1 \le j \le k} \bA(i,j) \bu(j) \\
	&= \sum_{1 \le j \le k} \sum_{1 \le i \le k} \bA(i,j) \bu(j) \\
	& \ge \sum_{1 \le j \le k, j \neq z} \bA(\sigma(j),j) \bu(j) + \sum_{1 \le i \le k} \bA(i,z) \bu(z)\\
	&> \sum_{1 \le j \le k} \bu(j)	\quad \text{ if } \bu(z) > 0
\end{align*}

Also note that the same argument tells us
\begin{equation}\label{eq:four-app-extra}
	\bA \cdot \bu = \bv \quad \text{with} \sum_{1 \le i \le k} \bv(i) \ge \sum_{1 \le i \le k} \bu(i)
\end{equation}

Let us now prove the undecidability result by giving a reduction from the 1-bounded reachability problem
for continuous VASS with zero-tests. 
To this end, let $(\mach,C,C')$ be an instance of the 1-bounded reachability problem for continuous VASS with zero-tests
where $\mach$ is $d$-dimensional for some $d$. 	
From $\mach$, we will create a $n = d(k+1)$-$\cC$-continuous VASS $\mach'$.
Of the $d(k+1)$ counters of $\mach'$, the counters $x_1 = z, x_2 = z+(k+1), \dots, x_d = z+(d-1)(k+1)$ will be called the primary counters of $\mach'$ and they
simulate the counters $1, 2, \dots, d$ of $\mach$ respectively.
The counters $\overline{x_1} = k+1,\overline{x_2} = 2(k+1),\dots,\overline{x_d} = d(k+1)$ will be called the \emph{complementary counters}
of $x_1, x_2, \dots, x_d$ respectively. Intuitively, each complementary counter $\overline{x_i}$ will
hold 1 minus the value of the counter $x_i$ at any point in a run 
from the initial configuration to the final configuration.
Finally, the remaining counters of $\mach'$
will be called \emph{dummy counters} which are always supposed to have the value 0 in a run
from the initial configuration to the final configuration. 

Given a vector $\delta \in \qn^d$, let ext$(\delta) \in \qn^n$ be the vector which is 0 everywhere,
except in co-ordinates $x_1, x_2, \dots, x_d$ where the values are
$\delta(1),\delta(2),\dots,\delta(d)$ respectively and in the co-ordinates $\overline{x_1},\overline{x_2},\dots,\overline{x_d}$ 
where the values are $-\delta(1),-\delta(2),\dots,-\delta(d)$ respectively. 
Further, given a vector $\bu \in \qn^d$, let $S(\bu) \in \qn^n$ be the vector
which is 0 everywhere except in co-ordinates $x_1,x_2,\dots,x_d$ where it has the values
$\bu(1),\dots,\bu(d)$ and in co-ordinates $\overline{x_1},\overline{x_2},\dots,\overline{x_d}$
where it has the values $1-\bu(1),1-\bu(2),\dots,1-\bu(d)$.
With this notation set up, we are ready to state
the desired reduction. 

The set of states of $\mach'$ will be the same as $Q$.  
We say that a configuration $p(\bu)$ of $\mach'$ is good if 
the values of the dummy counters of $\bu$ are all zero and
for each primary counter $x_i$, $\bu(x_i) + \bu(\overline{x_i}) = 1$.
Note that $p(\bu)$ is good if and only if $\bu = S(\bv)$ for some $\bv \in \qnz^d$.
On the other hand, $p(\bu)$ will be called bad if for some $i$,
the sum of the values of the counters in the range $\{(i-1)(k+1)+1,\dots,i(k+1)\}$
is strictly bigger than 1. Finally an undesirable configuration is one 
which is neither good nor bad.
Our transitions will ensure that starting from a good configuration,
no undesirable configuration can be reached. 
Further, they will also ensure that 
starting from a good configuration
we can reach another good configuration by a run only if 
the run correctly simulated the machine $\mach$. This will then
allow us to reduce the 1-bounded reachability problem for $\mach$
to the reachability problem for $\mach'$.

We will now state the transitions of $\mach'$.
For each $t \in T \cup T_{=0}$ of $\mach$, 
$\mach'$ will have a corresponding transition $t'$.
Suppose $t = (p,\Delta,q) \in T$ is a transition of $\mach$.
Corresponding to $t$, we have the transition $t' = (p,\bI_n,\text{ext}(\Delta),q)$ in $\mach'$.
Intuitively, $t'$ updates the primary counters exactly in the way $t$ updates its counters,
and it updates the complementary counters so that the sum of each primary counter and its 
complementary counter remain the same and does not update the dummy counters at all.
Note that $t'$ satisfies the following properties, 	whose proofs follow immediately from the definition of $t'$ and good and bad configurations.

\begin{quote}
	\textsc{Property 1a): } Suppose $C \act{\alpha t'} D$ is a step in $\mach'$.
	Then $D$ is good (resp. bad) iff $C$ is good (resp. bad).
	
	\textsc{Property 1b): } Suppose $p(\bu)$ and $q(\bv)$ are configurations of $\mach$
	such that $0 \le \bu(i), \bv(i) \le 1$ for every counter $i$.
	Then $p(\bu) \act{\alpha t} q(\bv)$ is a step in $\mach$
	if and only if $p(S(\bu)) \act{\alpha t'} q(S(\bv))$ is a step in $\mach'$.
\end{quote}

Now, suppose $t = (p,i,q) \in \delta_{=0}$ is a zero-test of $\mach$.
Corresponding to $t$, we have the transition $t' = (p,\apply{n}{\bA}{(i-1)(k+1)+1},\bzero,q)$ in $\mach'$.
Intuitively, $t'$ does not update any counter apart from the ones in the range $\{(i-1)(k+1)+1,\dots,i(k+1)-1\}$.
Within this range, it updates the counters by multiplying them with the matrix $\bA$.
Note that by equations~\ref{eq:four-app} and~\ref{eq:four-app-extra}, $t'$ satisfies the following invariants.

\begin{quote}
	\textsc{Property 2a : } Suppose $p(\bu) \act{\alpha t'} q(\bv)$ is a step in $\mach'$.
	If $p(\bu)$ is good and $\bu(x_i) = 0$ then $q(\bv)$ is good.
	If $p(\bu)$ is good and $\bu(x_i) > 0$ or $p(\bu)$ is bad, then $q(\bv)$ is bad.
	
	\textsc{Property 2b: } Suppose $p(\bu)$ and $q(\bv)$ are configurations of $\mach$
	such that $0 \le \bu(i), \bv(i) \le 1$ for every co-ordinate $i$.
	Then $p(\bu) \act{t} q(\bv)$ is a step in $\mach$
	if and only if $p(S(\bu)) \act{\alpha t'} q(S(\bv))$ is a step in $\mach'$ for any $\alpha'$.
\end{quote}

This finishes the construction of $\mach'$. By Properties 1b and 2b it follows that
there is a 1-bounded run between $C$ and $C'$ in $\mach$ if and only if there is a run between
$S(C)$ and $S(C')$ in which all the configurations are good. By Properties 1a and 1b, it 
follows that no run from $S(C)$ to $S(C')$ can have a configuration which is bad or undesirable.
This then proves that $C$ can reach $C'$ by a 1-bounded run in $\mach$ if and only if $S(C)$ can reach $S(C')$ in $\mach'$,
which completes the proof of the theorem.

\section{Proofs of Subsection~\ref{subsec:dec-state-reach}}\label{sec:appendix-dec-state-reach}

First we show the following Soundness property.

\begin{quote}
	\textsc{Soundness: } Suppose $p(\bu) \act{\alpha t} q(\bv)$ is a step in $\mach$.
	Then $(p,\supp{\bu}) \act{t} (q,\supp{\bv})$ is an edge in the support abstraction.
\end{quote}

Indeed, suppose $p(\bu) \act{\alpha t} q(\bv)$ is a step in $\mach$ with $t = (p,\bA,\bb,q)$. Then $\bv = \bA \bu + \alpha \bb$. 
If $x$ is additively decremented by $t$ this means that $\alpha \bb(x) < 0$, and so
it must be the case that, $(\bA \bu)(x) > 0$. Since $\bA$ is a non-negative matrix, the latter condition is equivalent to $\supp{t}^-_x \cap \supp{\bu} \neq \emptyset$. 
Also, if $x$ is additively incremented by $t$ this means that $x \in \supp{\bv}$.
Finally, if $\bv(x) > 0$ then either $(\bA \bu)(x) > 0$ or $\bb(x) > 0$, 
which is equivalent to saying that if $x \in \supp{\bv}$ then either $\supp{t}^-_x \cap S \neq 
\emptyset$ or $x$ is additively incremented by $t$. By the construction
of the support abstraction graph, it follows that the Soundness property is true.

Now we prove the following Completeness property.

\begin{quote}
	\textsc{Completeness: } Suppose $(p,S) \act{t} (q,S')$ is an edge in the support abstraction 
	$SU_\mach$
	and suppose $p(\bu)$ is a configuration of $\mach$ such that $S \subseteq \supp{\bu}$. Then, 
	there exists a configuration $q(\bv)$ and a fraction $\alpha$ such that $S' \subseteq \supp{\bv}$ and
	$p(\bv) \act{\alpha t} q(\bv)$ is a step in $\mach$.
\end{quote}

To this end, let $t = (p,\bA,\bb,q)$ and 
let $\alpha$ be a fraction in $(0,1]$ such that $\alpha < (\bA \bu)(x)/|\bb(x)|$ for every counter
$x$ that is additively decremented by $t$, i.e., whenever $\bb(x) < 0$. 

Note that whenever $\bb(x) < 0$, by definition of edges in $SU_\mach$, 
there must be $y$ such that $y \in S \subseteq \supp{\bu}$ and $\bA(x,y) > 0$. Since, $\bA$ is a non-negative matrix, this means that $(\bA \bu)(x) > 0$ and so it is always possible to choose such a fraction $\alpha$.

Let $\bv = \bA  \bu + \alpha \bb$. We claim that $\bv$ is indeed a non-negative vector.
For the sake of contradiction, suppose $\bv(x) < 0$ for some $x$. This must mean that $(\bA \bu)(x) < -\alpha \bb(x)$.
Since $\bA$ is a non-negative matrix, $(\bA \bu)(x) \ge 0$ and so $\bb(x)$ must be negative.
Hence $-\alpha \cdot \bb(x) = \alpha \cdot |\bb(x)|$ and we initially chose $\alpha$ so that it satisfies
$\alpha \cdot |\bb(x)| < (\bA \bu)(x)$, which leads to a contradiction. 

Now, we claim that $S' \subseteq \supp{\bv}$. Indeed, by the construction of $SU_\mach$,
if $x \in S'$ then either $x$ is additively incremented by $t$ or $\supp{t}_x^- \cap S \neq \emptyset$.
If $x$ is additively incremented by $t$, then $\alpha \bb(x) > 0$ and since $\bA$ is a non-negative matrix,
it follows that $\bv(x) > 0$ as well. On the other hand if $\supp{t}_x^- \cap S \neq \emptyset$,
then there must be $y \in S \subseteq \supp{\bu}$ such that $\bA(x,y) > 0$ and so $(\bA\bu)(x) > 0$.
If $x$ is additively decremented by $t$, then by the choice of $\alpha$, it follows that
$\bv(x) > 0$. If $x$ is not additively decremented by $t$, then $\bv(x) \ge (\bA\bu)(x) > 0$.
Hence, in either case, $x \in \supp{\bv}$ and so $S' \subseteq \supp{\bv}$.

It is then easy to see that $p(\bu) \act{\alpha t} q(\bv)$ is indeed a step in $\mach$ which proves
the Completeness property.

Finally, by using the Soundness and Completeness properties and by induction on the length of the run,
it is then easy to show the required claim.

\begin{quote}
	\textsc{Claim: } $p(\bu)$ can reach a configuration whose state is $q$ in $\mach$ if and only if $(p,\supp{\bu})$
	can reach some vertex of the form $(q,S)$ in the graph $SU_\mach$.
\end{quote}

This concludes the required proof.

\section{Proofs of Subsection~\ref{subsec:dec-cov}}\label{sec:appendix-dec-cov}

\subsection{Proof of Lemma~\ref{lem:basic-facts}}\label{subsec:appendix-basic-facts}

\basicfacts*

\begin{proof}
	All of these claims are clear for the case when $\pi$ is the empty sequence. 
	We will prove all of these claims when $\pi$ is a single step; the general case follows by a straightforward induction on the
	number of steps of $\pi$.
	
	Throughout we fix two configurations $p(\bu), q(\bv)$ and a pair $\gamma t$ where
	$\gamma \in (0,1]$ and $t = (p,\bA,\bb,q)$ is a transition such that $p(\bu) \act{\gamma t} q(\bv)$.
	Hence, this means that $\bv = \bA  \bu + \gamma \bv$.
	
	\paragraph*{Proof of~\ref{item:a}: } 
	Notice that $\alpha \bv = \alpha \bA  \bu + \alpha \gamma \bb = \bA (\alpha \bu) + (\alpha \gamma) \bb$
	and hence $p(\alpha \bu) \act{\alpha \gamma t} q(\alpha \bv)$.
	
	\paragraph*{Proof of~\ref{item:b}: } Notice that
	$\bA  (\bu + \bw) + \gamma \bb = \bA \bu + \gamma \bb + \bA \bw = \bv + \bA \bw$.
	Now, since $\bA$ is a non-negative self-loop matrix, $\bA(x,x) \ge 1$ for every counter $x$ and so $\bA \bw(x) \ge \bw(x)$.
	It then follows that if we let $\bv' = \bA (\bu + \bw) + \gamma \bb$, then $\bv' \ge \bv + \bw$.
	It is then easy to see that $p(\bu + \bw) \act{\gamma t} q(\bv')$.
	
	\paragraph*{Proof of~\ref{item:c}: } Notice that by~(\ref{item:a}), we have
	$p(\alpha \bu) \act{\alpha \gamma t} q(\alpha \bv)$. By~(\ref{item:b}), we have
	$p(\alpha \bu + (1-\alpha) \bu) = p(\bu) \act{\alpha \gamma t} q(\bv')$ where $\bv' \ge \alpha \bv + (1-\alpha) \bu$.
\end{proof}

\subsection{Lemma~\ref{lem:cycle-cov} implies Theorem~\ref{thm:dec-cov-self-loop}}\label{subsec:appendix-cyclic-to-general}

In this subsection, we will prove that Lemma~\ref{lem:cycle-cov} implies
Theorem~\ref{thm:dec-cov-self-loop}.

We claim that a configuration $p(\bu)$ can cover a configuration $q(\bv)$ if and only if 
there is a \emph{witness}, i.e.,
a sequence of configurations $$p_1(\bu_1),p_1(\bu_1'),p_2(\bu_2),p_2(\bu_2'),
\dots,p_k(\bu_k),p_k(\bu_k')$$ such that each $p_i \neq p_j$ for $i \neq j$,
$p_1(\bu_1) = p(\bu), p_k = q, \bu_k' \ge \bv$ and for each $i$, $p_i(\bu_i)$ can cover $p_i(\bu_i')$
and $p_i(\bu_i') \act{} p_{i+1}(\bu_{i+1})$. Indeed, if such a sequence exists,
then by applying Lemma~\ref{lem:basic-facts}~(\ref{item:b}),
each $p_i(\bu_i)$ can cover $p_{i+1}(\bu_{i+1})$. Repeatedly applying Lemma~\ref{lem:basic-facts}~(\ref{item:b}) then implies that $p(\bu) = p_1(\bu_1)$ can cover
$p_k(\bu_k')$. Since $\bu_k' \ge \bv$,
this proves that $p(\bu)$ can cover $q(\bv)$.

On the other hand, suppose we have a run $\rho$
from $p(\bu)$ which covers $q(\bv)$. We proceed by an induction on the length of the run $\rho$,
where the induction hypothesis asserts the existence of a witness for $p(\bu)$ covering $q(\bv)$
such that all the states of the witness appear in the run $\rho$.
The base case of a single step is clear. For the induction step, 
let $i$ be the last position in $\rho$ in which a configuration with state $p$
appears. Let $p(\bw)$ and $p'(\bw')$ be the configuration in positions $i$ and $i+1$ of $\rho$.
We apply the induction hypothesis to the run from $p'(\bw')$ to obtain
a witness $p_1(\bu_1),\dots,p_k(\bu_k)$ for $p'(\bw')$ covering $q(\bv)$.
It then follows that $p(\bu),p(\bw),p_1(\bu_1),\dots,p_k(\bu_k)$
is a witness for $p(\bu)$ covering $q(\bv)$.

Note that from the definition of the step relation, given two states $p,q$, 
we can easily come up with a formula $\phi_{p,q}^{s}$ in ELRA
such that $\phi_{p,q}^s(\bu,\bv)$ is true if and only if $p(\bu) \act{} q(\bv)$, i.e.,
$p(\bu)$ can reach $q(\bv)$ by a single step.
Hence, to decide coverability, we only have to find a witness.
Note that if $p_1(\bu_1),\dots,p_k(\bu_k')$ is a witness then 
the condition that $p_i \neq p_j$ for $i \neq j$ in a witness
implies that $k \le |Q|$. Hence, to find a witness, 
we just have to guess a sequence of states $p_1,\dots,p_k$ 
of the desired kind (which can be done in polynomial time, thanks to the fact that $k \le |Q|$)
and then check for satisfiability of the formula 
\begin{multline*}
	\exists \bu_1,\bu_1',\dots,\bu_k,\bu_k' \bigwedge_{1 \le i \le k} \phi_{p_i}(\bu_i,\bu_i') \land \bigwedge_{1 \le i \le k-1} \phi_{p_i,p_{i+1}}^s(\bu_i',\bu_{i+1}) \\ \land 
	\bu_1 = \bu \land \bu_k' \ge \bv
\end{multline*}

Since satisfiability of formulas in ELRA can be done in \NP, 
this then proves Theorem~\ref{thm:dec-cov-self-loop}.

\subsection{Proof of Proposition~\ref{prop:support-prop}}\label{subsec:appendix-support-prop}

\supportprop*

\begin{proof}
	Suppose $\pi = \alpha_1 t_1, \alpha_2 t_2, \dots, \alpha_\ell t_\ell$.
	Let $p_0(\bu_0) = p(\bu) \act{\alpha_1 t_1} p_1(\bu_1) \act{\alpha_2 t_2} p_2(\bu_2) \dots
	\act{\alpha_\ell t_\ell} p_\ell(\bu_\ell) = p(\bv)$. 	
	
	We let $\rephalf(\pi) = \alpha_1/2 t_1, \alpha_2/4 t_2, \dots, \alpha_\ell/2^\ell t_\ell$ and
	we claim that this satisfies all the claims of the proposition.
	We now show by induction on $k \in [0,\ell]$, that
	\begin{equation*}
		p_0(\bu_0) \act{\rephalf(\pi)[1\cdots k]} p_k(\bw_k) \text{ where } \bw_k \ge \frac{1}{2^k}\bu_k + \sum_{i=0}^{k-1} \frac{1}{2^{i+1}} \bu_i 
	\end{equation*}
	
	Indeed, the base case of $k = 0$ is immediate. For the induction step, note that
	since $p_{k-1}(\bw_{k-1}) \act{\alpha_k t_k} p_k(\bw_k)$, by Lemma~\ref{lem:basic-facts}~(\ref{item:a}),
	it follows that $p_{k-1}\left(\frac{1}{2^k} \bu_{k-1}\right) \act{\frac{\alpha_k}{2^k} t_k}
	p_k\left(\frac{1}{2^k}\bu_k\right)$. 
	Then, if we add $\sum_{i=0}^{k-1} \frac{1}{2^{i+1}} \bu_i$ to both sides,
	by Lemma~\ref{lem:basic-facts}~(\ref{item:b}), we get
	\begin{multline}\label{eq:induction-step}
		p_{k-1}\left(\frac{1}{2^{k-1}} \bu_{k-1} + \sum_{i=0}^{k-2} \frac{1}{2^{i+1}} \bu_i \right) \act{\frac{\alpha_k}{2^k} t_k} p_k(\bw_k) \\ \text{ where }  \bw_k \ge
		\frac{1}{2^k}\bu_{k} + \sum_{i=0}^{k-1} \frac{1}{2^{i+1}} \bu_i
	\end{multline}
	
	By induction hypothesis, we have that $$p_0(\bu_0) \act{\rephalf(\pi)[1\cdots (k-1)]} p_{k-1}(\bw_{k-1})$$
	where $\bw_{k-1} \ge \frac{1}{2^{k-1}}\bu_{{k-1}} + \sum_{i=0}^{k-2} \frac{1}{2^{i+1}} \bu_i$.
	Hence, by Lemma~\ref{lem:basic-facts}~(\ref{item:b}) and Equation~\ref{eq:induction-step}
	we can conclude the proof of the induction. 
	
	Let $\bw = \bw_\ell$. Suppose a counter $x$ was additively incremented (resp. decremented)
	by some transition $t$ in $\rephalf(\pi)$. Since $\supp{\rephalf(\pi)} = \supp{\pi}$,
	it follows that $t$ is also a transition in $\pi$. Hence,  
	there must be some $k$ such that $\bu_k(x) > 0$ or $\supp{t}^-_x \cap \supp{\bu_k} \neq \emptyset$
	respectively. Since $\bw \ge 1/2^{k+1} \bu_k$,
	it would then follow that a similar property is satisfied by $\bw$ as well.
	
	Suppose a counter $x$ was pumped by $t$ in the run $p(\bu) \act{\pi} p(\bv)$. 
	By definition of pumping, there is some $\bu_k$ such that $\supp{t}^p_x \cap \supp{\bu_k} \neq \emptyset$. Since $\bw \ge 1/2^{k+1} \bu_k$, we have $\supp{t}^p_x \cap \supp{\bw} \neq \emptyset$ and this finishes the proof.
\end{proof}

\subsection{Proof of the Pumping Counters Lemma (Lemma~\ref{lem:pumping-counters})}~\label{subsec:appendix-pumping-counters}

To prove this lemma, we first prove a bunch of small results.

\begin{proposition}[Marking equation]\label{prop:marking-equation}
	Let $p(\bu) \act{\pi} q(\bv)$ be a run such that $\pi = \alpha_1 t_1, \alpha_2 t_2, \dots, \alpha_\ell t_\ell$ and each $t_i = (p_{i-1},\bA_i,\bb_i,p_i)$. Then,
	\begin{equation*}
		\bv = \left(\prod_{j=\ell}^1 \bA_j\right) \bu +  \sum_{j=1}^\ell \left(\prod_{k=\ell}^{j+1} \bA_k \right) \alpha_j \bb_j
	\end{equation*}
	where $\prod_{j=\ell}^1 \bA_j$ denotes the matrix product $\bA_\ell \cdot \bA_{\ell-1} \cdots \bA_1$;
	similarly $\prod_{k=\ell}^{j+1} \bA_k$ denotes the product $\bA_\ell \cdot \bA_{\ell-1} \cdots \bA_{j+1}$.
\end{proposition}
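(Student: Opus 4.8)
The plan is to prove the Marking equation by a straightforward induction on the length $\ell$ of the firing sequence $\pi = \alpha_1 t_1, \alpha_2 t_2, \dots, \alpha_\ell t_\ell$, using only the definition of a step and associativity of matrix multiplication.

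For the base case $\ell = 0$ the firing sequence is empty, so $\bv = \bu$; on the right-hand side the matrix product $\prod_{j=\ell}^1 \bA_j$ is an empty product and hence equals the identity matrix $\bI$, while the sum $\sum_{j=1}^\ell$ is empty, so the right-hand side also evaluates to $\bu$. (Equivalently one may start the induction at $\ell = 1$, where the claim is just the definition $\bv = \bA_1\bu + \alpha_1\bb_1$ of a single step, combined with the convention $\prod_{k=1}^{2}\bA_k = \bI$.)

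For the inductive step, assume the statement for all runs with firing sequences of length $\ell-1$, and consider a run $p_0(\bu_0) \act{\alpha_1 t_1} p_1(\bu_1) \act{} \cdots \act{\alpha_\ell t_\ell} p_\ell(\bu_\ell)$ with $\bu_0 = \bu$ and $\bu_\ell = \bv$. Let $\pi'$ denote the prefix $\alpha_1 t_1, \dots, \alpha_{\ell-1} t_{\ell-1}$. Applying the induction hypothesis to $p_0(\bu_0) \act{\pi'} p_{\ell-1}(\bu_{\ell-1})$ expresses $\bu_{\ell-1}$ in terms of $\bu_0$ and $\bb_1,\dots,\bb_{\ell-1}$. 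Since the last step gives $\bu_\ell = \bA_\ell \bu_{\ell-1} + \alpha_\ell \bb_\ell$ by definition, it then suffices to substitute this expression for $\bu_{\ell-1}$, distribute $\bA_\ell$ over the sum, and simplify using the identities $\bA_\ell \cdot \prod_{j=\ell-1}^1 \bA_j = \prod_{j=\ell}^1 \bA_j$ and $\bA_\ell \cdot \prod_{k=\ell-1}^{j+1}\bA_k = \prod_{k=\ell}^{j+1}\bA_k$, together with the fact that the newly created term $\alpha_\ell \bb_\ell$ equals $\bigl(\prod_{k=\ell}^{\ell+1}\bA_k\bigr)\,\alpha_\ell \bb_\ell$ since that product is empty. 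This yields exactly the claimed formula for $\bv$.

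There is no genuine obstacle here; the only point requiring care is the index bookkeeping for the descending product ranges and the consistent use of the empty-product convention. I expect the slightly delicate part to be verifying that the $j=\ell$ summand contributed by the final step fits the general pattern, which is precisely what the empty-product convention takes care of.
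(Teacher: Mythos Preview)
Your proof is correct and follows exactly the approach indicated in the paper, which simply states that the result follows by induction on the length of $\pi$ and the definition of a step. Your write-up just spells out the details of that induction.
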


\begin{proof}
	Follows by induction on the length of $\pi$ and the definition of a step.
\end{proof}

\begin{proposition}[Fundamental Property of Self-loop Matrices]\label{prop:fun-prop-self-loop}
	Suppose $\bA_1, \bA_2, \dots, \bA_\ell$ are non-negative self-loop matrices. Then,
	$\bB = \prod_{k=1}^\ell \bA_k$ is also a non-negative self-loop matrix
	such that $\bB(i,j) \ge \bA_k(i,j)$ for every $i,j$ and $k$
\end{proposition}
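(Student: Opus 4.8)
The plan is to prove the statement by a short induction, reducing everything to two elementary observations about products of non-negative matrices. Throughout, all matrices are assumed square of the same dimension $d$, and I would use crucially that, since the entries lie in $\nn$, the diagonal entries of a self-loop matrix are integers that are $\geq 1$.

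First I would record the closure fact: if $\bA$ and $\bB$ are non-negative self-loop matrices, then so is $\bA \cdot \bB$. Non-negativity of the product is immediate. For the self-loop property one just computes $(\bA \bB)(i,i) = \sum_{k} \bA(i,k)\bB(k,i) \geq \bA(i,i)\bB(i,i) \geq 1$, using non-negativity of all the summands and $\bA(i,i), \bB(i,i) \geq 1$. An induction on $\ell$ (with the empty product being $\bI_d$, which is a non-negative self-loop matrix) then shows that $\bB = \prod_{k=1}^{\ell}\bA_k$ is a non-negative self-loop matrix, and moreover that each prefix product $\bA_1 \cdots \bA_{k-1}$ and each suffix product $\bA_{k+1} \cdots \bA_{\ell}$ is a non-negative self-loop matrix as well.

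Second, I would prove a monotonicity lemma: if $\mathbf{C}$ is a non-negative self-loop matrix and $\mathbf{D}$ is any non-negative matrix of the same dimension, then $(\mathbf{C}\mathbf{D})(i,j) \geq \mathbf{D}(i,j)$ and $(\mathbf{D}\mathbf{C})(i,j) \geq \mathbf{D}(i,j)$ for all $i,j$. Indeed $(\mathbf{C}\mathbf{D})(i,j) = \sum_k \mathbf{C}(i,k)\mathbf{D}(k,j) \geq \mathbf{C}(i,i)\mathbf{D}(i,j) \geq \mathbf{D}(i,j)$, and symmetrically for $\mathbf{D}\mathbf{C}$ by isolating the $k = j$ term in the sum. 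In words: left- or right-multiplying a non-negative matrix by a non-negative self-loop matrix can only increase every entry.

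Finally, to get $\bB(i,j) \geq \bA_k(i,j)$, I would fix $k$ and write $\bB = \mathbf{L} \cdot \bA_k \cdot \mathbf{R}$ where $\mathbf{L} = \bA_1 \cdots \bA_{k-1}$ and $\mathbf{R} = \bA_{k+1} \cdots \bA_{\ell}$ (with the convention that $\mathbf{L} = \bI_d$ when $k=1$ and $\mathbf{R} = \bI_d$ when $k=\ell$); by the first step both $\mathbf{L}$ and $\mathbf{R}$ are non-negative self-loop matrices. Applying the monotonicity lemma with $\mathbf{C} = \mathbf{R}$, $\mathbf{D} = \bA_k$ gives $(\bA_k \mathbf{R})(i,j) \geq \bA_k(i,j)$, and since $\bA_k \mathbf{R}$ is non-negative, applying it again with $\mathbf{C} = \mathbf{L}$, $\mathbf{D} = \bA_k \mathbf{R}$ gives $\bB(i,j) = (\mathbf{L}(\bA_k\mathbf{R}))(i,j) \geq (\bA_k\mathbf{R})(i,j) \geq \bA_k(i,j)$. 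As $k$ was arbitrary, this completes the proof. There is no real obstacle here; the only point requiring a little care is the observation that a non-negative self-loop matrix has all diagonal entries $\geq 1$, which is precisely what powers the monotonicity lemma.
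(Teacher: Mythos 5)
Your proof is correct and is essentially the argument the paper leaves implicit behind its one-line ``follows easily by induction on $\ell$'': the closure fact and the monotonicity observation (both powered by the diagonal entries being integers $\geq 1$) are exactly what that induction needs. No issues.
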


\begin{proof}
	Follows easily by induction on $\ell$.
\end{proof}

\pumpingcounters*

\begin{proof}
We prove the lemma in two steps.

\paragraph*{Step 1: Fire $\pi/2$ from $p(\bu)$. } 

Let $\pi = \alpha_1 t_1, \alpha_2 t_2, \dots, \alpha_\ell t_\ell$.
Let each $t_i = (p_{i-1},\bA_i,\bb_i,p_i)$. By the marking equation we have that
\begin{equation*}
	\bv = \left(\prod_{j=\ell}^1 \bA_j\right) \bu +  \sum_{j=1}^\ell \left(\prod_{k=\ell}^{j+1} \bA_k \right) \alpha_j \bb_j
\end{equation*}	
Let
\begin{equation*}
	\bz = \sum_{j=1}^\ell \left(\prod_{k=\ell}^{j+1} \bA_k \right) \alpha_j \bb_j
\end{equation*}

By Lemma~\ref{lem:basic-facts}~(\ref{item:b}), it follows that
$p(\bu) \act{\pi/2} p(\bu')$ where $\bu' \ge \bv/2 + \bu/2$. 
This completes Step 1.

\paragraph*{Step 2: Choosing $n$ and firing $1/2n$-scaled down versions of $\pi$. }

Let $m = \min\{\bv(x) : \bv(x) \neq 0\}$. Choose $n$ so that $nm/2> K$. 
By assumption, for every counter $x \in S$ there exists some counter $c_x$ such that
$c_x \in \supp{t}^p_x \cap \supp{\bv}$.
We now construct a sequence of configurations $p(\bu_1),p(\bu_2),\dots,p(\bu_{n+1})$
such that each $p(\bu_i)$ satisfies the following property:
For every counter $x$,
\begin{itemize}
	\item Property 1: If $x \notin S$, then $\bu_i(x) \ge \frac{\bv(x)}{2} + \frac{(n-i+1)\bu(x)}{2n} + \frac{(i-1)\bv(x)}{2n}$
	\item Property 2: If $x \in S$, $\bu_i(x) \ge \frac{\bv(x)}{2} + \frac{(n-i+1)\bu(x)}{2n} + \frac{(i-1)\bv(x)}{2n} + \frac{(i-1)\bv(c_x)}{2}$
\end{itemize}

Notice that if we let $p(\bu_1) = p(\bu')$, then $p(\bu_1)$ satisfies the above property.
Now, assume that $p(\bu_i)$ already satisfies this property. Hence, $\bu_i \ge \bu/2n$. By
Lemma~\ref{lem:basic-facts}~(\ref{item:a} and~\ref{item:b}) it then follows that 
we can fire $\pi/2n$ from $\bu_i$. Let $p(\bu_i) \act{\pi/2n} p(\bu_{i+1})$.
By the marking equation, we have that
\begin{equation}\label{eq:mark}
	\bu_{i+1} = \left(\prod_{j=\ell}^1 \bA_j\right) \bu_i +  \frac{1}{2n} \sum_{j=1}^\ell \left(\prod_{k=\ell}^{j+1} \bA_k \right) \alpha_j \bb_j = \left(\prod_{j=\ell}^1 \bA_j\right) \bu_i +  \frac{\bz}{2n}
\end{equation}

Let $\bB = \left(\prod_{j=\ell}^1 \bA_j\right)$.
By the induction hypothesis, we have that $\bu_i = \frac{\bv}{2} + \frac{(n-i+1)}{2n} + \frac{(i-1)\bv}{2n} + \mathbf{\bu'_i}$ for some non-negative vector $\mathbf{\bu'_i}$
such that $\bu'_i(x) \ge \frac{(i-1)\bv(c_x)}{2}$ for every $x \in S$.
Hence, 
\begin{equation*}
	\bB \bu_i = \bB \left(\frac{\bv}{2} + \frac{(n-i+1)\bu}{2n} + \frac{(i-1)\bv}{2n} + \mathbf{\bu'_i}\right)
\end{equation*}

By the Fundamental property of non-negative self-loop matrices
$\bB$
is also a non-negative self-loop matrix and hence all of its diagonal entries are at least 1.
Hence,
\begin{multline*}
	\bB  \left(\frac{\bv}{2} + \frac{(n-i+1)\bu}{2n} + \frac{(i-1)\bv}{2n} + \mathbf{\bu'_i}\right) \\ \ge \bB  \frac{\bv}{2} +
	\bB \frac{\bu}{2n} + \frac{(n-i)\bu}{2n} + 
	\frac{(i-1)\bv}{2n} + \mathbf{\bu'_i}
\end{multline*}

Combining the last three equations gives us,
\begin{multline*}
	\bu_{i+1}  \ge \bB \frac{\bv}{2} +
	\bB \frac{\bu}{2n} + \frac{(n-i)\bu}{2n} + 
	\frac{(i-1)\bv}{2n} + \mathbf{\bu'_i} + \frac{\bz}{2n}
\end{multline*}

By equation~\ref{eq:mark}, $\bB \frac{\bu}{2n} + \frac{\bz}{2n} = \frac{\bv}{2n}$ and so
\begin{equation*}
	\bu_{i+1} \ge \bB \frac{\bv}{2} + \frac{(n-i)\bu}{2n} + \frac{i\bv}{2n} + \mathbf{\bu'_i}
\end{equation*}

If $x \notin S$, then since $\bB$ is a self-loop matrix, we have that $\left(\bB \frac{\bv}{2}\right)(x) \ge \frac{\bv(x)}{2}$ and so Property 1 for $\bu_{i+1}$ is satisfied.
On the other hand, suppose $x \in S$. Then $x$ is pumped in the run $\rho$
and so this means that for some $i$, $\bA_{i}(x,c_x) > 0$ if $c_x \neq x$ and $\bA_{i}(x,c_x) > 1$
if $c_x = x$. By the Fundamental property of self-loop matrices, it follows
that $\bB(x,c_x) \ge \bA_i(x,c_x)$ and so it follows that $\left(\bB \frac{\bv}{2}\right)(x) \ge  
\frac{\bv(x)}{2}  + \frac{\bv(c_x)}{2}$. Since $\bu'_i(x) \ge \frac{(i-1)\bv(c_x)}{2}$ for every $x \in S$,
it follows that Property 2 is also satisfied for $\bu_{i+1}$.

Now, by assumption on $\bv$, for every $x \in S$, $\bv(c_x) > 0$. By definition 
of $m$ this means that $\bv(c_x) \ge m$.
Hence, when Property 1 and 2 are applied to $p(\bu_{n+1})$, we have
\begin{itemize}
	\item If $x \notin S$, then $\bu_{n+1}(x) \ge \bv(x)$
	\item If $x \in S$, then $\bu_{n+1}(x) \ge \bv(x) + \frac{n\bv(c_x)}{2} \ge nm/2 > K $
\end{itemize}
Hence, if we set $\bw = \bu_{n+1}$, then the lemma becomes true.
\end{proof}

\subsection{Proof of the Coverability Characterization lemma (Lemma~\ref{lem:cov-charac})}\label{subsec:appendix-cov-charac}

\covcharac*

\begin{proof}
	Suppose $p(\bu)$ can cover $p(\bv)$ in $\mach$.
	Hence, $p(\bu) \act{\pi} p(\bw)$ for some $\bw \ge \bv$.
	Let $X$ be the 
	set of counters not pumped by this run. Clearly we can set $\pi_{\textsf{fwd}} = \pi,
	\pi' = \pi$ and conditions 2, 3 and 4 are all satisfied.
	To prove condition 1, let the run $p(\bu) \act{\pi} p(\bw)$. 
	be $p_0(\bu_0) = p(\bu) \act{\alpha_1 t_1} p_1(\bu_1) \dots
	\act{\alpha_\ell t_\ell} p_\ell(\bu_\ell) = p(\bw)$
	where each $t_i = (p_{i-1},\bA_i,\bb_i,p_i)$.
	If $x \in X$, then $x$ is not pumped in this run and so for every $i$, 
	$\bu_i(x) = \bu_{i-1}(x) + \alpha_i \bb_i$.
	It then follows that $p(\bu_X) \act{\pi_X} p(\bw_X)$ and so condition 1 is satisfied.
	

	
	Now, suppose there exists a set of counters $X$ and firing sequences $\pi', \pi_{\textsf{fwd}}$ over $\mach$ which satisfies all the four conditions. Let $\rho$ be the $p$-admissible run 
	obtained from $p(\bu)$ by firing $\pi_{\textsf{fwd}}$.
	Condition 3 tells us that if $x \notin X$, then $x$ is pumped by $\rho$.
	Without loss of generality, we can assume that if $x \in X$, then $x$ is not pumped
	by $\rho$. Indeed, if there is a set of counters $S \subseteq X$,
	that is pumped by $\rho$, then we set $X' = X \setminus S$.
	Since condition 1 guarantees us that $p(\bu_X) \act{\pi_X'} p(\bw_X)$ is a run in $\mach_X$,
	we have a run $p(\bu_{X'}) \act{\pi'_{X'}} p(\bw_{X'})$ in $\mach_{X'}$. 
	This new $X'$ along with $\pi'$ and $\pi_{\textsf{fwd}}$ still
	satisfies conditions 1 - 4. Hence, for the rest of the proof, 
	we will assume that a counter is pumped by $\rho$
	iff it is not in $X$. 
	
	As mentioned before, let $\rho$ be the run obtained from $p(\bu)$ by firing $\pitfwd$.
	We will now prove the required claim in multiple steps.
	
	\paragraph*{Step 1: Constructing an alternate run in $\mach_X$. }
	In this step, we will concentrate on $\mach_X$. We will modify the run $p(\bu_X) \act{\pi_X'} p(\bw_X)$
	in $\mach_X$ into a different run with the same end-points which we will then use in the next step.
	
	Let us look at the firing sequences $\pi_{\text{fwd}_X} = \alpha_1 t_{1_X}, \dots, \alpha_\ell {t_\ell}_X$
	and $\rephalf\left(\pi_{\text{fwd}_X}\right) = \alpha_1/2 t_{1_X}, \dots, \alpha_\ell/2^\ell t_{\ell_X}$ in $\mach_X$.
	Because $\rho$ did not pump any counters in $X$,
	it follows that we can fire $\pi_{\text{fwd}_X}$ from $p(\bu_X)$ in $\mach_X$. Proposition~\ref{prop:support-prop}
	then tells us that we can fire $\rephalf\left(\pi_{\text{fwd}_X}\right)$ from $p(\bu_X)$ to reach a configuration $p(\bz'')$
	such that for every transition $t_X \in \supp{\pi_{\text{fwd}_X}} = \supp{\rephalf\left(\pi_{\text{fwd}_X}\right)}$ and every counter $x$,
	if $x$ is additively incremented or decremented by $t_X$, then $\bz''(x) > 0$. 
	
	The proof of~\cite[Proposition 4.5]{blondinLogicsContinuousReachability2017} shows how to use this vector $\bz''$ 
	along with
	condition 1 to show that we can pick a small enough fraction $\lambda$
	to get a run of the following form in the continuous VASS $\mach_X$.
	\begin{equation}\label{eq:run-in-restricted}
		p(\bu_X) \act{\lambda \rephalf\left(\pi_{\text{fwd}_X}\right)} p(\bz') \act{\eta_X} p(\bw_X)	
	\end{equation}
	for some firing sequence $\eta_X = \beta_1 r_{1_X}, \dots, \beta_k r_{k_X}$ in $\mach_X$
	such that $\supp{\eta_X} = \supp{\pi_X} = \supp{\pi_{\text{fwd}_X}}$.
	
	Note that this induces a firing sequence $\eta = \beta_1 r_1, \dots, \beta_k r_k$
	in $\mach$ where each $r_i = (p_{i-1},\bA_i,\bb_i,p_i)$ and $\supp{\eta} = \supp{\pi} = \supp{\pitfwd}$.
	Define
	\begin{equation}~\label{eq:epsilon-again}
		\epsilon = \max\left\{\left(\sum_{j=1}^i \left(\prod_{e=i}^{j+1} \bA_e \right) \alpha_j |\bb_j|\right)(x) : 1 \le i \le \ell', 1 \le x \le d \right\}	
	\end{equation}
	Intuitively $\epsilon$ is the maximum amount that any step of $\eta$ can decrement on any counter.
	This completes Step 1. In the next steps,
	we will show how to use $\eta$ to construct what we want.
	
	\paragraph*{Step 2: Preparing to fire $\eta$ in $\mach$. }
	In this step, we will concentrate on $\mach$. We will figure out a way to reach
	a configuration from $p(\bu)$ in $\mach$, from which we can fire $\eta$.
	
	Condition 3 tells us that $\pi_{\textsf{fwd}} = \alpha_1 t_1, \dots, \alpha_\ell t_\ell$ can be fired from $p(\bu)$.
	By Proposition~\ref{prop:support-prop}, it follows that there 
	is a firing sequence $\rephalf(\pitfwd) = \alpha_1/2 t_1, \dots, \alpha_\ell/2^\ell t_\ell$ 
	that can be fired from $p(\bu)$
	to reach some $p(\bu'')$ such that if a counter $x$
	is pumped by some transition $t$ in $\rho$, then $\supp{t}^p_x \cap \supp{\bu''} \neq \emptyset$.
	By Lemma~\ref{lem:basic-facts}~(\ref{item:c}), we then have that $p(\bu) \act{\lambda \rephalf(\pitfwd)} p(\bu')$ where
	$\bu' \ge \lambda \bu'' + (1-\lambda) \bu$ and so $\supp{\bu''} \subseteq \supp{\bu'}$. 
	Hence, if a counter $x$
	is pumped by some transition $t$ in $\rho$, then $\supp{t}^p_x \cap \supp{\bu'} \neq \emptyset$. 
	
	We now claim that $\bu'_X \ge \bz'$. Indeed, by definitions of $\bu'$ and $\bz'$ we have
	$$p(\bu) \act{\lambda \rephalf(\pitfwd)} p(\bu') \text{ is a run in } \mach$$
	and
	$$p(\bu_X) \act{\lambda \rephalf\left(\pi_{\text{fwd}_X}\right)} p(\bz') \text{ is a run in } \mach_X$$
	
	Now, the first run is of the form $p(\bu) = q_0(\bu_0) \act{\lambda \alpha_1/2 t_1} q_1(\bu_1) \dots$
	$\act{\lambda \alpha_\ell/2^\ell t_\ell} q_\ell(\bu_\ell) = p(\bu')$ and 
	the second run is of the form $p(\bu_X) = q_0({\bz'_0}) \act{\lambda \alpha_1/2 t_{1_X}} q_1(\bz'_1) \dots
	\act{\lambda \alpha_\ell/2^\ell t_{\ell_X}} q_\ell(\bz'_\ell) = p(\bz')$. 
	By induction on $i$, it is easy to verify that $\bu_{i_X} \ge {\bz'_i}$. Hence, $\bu'_X \ge \bz'$.
	This completes Step 2.
	
	\paragraph*{Step 3: Firing $\eta$ to cover $p(\bw)$. }
	In the third step, we focus on firing $\eta$. This will ultimately allow us to cover $p(\bw)$
	and since $\bw \ge \bv$, this will cover $p(\bv)$.
	
	From the previous step, we know that $p(\bu) \act{\lambda \rephalf(\pitfwd)} p(\bu')$
	such that whenever a counter $x \notin X$ then there is a transition $t \in \supp{\pi} = \supp{\pitfwd} = \supp{\lambda\rephalf(\pitfwd)}$ such that $\supp{t}^p_x \cap \supp{\bu'} \neq \emptyset$. 
	Applying the Pumping Counters lemma, we now get 
	a run $p(\bu) \act{*} p(\mathbf{s})$ such that $\mathbf{s} \ge \bu'$
	and for every counter $x \notin X$, $\mathbf{s}(x) > \ell' \epsilon + \bw(x)$.

	Now, let $p(\bz') = p_0(\bz_0) \act{\beta_1 r_{1_X}} p_1(\bz_1) \act{\beta_2 r_{2_X}} \dots \act{\beta_{k} r_{k_X}} p(\bz_k) = p(\bw_X)$ be the run obtained from firing $\eta_X$ in $\mach_X$ from Equation~\ref{eq:run-in-restricted}. 
	By induction on $i$, we will now show that
	$p(\bs) = p_0(\bs_0) \act{\beta_1 r_1} p_1(\bs_1) \dots \act{\beta_i r_i} p_i(\bs_i)$
	is a run in $\mach$
	where $\bs_i(x) > (\ell' - i) \epsilon + \bw(x)$ if $x \notin X$
	and $\bs_i(x) \ge \bz_i(x)$ if $x \in X$. Indeed, the base case of $i = 0$ follows from the definition of $\bs$ and
	the fact that $\bs_X \ge \bu'_X \ge \bz'$.
	For the induction step, notice that by definition of $\epsilon$, each $r_i$ can decrement any counter by at most $\epsilon$
	and so $\bs_{i+1}(x) \ge \bs_i(x) - \epsilon > (\ell' - (i+1)) \epsilon + \bw(x)$ if $x \notin X$.
	Further, it is easy to see that $\bs_{i+1}(x) \ge \bz_{i+1}(x)$. Hence, 
	it follows that from $p(\bs)$ we can reach a configuration $p(\bs_k)$ such that for every $x \notin X$,
	we have $\bs_{k}(x) > \bw(x)$ and for every $x \in X$ $\bs_{k}(x) \ge \bz_{k}(x) = \bw(x)$. 
	Hence, we have a run from $p(\bz')$ covering $p(\bw)$ which in turn covers $p(\bv)$.
	Hence, $p(\bz')$ can cover $p(\bv)$ and since $p(\bu)$ can reach $p(\bz')$, $p(\bu)$ can also
	cover $p(\bv)$ which finishes the proof of the lemma.	
\end{proof}

\subsection{Proof of Lemma~\ref{lem:far}}\label{subsec:appendix-far}

Before we prove Lemma~\ref{lem:far}, we first prove another lemma.

\begin{lemma}\label{lem:sufficient}
	Let $p(\bu)$ be a configuration, $p$ be some state, $Y$ be some set of counters 
	and let $\rho = t_1,t_2,\dots,t_\ell$ be some
	sequence of transitions. If $\rho$ is a valid path in $\gr_\mach$ from $p$ to $p$
	such that 
	\begin{itemize}
		\item For every counter $y$ and every transition $t_i$, if $\supp{t_i}^-_y$ is non-empty,
		then $$\supp{t_i}^-_y \bigcap \left(\supp{\bu} \cup \bigcup\limits_{1 \le j < i} \supp{t_j}^+\right) \neq \emptyset$$
		\item For every counter $y \in Y$, there exists some transition $t_i$ such that 
		$$\supp{t}^p_y \bigcap \left(\supp{\bu} \cup \bigcup\limits_{1 \le j \le \ell} \supp{t_j}^+\right) \neq \emptyset$$
	\end{itemize}
	then there exists a $p$-admissible sequence $\pi$ from $p(\bu)$ such that
	the resulting run pumps all counters in $Y$.
\end{lemma}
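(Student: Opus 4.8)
The goal is to turn a sequence of transitions $\rho = t_1,\dots,t_\ell$ satisfying the two combinatorial conditions into an actual $p$-admissible firing sequence $\pi$ (i.e.\ equipped with fractions) whose induced run pumps every counter in $Y$. The idea is to process $\rho$ from left to right, appending each $t_i$ with a carefully chosen (small) fraction, and to maintain the invariant that \emph{every counter in $\supp{\bu} \cup \bigcup_{1\le j<i}\supp{t_j}^+$ has strictly positive value in the current configuration}. The first hypothesis of the lemma is exactly what makes the next transition $t_i$ fireable: whenever $t_i$ additively decrements a counter $y$, there is some $c \in \supp{t_i}^-_y$ lying in $\supp{\bu}\cup\bigcup_{j<i}\supp{t_j}^+$, hence $c$ has positive value by the invariant, hence (since matrices are non-negative self-loop matrices) $(\bA_i\bu')(y) > 0$ where $\bu'$ is the current valuation, so a sufficiently small fraction $\alpha_i$ keeps $\bA_i\bu' + \alpha_i\bb_i$ non-negative. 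Choosing $\alpha_i$ small also does no harm to the invariant, since the matrix is self-loop (diagonal $\ge 1$), so any counter that was positive stays positive, and any counter in $\supp{t_i}^+$ becomes positive.

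**Carrying it out.** First I would fix notation: write $t_i = (p_{i-1},\bA_i,\bb_i,p_i)$ with $p_0 = p_\ell = p$ (this is where "valid path in $\gr_\mach$ from $p$ to $p$" is used). Then I would define $\pi = \alpha_1 t_1,\dots,\alpha_\ell t_\ell$ inductively, proving by induction on $i$ the statement: "$p(\bu) \act{\alpha_1 t_1,\dots,\alpha_i t_i} p_i(\bu_i)$ is a valid run, and $\bu_i(x) > 0$ for every $x \in \supp{\bu}\cup\bigcup_{1\le j\le i}\supp{t_j}^+$." The base case $i=0$ is immediate since $\bu_0 = \bu$. For the induction step, use the first hypothesis together with non-negativity of $\bA_{i+1}$ to see $(\bA_{i+1}\bu_i)(y) > 0$ for every $y$ additively decremented by $t_{i+1}$; pick $\alpha_{i+1} \in (0,1]$ small enough that $\alpha_{i+1}|\bb_{i+1}(y)| < (\bA_{i+1}\bu_i)(y)$ for all such $y$ (finitely many constraints, each with a strictly positive right-hand side, so such $\alpha_{i+1}$ exists); this makes $\bu_{i+1} := \bA_{i+1}\bu_i + \alpha_{i+1}\bb_{i+1} \ge \bzero$, so the step is valid. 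For the invariant: if $x \in \supp{\bu}\cup\bigcup_{j\le i}\supp{t_j}^+$ then $\bu_i(x) > 0$, and since $\bA_{i+1}(x,x)\ge 1$ (self-loop) we get $(\bA_{i+1}\bu_i)(x) \ge \bu_i(x) > 0$; if moreover $x$ is not additively decremented by $t_{i+1}$ then $\bu_{i+1}(x) \ge (\bA_{i+1}\bu_i)(x) > 0$, and if it is additively decremented, the choice of $\alpha_{i+1}$ guarantees $\bu_{i+1}(x) > 0$ anyway; finally if $x \in \supp{t_{i+1}}^+$ then $\alpha_{i+1}\bb_{i+1}(x) > 0$ and $(\bA_{i+1}\bu_i)(x)\ge 0$, so $\bu_{i+1}(x) > 0$. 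This closes the induction, so $\pi$ is $p$-admissible from $p(\bu)$.

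**The pumping part.** It remains to check the induced run pumps every $y \in Y$. Fix $y \in Y$. By the second hypothesis there is a transition $t_i$ and a counter $c \in \supp{t_i}^p_y \cap (\supp{\bu}\cup\bigcup_{1\le j\le\ell}\supp{t_j}^+)$. If $c \in \supp{\bu}$ or $c \in \supp{t_j}^+$ for some $j < i$, then by the invariant $\bu_{i-1}(c) > 0$, so the step $p_{i-1}(\bu_{i-1})\act{\alpha_i t_i} p_i(\bu_i)$ witnesses that $y$ is pumped by $t_i$ using $c$ (by definition of $\supp{t_i}^p_y$ and pumping). The only loose end is the case $c \in \supp{t_j}^+$ for some $j \ge i$, where $c$ is guaranteed positive only \emph{after} step $j$, not before step $i$. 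I expect this to be the main obstacle and I would handle it by first appending a full copy of the path $\rho$ (again with suitably small fractions, which is legitimate since $\rho$ is a cycle at $p$) before doing the "real" copy: after the first traversal, every counter in $\bigcup_j \supp{t_j}^+$ has become positive and the invariant now holds with $\supp{\bu}$ replaced by this larger support for the entire second traversal, so every $c$ arising in the second hypothesis is positive at the relevant moment in the second copy. Concretely, set $\pi = \pi_1\,\pi_2$ where $\pi_1,\pi_2$ are two fraction-equipped copies of $\rho$ built by the inductive construction above (the second starting from the configuration reached after $\pi_1$, whose support contains $\supp{\bu}\cup\bigcup_j\supp{t_j}^+$), and argue that the second copy pumps all of $Y$. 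Since $p$ is both the start and end state of $\rho$, $\pi$ is $p$-admissible from $p(\bu)$, which completes the proof.
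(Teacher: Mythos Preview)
Your proof is correct and follows essentially the same two-pass strategy as the paper: a first traversal of $\rho$ with adaptively chosen small fractions to ensure that every counter in $\supp{\bu}\cup\bigcup_j\supp{t_j}^+$ ends up positive, followed by a second traversal in which the pumping actually happens. You correctly identified the ``loose end'' (the witness $c$ for $y\in Y$ may only become positive at some $t_j$ with $j\ge i$) and resolved it exactly as the paper does, by running the cycle twice.

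The only cosmetic difference is in the second pass. The paper, after reaching $p(\bw)$ with full support, switches to a \emph{uniform} fraction $\epsilon$ chosen small against $\ell$ and the maximum decrement, and carries a quantitative invariant $\bw_i(x) > (\ell-i)\epsilon m$ for $x\in\supp{\bw}$; you instead simply re-run the same adaptive inductive construction from the new starting configuration. Both arguments work, and yours is arguably tidier since it reuses one mechanism rather than introducing a second one.
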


\begin{proof}
	Since $\rho$ is a valid path in $\gr_\mach$, it follows that
	we can let each $t_i = (p_{i-1},\bA_i,\bb_i,p_i)$ where $p_0 = p_\ell = p$.
	By induction on $i$, we will now show that there is a run of the form
	$$p_0(\bu_0) = p(\bu) \act{\alpha_1 t_1, \dots, \alpha_i t_i} p_i(\bu_i)$$
	for some fractions $\alpha_1,\dots,\alpha_i$ and some non-negative vector $\bu_i$
	such that $\supp{\bu} \subseteq \supp{\bu_i}$ and for every counter $x$ and every transition $t_j$ with $j \le i$,
	\begin{itemize}
		\item If $x$ is additively incremented by $t_j$, then $\bu_i(x) > 0$.
		\item If $x$ is additively decremented by $t_j$, then $\supp{t_j}^-_x \cap \supp{\bu_i} \neq \emptyset$.
	\end{itemize}
	The base case of $i = 0$ is clear.
	
	For the induction step, assume that we have a run 
	$p(\bu) \act{\alpha_1 t_1, \dots, \alpha_i t_i} p_i(\bu_i')$
	of the desired form.
	Now for each counter $x$ that is additively decremented by $t_{i+1}$, by assumption, 
	there is some counter $c_x \in \supp{t_{i+1}}^-_x$ such that either $c_x \in \supp{\bu}$
	or $c_x$ is additively incremented by some $t_j$ with $j \le i+1$. In either case
	this implies that $\bu_i(c_x) > 0$. Choose $\beta_i$ to be so small
	that $0 < \beta_i < -\bu_i(c_x)/\bb_{i+1}(x)$ for any $x$ that is additively decremented
	by $t_{i+1}$. It is then easy to see that from $\bu_i$ we can fire $\beta_i t_i$ 
	to arrive at a configuration of the desired form $p_{i+1}(\bu_{i+1})$,
	which then completes the induction.

	Now, we have shown that there is a $p$-admissible path from $p(\bu)$ to some $p(\bw)$
	such that $\supp{\bu} \subseteq \supp{\bw}$ and for every counter $x$ and every transition $t_j$ with $j \le \ell$,
	\begin{itemize}
		\item If $x$ is additively incremented by $t_j$, then $\bw(x) > 0$.
		\item If $x$ is additively decremented by $t_j$, then $\supp{t_j}^-_x \cap \supp{\bw} \neq \emptyset$.
	\end{itemize}
	Let $\epsilon$ be a very small fraction so that $\epsilon < -\bw(x)/(\bb_i(y) \cdot \ell)$ for any
	$x$ such that $\bw(x) > 0$ and any $y, i$ such that $\bb_i(y) < 0$. Let $m = \max \{-\bb_i(y) : \bb_i(y) < 0\}$.
	We now show that from $p(\bw)$ we can fire $\epsilon t_1, \epsilon t_2, \dots, \epsilon t_\ell$
	such that the resulting run pumps all the counters in $Y$.
	
	Indeed, by induction on $i$, we will show that there is a run of the form
	$$p_0(\bw_0) = p(\bw) \act{\epsilon t_1, \dots, \epsilon t_i} p_i(\bw_i)$$
	for some non-negative vector $\bw_i$ such that $\supp{\bw} \subseteq \supp{\bw_i}$ and $\bw_i(x) > (\ell-i) \epsilon m$
	for every $x \in \supp{\bw}$ and for every $y \in S$, if $\supp{t_i}^p_y$ is non-empty, then $y$ 
	was pumped by $t_i$.
	The base case of $i = 0$ is clear.
	
	For the induction step, assume that we have already constructed such a run till $p_i(\bw_i)$.
	Consider $t_{i+1}$. If $t_{i+1}$ additively decrements some counter $x$, then
	by assumption on $\bw$, it follows that $\supp{t_{i+1}}^-_x \cap \supp{\bw} \neq \emptyset$.
	Hence, there is some counter $c_x$ such that $c_x \in \supp{t_{i+1}}^-_x \cap \supp{\bw}$.
	Since $\supp{\bw} \subseteq \supp{\bw_i}$, it follows that $\bw_i(c_x) > (\ell-i) \epsilon m$.
	Notice that $t_{i+1}$ can additively decrement some counter by at most $m$
	and so we can now fire $\epsilon t_{i+1}$ and all counters that were not additively decremented
	would be at least as large as their value in $\bw_i$ and all counters
	that were additively decremented would be strictly larger than $(\ell-i-1) \epsilon m$.
	It then follows that $\supp{\bw} \subseteq \supp{\bw_i} \subseteq \supp{\bw_{i+1}}$
	and $\bw_{i+1}(x) > (\ell-i-1) \epsilon m$ for every $x \in \supp{\bw}$.
	The only thing that we have to show is that if $y \in S$ is such that $\supp{t_{i+1}}^p_y$
	is non-empty, then $y$ was pumped by $t_{i+1}$.
	
	By assumption, if $y \in S$ and $x \in \supp{t_{i+1}}^p_y$ then
	$x$ is either in $\supp{\bu}$ or $x$ is additively incremented by some $t_j$.
	By definition of $\bw$, it follows that $x$ is in $\supp{\bw} \subseteq \supp{\bw_i}$.
	Hence, when firing $\epsilon t_{i+1}$ from $\bw_i$, we must have pumped counter $y$.
	Hence, the induction step is complete and the proof is done.
\end{proof}

\far*

\begin{proof}
	Suppose there is a firing sequence $\pi$ such that $\supp{\pi} = S$ and firing $\pi$ from $p(\bu)$ results
	in a $p$-admissible run which pumps all the counters in $Y$.
	Let $t_1, t_2, \dots, t_{|S|}$ be such that $t_1$ is the first distinct transition
	that appears in $\pi$, $t_2$ is the second distinct transition that appears in $\pi$ 
	and so on.
	Let $f(t_i) = i$. It is easy to see that $f$ satisfies the first three conditions.
	Furthermore, note that if some counter $x$ is additively decremented by some $t_j$,
	then there must be $y \in \supp{t_j}^-_x$ that either had a non-zero value at $\bu$
	or that must have been additively incremented by some transition in the set $\{t_1,\dots,t_{j-1}\}$.
	Indeed, if this is not the case, the value of all counters in $\supp{t_j}^-_x$ will be 
	zero when $t_j$ is fired, which will result in $x$ having a negative value.
	Hence condition 4 is also satisfied. Similarly we can show that condition 5 is also satisfied.
	
	Suppose there exists an injection $f$ that satisfies all the conditions mentioned above.
	Let $t_1$ be the transition which has the least $f$-value among all transitions in $S$,
	i.e., $f(t_1) \le f(t)$ for every $t \in S$ and let $t_2$ be the transition
	which has the second least $f$-value among all transitions in $S$ and so on.
	From the conditions, it follows that there is a path $r_1,r_2,\dots,r_\ell$ 
	in $\gr_\mach$
	from $p$ to $p$ using only the transitions from $S$ such that
	\begin{itemize}
		\item For every counter $y$ and every transition $r_i$, if $\supp{r_i}^-_y$ is non-empty,
		then $$\supp{r_i}^-_y \bigcap \left(\supp{\bu} \cup \bigcup\limits_{1 \le j < i} \supp{r_j}^+\right) \neq \emptyset$$
		\item For every counter $y \in Y$, there exists some transition $r_i$ such that 
		$$\supp{r_i}^p_y \bigcap \left(\supp{\bu} \cup \bigcup\limits_{1 \le j \le \ell} \supp{r_j}^+\right) \neq \emptyset$$
	\end{itemize}
	
	Applying Lemma~\ref{lem:sufficient} then finishes the proof.
\end{proof}

\section{Proofs of Subsection~\ref{subsec:dec-reach-perm}} \label{sec:appendix-dec-reach-perm}

We show that the reachability problem for $\cC$-continuous VASS is in \NEXP \ if 
$\cC$ contains only permutation matrices. We do this by giving an exponential-time reduction to the 
reachability problem for continuous VASS which is in \NP~\cite[Corollary 4.10]{blondinLogicsContinuousReachability2017}.

To this end, let $\cC$ be a non-negative class which contains only permutation matrices.
Let $\mach = (Q,T)$ be a $d$-dimensional $\cC$-continuous VASS for some $d$.
Since all matrices in $\cC$ are non-negative, 
without loss of generality, we can assume that if $(p,\bA,\bb,q) \in T$ then
either $\bb = \bzero$ or $\bA = \bI$. Indeed, if there is some transition $(p,\bA,\bb,q)$ which
does not satisfy this assumption, then we add a new state $q'$ and replace this transition
with $(p,\bA,\bzero,q')$ and $(q',\bI,\bb,q)$. It can be easily verified that this new state 
preserves the reachability, coverability and state-reachability relation of the given affine VASS.

With this assumption, we now construct a $d$-dimensional continuous VASS $\mach' = (Q',T')$
that simulates $\mach$. Intuitively, the idea is that since each affine operation of $\mach$
is only a renaming of the counters, we can store this renaming in the states of the continuous VASS.
With this intuition, we now define $\mach'$.

The set of states $Q'$ of $\mach'$ will be $\{q_\sigma : q \in Q, \sigma \in \mathcal{S}_d\}$.
Intuitively, each state of $\mach'$ stores a state of $\mach$ and remembers the current renaming
that needs to be done to the counters. For each transition $t \in T$ of $\mach$,
and each permutation $\pi$,
there will be a transition $t_\pi \in T'$ of $\mach'$ defined as follows:

\begin{itemize}
	\item If $t = (p,\bP_{\sigma},\bzero,q)$, then $t_\pi$ is defined
	as $t_\pi = (p_\pi, \bzero, q_{\sigma \circ \pi})$. Intuitively, if we are remembering that
	the current renaming of the counters is $\pi$, then applying $\bP_\sigma$ amounts
	to remembering that the new renaming of the counters is $\sigma \circ \pi$.
	\item If $t = (p,\bI,\bb,q)$, then $t_\pi$ is defined
	as $t_\pi = (p_\pi, \bP_{\pi} \bb, q_\pi)$. Intuitively, if we are remembering that
	the current renaming of the counters is $\pi$, then adding $\bb$ in $\mach$
	amounts to adding $\bP_{\pi} \bb$ in $\mach'$.
\end{itemize}

A very easy induction then shows that 
\begin{quote}
	$p(\bu)$ can reach $q(\bv)$ in $\mach$ if and only if for some permutation $\sigma$, 
	$p_{id}(\bu)$ can reach $q_\sigma(\bP_{\sigma} \bv)$ in $\mach'$ where $id$ is the identity permutation.
\end{quote}

Our algorithm for solving reachability in $\mach$ is then as follows: Construct $\mach'$, guess a permutation and check whether $p_{id}(\bu)$ can reach $q_\sigma(\bP_{\sigma} \bv)$ in $\mach'$. Since $\mach'$ can be constructed in exponential time 
and since reachability in continuous VASS is in \NP, it follows that reachability
of $\mach'$ can be decided in \NEXP.

\section{Proofs of Subsection~\ref{subsec:lower-bound-zero-row-column}}\label{sec:appendix-lower-bound-zero-row-column}

Here we will prove that state-reachability for continuous VASS with resets is \PSPACE-hard.
To prove this, we give a reduction from state-reachability for Boolean programs. Intuitively, a Boolean program
is a finite-state automaton which has access to $d$ Boolean variables (for some $d$), which it can either set to 0 or 1 and which can be tested for its current value. 
Formally, a $d$-Boolean program
is a tuple $\mach = (Q,T)$ where $Q$ is a finite set of control states
and $T \subseteq Q \times \{\text{test}_j(i), \text{set}_j(i) : 1 \le i \le d, j \in \{0,1\}\} \times Q$
is a finite set of transitions. A configuration of $\mach$ is a tuple $p(\bu)$ where $\bu \in \{0,1\}^d$.
Given a transition $t = (p,a,q)$ and two configurations $p(\bu)$ and $q(\bv)$, we
say that there is a step from $p(\bu)$ to $q(\bv)$ by means
of the transition $t$, denoted by $p(\bu) \act{t} q(\bv)$ if the following holds:
\begin{itemize}
	\item If $a = \text{test}_j(i)$, then $\bu(i) = j$ and $\bv = \bu$
	\item If $a = \text{set}_j(i)$, then $\bv(i) = j$ and $\bu(k) = \bv(k)$ for every $k \neq i$.
\end{itemize}

The notion of runs, reachability and state-reachability can then be easily defined
for Boolean programs. It is known that the state-reachability problem for Boolean programs
is PSPACE-hard~\cite{tacas/GodefroidY13}. We now give a reduction from the state-reachability problem for Boolean
programs to the state-reachability problem for continuous VASS with resets.

Suppose $\mach = (Q,T)$ is a $d$-Boolean program. We will construct a $2d$-continuous VASS
with resets $\mach' = (Q',T')$ that will simulate $\mach$. Intuitively, for each Boolean variable
$i$ of $\mach$, we will have two counters $i$ and $d+i$ in $\mach'$. Exactly one of these two counters
will be non-zero at any given point. If counter $i$ is non-zero, then this corresponds to the Boolean
variable $i$ being true in $\mach$ and if counter $d+i$ is non-zero, then this corresponds to $i$ being false in $\mach$. Tests and sets on Boolean variables can be accomplished by appropriate increments/decrements
and resets on counters in $\mach'$.

Recall that for any $1 \le i \le 2d$, the matrix $\mathbf{R}^{(2d,i)}$ is the $2d\times 2d$ diagonal matrix
over $\{0,1\}$ which is 1 everywhere on its diagonal, except in the $i^{th}$ position, where it is a 0.
Intuitively, multiplying by $\mathbf{R}^{(2d,i)}$ corresponds to resetting the value
of the $i^{th}$ counter of $\mach'$ and not changing the values of the other counters.
With this in mind, we now describe the states and transitions of $\mach'$.

Formally, the set of states $Q'$ of $\mach'$ is $Q \cup \{q_t : t \in T\}$.
Corresponding to each transition $t$ of $\mach$, we will have two transitions $t^s,t^e$ in $\mach'$
in the following way:

\begin{itemize}
	\item If $t = (p,\text{test}_j(i),p')$ for some $1 \le i \le d$ and some $j \in \{0,1\}$,
	then $t^s = (p,\bI,-\mathbf{e}_{d j + i},q_t)$ and $t^e = (q_t,\bI,\mathbf{e}_{d j + i},p')$. Intuitively, if we want to test that the value of the $i^{th}$ Boolean variable is $j$,
	then we first decrement the counter $d j + i$ by some value and
	then increment it again. Recall that the Boolean variable $i$ having value $j$ in $\mach$
	corresponds to the counter $d j + i$ having a non-zero value in $\mach'$
	and these transitions check exactly that criterion.
	
	\item If $t = (p,\text{set}_j(i),p')$ for some $1 \le i \le d$ and some $j \in \{0,1\}$,
	then $t^s = (p,\mathbf{R}^{(2d,d (1-j) + i)},\mathbf{e}_{d j + i},q_t)$
	and $t^e = (q_t,\bI,\bzero,p')$. Intuitively, setting the value of the Boolean variable $i$ to $j$
	corresponds to first resetting the value of the counter $d (1-j) + i$ to zero
	and then ensuring that the value of the counter $d j + i$ is non-zero.
\end{itemize}

This completes the description of $\mach'$. A configuration $p(\bu)$ of $\mach'$ is called
good if $p \in Q$ and for every $1 \le i \le d$, exactly one of the counters $i$ and $d+i$ have a non-zero value.
Each good configuration $C$ of $\mach'$ is easily seen to correspond to a unique configuration $\mathbb{B}(C)$ of $\mach$.
It can then be easily seen that if $C \act{\alpha t^s, \beta t^e} C'$ is a run in $\mach'$ from
some good configuration $C$ then $C'$ is also a good configuration and 
moreover this run can happen 
if and only if $\mathbb{B}(C) \act{t} \mathbb{B}(C')$ is a step in $\mach$.
This then allows for a straightforward reduction from the state-reachability problem for $\mach$
to the state-reachability problem for $\mach'$.  

\section{Proofs of Subsection~\ref{subsec:lower-bound-perm}}\label{sec:appendix-lower-bound-perm}

Let $\cC$ contain a non-trivial permutation matrix $\bP_\sigma$ (of dimension $k \times k$ for some $k$). 
Since $\bP_\sigma$ is not the identity matrix,
there exists a sequence of distinct indices $i_1, i_2, \dots, i_\ell$ with $\ell \ge 2$ such that $\sigma$ maps $i_1$ to $i_2$, $i_2$ to $i_3$ and so on and finally maps $i_\ell$ to $i_1$.
Let $z = i_1$. Further, let $n \ge 2$ be the least number such that $\bP_\sigma^n = \bI$. Note that since
$\bP_\sigma$ is a permutation matrix, such a number must exist. 

Now, suppose $\bu$ is a vector in $\qnz^k$ such that $\bu(j) = 0$ for every $j \neq z$. Then, 
\begin{equation}\label{eq:perm-one-app}
	\bP_\sigma \cdot \bu = \bv \text{ where } \bv(j) = 0 \text{ for every } j \neq \sigma(z)
\end{equation}

Similarly, suppose $\bu$ is a vector in $\qnz^k$ such that $\bu(j) = 0$ for every $j \neq \sigma(z)$.
Then,
\begin{equation}\label{eq:perm-two-app}
	\bP_\sigma^{n-1} \cdot \bu = \bv \text{ where } \bv(j) = 0 \text{ for every } j \neq z
\end{equation}

Intuitively, $\bP_\sigma$ transfers the value of $z$ to $\sigma(z)$ and does nothing to the other counters, provided the other counters all have the value 0.
Similarly, $\bP_\sigma^{n-1}$ 
transfers the value of $\sigma(z)$ to $z$ and does nothing to the other counters, provided the 
other counters all have the value 0. We will use these equations to prove our desired hardness results.

\subsection{\PSPACE\  lower bound for state-reachability}

First, we will show that the state-reachability problem for $\cC$-continuous VASS is PSPACE-hard
by giving a reduction from the state-reachability problem for Boolean programs.
To this end, let $\mach = (Q,T)$ be a $d$-Boolean program. We will construct
a $dk$-dimensional $\cC$-continuous VASS which will simulate $\mach$. 
The counters $x_1 = z, x_2 = z+k, \dots, x_d = z+(d-1)k$ will be called the positive primary counters of $\mach'$,
the counters $\overline{x_1} = \sigma(z), \overline{x_2} = \sigma(z) + k, \dots, \overline{x_d} = \sigma(z) + (d-1)k$ will be called 
the negative secondary counters of $\mach'$
and the remaining counters will be called dummy counters, which will always have the value zero.
The intuition here is that
the Boolean variable $i$ of $\mach$ will be simulated by the positive primary counter $x_i$ and the negative primary counter $\overline{x_i}$ in $\mach'$. 
At any point, we will have the following invariant in $\mach'$: 
Exactly one of the counters $x_i$ and $\overline{x_i}$ will have a non-zero value.
If $x_i$ (resp. $\overline{x_i}$) has a non-zero value, then this corresponds to the Boolean variable $i$ being true (resp. false).

Note that with this invariant, tests on the Boolean variable $i$ in $\mach$ can be simulated by appropriate increments/decrements on the primary counters $x_i$ and $\overline{x_i}$. 
On the other hand, 
to simulate set transitions on the Boolean variable $i$ in $\mach'$, we will rely on  equations~\ref{eq:perm-one-app} and~\ref{eq:perm-two-app}. 
The idea is that if we want to set the Boolean variable $i$ to 1 in $\mach$, 
then we have to make the value of $x_i$ to be non-zero and the value of $\overline{x_i}$ to be zero. We first check which of these current values is non-zero, by non-deterministically decrementing either $x_i$ or $\overline{x_i}$. Note that exactly
one of these choices can succeed, thanks to our invariant. 
If we successfully decrement $x_i$, then it turns out that the original value of $x_i$ was already non-zero and so we simply have to
increment $x_i$, in order to fully simulate the set transition on the Boolean variable $i$.
On the other hand, suppose we decrement $\overline{x_i}$. Then this means that originally the
value of $\overline{x_i}$ was non-zero and the value of $x_i$ was zero. We then increment
$\overline{x_i}$ and then transfer the value of $\overline{x_i}$ to $x_i$ by multiplying with
$\apply{dk}{\bP_\sigma^{n-1}}{(i-1)k+1}$. By the definition of the Apply operation and
equation~\ref{eq:perm-two-app}, it follows that the value of $x_i$ is now non-zero, the value
of $\overline{x_i}$ is zero and the value of every dummy counter is 0 and so we have successfully simulated a set transition to the value 1.
Similarly by using equation~\ref{eq:perm-one-app}, we can simulate a set transition to the value 0.

Formally, corresponding to each transition $t$ of $\mach$, we will construct transitions in $\mach'$
as follows:
\begin{itemize}
	\item If $t = (p,\text{test}_j(i),p')$, then we construct two transitions 
	$t^s = (p,\bI_{dk},-\mathbf{e}_{x},q_t)$, 
	$t^e = (q_t,\bI_{dk},\mathbf{e}_{x},p')$ 
	where $x = x_i$ if $j = 1$ and $x = \overline{x_i}$ if $j = 0$.
	Note that this construction is very similar to the one given for continuous VASS with resets.
	We test that the value of the Boolean variable $i$ is $j$ by decrementing and incrementing the 
	value of the appropriate primary counter.
	\item If $t = (p,\text{set}_j(i),p')$, then we construct the gadget in Figure~\ref{fig:perm-gadget}.
	Note that in the figure $\bA^+ = \bI_{dk}$ and $\bA^-= \apply{dk}{\bP_\sigma^{n-1}}{(i-1)k+1}$ if $j = 1$
	and $\bA^- = \bI_{dk}$ and $\bA^+ = \apply{dk}{\bP_\sigma}{(i-1)k+1}$ if $j = 0$. The intuition behind
	this gadget was already discussed in the previous paragraphs.
	\begin{figure}
		\begin{center}
			\tikzstyle{node}=[circle,draw=black,thick,minimum size=12mm,inner sep=0.75mm,font=\normalsize]
			\tikzstyle{edgelabelabove}=[sloped, above, align= center]
			\tikzstyle{edgelabelbelow}=[sloped, below, align= center]
			\begin{tikzpicture}[->,node distance = 1cm,scale=0.8, every node/.style={scale=0.8}]
				\node[node, initial, initial text = \text{}] (q0) {$p$};
				\node[node, below right = of q0] (q1) {$q_t^{0-}$};
				\node[node, above right = of q0] (q2) {$q_t^{1-}$};
				\node[node, right = of q1] (q3) {$q_t^{0+}$};
				\node[node, right = of q2] (q4) {$q_t^{1+}$};
				\node[node, above right = of q3] (q5) {$p'$};
				
				\draw(q0) edge[edgelabelabove] node[above]{$(\bI_{dk},-\mathbf{e_{x_i}})$} (q1);
				\draw(q0) edge[edgelabelabove] node[above]{$(\bI_{dk},-\mathbf{e_{\overline{x_i}}})$} (q2);
				\draw(q1) edge[edgelabelabove] node[above]{$(\bI_{dk},\mathbf{e_{x_i}})$} (q3);
				\draw(q2) edge[edgelabelabove] node[above]{$(\bI_{dk},\mathbf{e_{\overline{x_i}}})$} (q4);
				\draw(q3) edge[edgelabelabove] node[above]{$(\bA^+,\bzero)$} (q5);
				\draw(q4) edge[edgelabelabove] node[above]{$(\bA^-,\bzero)$} (q5);
			\end{tikzpicture}
		\end{center}
		\caption{Gadget corresponding to the transition $t = (p,\text{set}_j(i),p')$. $\bA^+ = \bI$ and $\bA^-= \apply{dk}{\bP^{n-1}_\sigma}{(i-1)k+1}$ if $j = 1$
			and $\bA^- = \bI$ and $\bA^+ = \apply{dk}{\bP_\sigma}{(i-1)k+1}$ if $j = 0$.}
		\label{fig:perm-gadget}
	\end{figure}
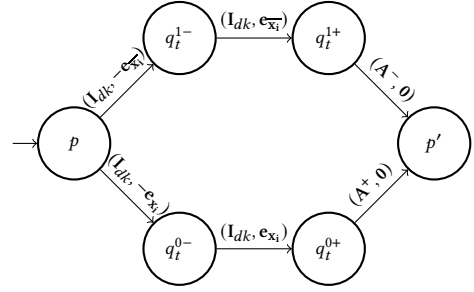 
\end{itemize}

A configuration $p(\bu)$ of $\mach'$ is called good if $p$ is a state of $\mach$ and
for each $i$, exactly one of the counters $x_i$ and $\overline{x_i}$ has a non-zero value
and all the dummy counters have value 0. It is easy to see that every good configuration $C$ of $\mach'$
uniquely corresponds to a configuration $\mathbb{B}(C)$ of $\mach$.
By our construction it can be verified that if $C$ is a good configuration of $\mach'$
and $C \act{\alpha t^s, \beta t^e} C'$ is a run in $\mach'$ where $t$ is a test transition of $\mach$,
then $C'$ is a good configuration and moreover this run can happen if and only if $\mathbb{B}(C) \act{t} \mathbb{B}(C')$ is a step in $\mach$. Further, a very similar claim can also be verified when
$t$ is a set transition, which sets the value of some Boolean variable $i$ to either 0 or 1.
This then allows us to conclude that state-reachability for $\mach$ can be reduced
to state-reachability for $\mach'$, thereby deriving the desired \PSPACE \ lower bound.

\subsection{NEXPTIME lower bound for coverability and reachability}

Now, we move on to proving \NEXP \ lower bounds for the coverability and reachability problems for $\cC$-continuous VASS.
In the previous paragraphs, we have shown that $\cC$-continuous VASS can simulate Boolean programs.
We can enrich the model of Boolean programs by allowing continuous counters to get the model
of continuous Boolean programs where, in addition to a set of Boolean variables we are also
given access to a set of continuous counters. From~\cite{pacmpl/BalasubramanianMTZ24}, it is known that the 
coverability and reachability problems for continuous Boolean programs are \NEXP-hard. 
The reduction given in the previous subsection then allows us to derive a \NEXP \ lower bound for the 
coverability and reachability problems for $\cC$-continuous VASS in a straightforward manner.


\end{document}